\documentclass[conference,compsoc]{IEEEtran}
\usepackage{algorithmic}
\usepackage{booktabs}
\usepackage[table]{xcolor}
\usepackage{amsthm}
\usepackage{mathdots}
\usepackage{amsmath}
\usepackage{amsfonts}
\usepackage{tabularx}
\usepackage{mathtools}
\usepackage{longtable}
\usepackage{balance} 
\usepackage{graphicx}
\usepackage{amsthm}
\usepackage{booktabs,tabularx}
\newcolumntype{Y}{>{\raggedright\arraybackslash}X}
\usepackage{amsmath,amsfonts}
\usepackage{algorithmic}
\usepackage[ruled, vlined, commentsnumbered, linesnumbered]{algorithm2e}
\usepackage{textcomp}
\usepackage{mathabx}
\usepackage{xcolor} 
\definecolor{BrickRed}{rgb}{0.8, 0.25, 0.33}  
\definecolor{RoyalBlue}{rgb}{0.25, 0.41, 0.88}  
\usepackage{url} 

\usepackage[normalsize]{subfigure}
\usepackage{color}

\usepackage{amsthm}
\usepackage{mdframed}
\theoremstyle{plain}

\newmdenv[
  topline=true,
  bottomline=true,
  rightline=true,
  leftline=true,
  linecolor=black,
  linewidth=0.8pt,
  backgroundcolor=white,
  skipabove=10pt,
  skipbelow=10pt
]{boxedtheorem}

\newcommand{\Plug}{\mathrm{Plug}}

\newtheorem{reptheorem}{Theorem}

\newtheorem{theorem}{\bf{Theorem}}
\newtheorem{definition}{\bf{Definition}}

\newtheorem{reproposition}{\bf{Proposition}}
\newtheorem{proposition}{\bf{Proposition}}
\usepackage{amsthm}   
\usepackage{tcolorbox} 
\usepackage{microtype}      
\usepackage{lipsum}

\newcommand{\DEL}[1]{\iffalse #1 \fi}

\newcommand{\rd}{\color{BrickRed}}
\newcommand{\bl}{\color{RoyalBlue}}

\newcommand{\CQ}[1]{{\color{red}{(C.Q.: #1)}}}

\newcommand{\squishlist}{
\begin{list}{$\bullet$}
  { \setlength{\itemsep}{0pt}
     \setlength{\parsep}{0pt}
     \setlength{\topsep}{0pt}
     \setlength{\partopsep}{0pt}
     \setlength{\leftmargin}{0em}
     \setlength{\labelwidth}{0em}
     \setlength{\labelsep}{0.2em} } }

\ifCLASSOPTIONcompsoc
  \usepackage[nocompress]{cite}
\else
  \usepackage{cite}
\fi
\ifCLASSINFOpdf
\else
\fi
\definecolor{SPRevisionBlue}{RGB}{0,70,180}
\newcommand{\SPRevision}[1]{{\color{black}#1}}
\definecolor{SummaryRevisionRed}{RGB}{160,35,50}
 
\definecolor{NeutralRevisionGreen}{RGB}{0,115,65}
\newcommand{\NeutralRevision}[1]{{\color{NeutralRevisionGreen}#1}}

\begin{document}
%
\title{Beyond Monolithic Perturbation: Heterogeneous Mechanism Design for Multi-Attribute Metric Differential Privacy}

\author{
\IEEEauthorblockN{Ruiyao Liu, Michael Oluwole, and Chenxi Qiu}
\IEEEauthorblockA{University of North Texas\\
Denton, USA
}
}


%


\maketitle

\begin{abstract}
Multi-attribute user records are inherently heterogeneous, often combining continuous, categorical, and binary attributes, and they frequently exhibit strong cross-attribute dependencies. Designing high-utility \emph{metric differential privacy (mDP)} mechanisms for such records is challenging. Simple predefined mechanisms, such as distance-based noise, may be poorly aligned with task-specific utility loss, whereas fully optimization-based mechanisms can be computationally prohibitive for \SPRevision{multi-attribute records}.

We propose \emph{\underline{Dep}endency-aware \underline{H}eterogeneous \underline{D}ata \underline{P}erturbation (DepHDP)}, a framework for \emph{multi-attribute mDP} that combines dependency-aware attribute grouping with heterogeneous perturbation design. Rather than applying a one-size-fits-all mechanism, DepHDP selects an appropriate perturbation strategy for each attribute or attribute group, choosing between efficient predefined mechanisms, such as Laplace or Exponential mechanisms, and optimization-based designs. This selection is guided by both domain size and a \emph{predefined-noise adequacy} criterion, which quantifies whether task-induced utility loss can be well explained by perturbation magnitude. To support scalable end-to-end optimization, DepHDP estimates group-level utility loss through sampling and lightweight surrogate modeling, and jointly optimizes privacy-budget allocation and group-wise mechanism design under a global $\ell_p$-metric mDP constraint. Across three case studies, DepHDP improves privacy--utility trade-offs over uniform baselines \SPRevision{at lower computational cost than full-record OPT on evaluated domains.} 
\end{abstract}


%
\IEEEpeerreviewmaketitle

\section{Introduction}
Modern data-driven systems increasingly rely on \emph{multi-attribute} user records that combine heterogeneous data types, such as continuous locations, categorical device types, and binary indicators. A common collection-time protection is \emph{local differential privacy (LDP)}~\cite{Duchi-FOCS2013}, where each user perturbs her record before sharing it with an untrusted server. However, standard $\epsilon$-LDP does not distinguish between nearby and distant records: all distinct inputs must satisfy the same indistinguishability requirement. For \SPRevision{multi-dimensional} or multi-attribute records, this often requires substantial noise, degrading utility for tasks that depend on relative closeness, such as nearby locations or similar user profiles.

\emph{Metric differential privacy (mDP)}~\cite{Andres-CCS2013}, also known as $d_{\mathcal{X}}$-privacy~\cite{Chatzikokolakis-PETS2013} or \emph{Lipschitz privacy}~\cite{koufogiannis2015optimality}, addresses this limitation by calibrating indistinguishability to a distance metric over the input space. Under mDP, closer records must be harder to distinguish than distant records, allowing mechanisms to inject less noise while retaining a formal privacy guarantee.

Existing mDP mechanisms are often distance-driven. For example, Laplace-based mechanisms~\cite{Andres-CCS2013} and Exponential-mechanism variants~\cite{Chatzikokolakis-PoPETs2015} parameterize perturbation probabilities using the distance between the true and perturbed records. These mechanisms are effective when task-induced utility loss is well aligned with distance. However, for heterogeneous multi-attribute records, utility loss can be highly non-radial: perturbations with the same distance may have very different effects depending on which attribute changes and in what direction~\cite{Qiu-TMC2022}. This mismatch motivates mechanisms that adapt perturbation design to attribute type, dependency structure, and task-specific loss. 

To better reduce utility loss, some recent works have also explored optimization-based methods (e.g., \emph{linear-programming (LP)} formulations~\cite{Bordenabe-CCS2014, Liu-CCS2025}) that directly optimize the mechanism for a specified loss matrix. However, such approaches often incur substantial computational overhead and scale poorly to large domains~\cite{ImolaUAI2022}; moreover, when utility is already mostly distance-determined, the improvements over distance-based predefined noise mechanisms can be modest~\cite{koufogiannis2015optimality}. Together, these results highlight a recurring trade-off: \emph{distance-based predefined-noise (DPN) mechanisms are typically efficient and scalable but may sacrifice utility under complex loss landscapes, whereas optimization-based mechanisms can better tailor utility at the cost of substantially higher computation}.

In multi-attribute records, this trade-off becomes harder to navigate: attributes can differ sharply in domain size, geometry, and task relevance, so a single perturbation strategy is rarely appropriate across all dimensions. Fig.~\ref{fig:introduction} illustrates this heterogeneity using the 13 attributes collected from the UCI Heart Disease dataset~\cite{uci_heart_disease_1989} for the downstream task of entropy estimation. We plot each attribute’s \emph{domain size (x-axis)} against the \emph{suitability of DPN} (y-axis), quantified by the \emph{radial-fit score} (defined in Section~\ref{subsec:stage2}): higher scores indicate a more distance-aligned utility-loss landscape, under which DPN is more likely to achieve low utility loss. The figure reveals substantial variation across attributes in both domain size and radial-fit. For example, \textit{maximum heart rate achieved} (\texttt{"thalch"}) and \textit{serum cholesterol} (\texttt{"chol"}) exhibit high radial-fit scores and thus favor DPN, whereas \textit{number of major vessels} (\texttt{"ca"}) also favors DPN primarily because its domain is sufficiently large that optimization becomes computationally expensive. In contrast, attributes such as \textit{sex} and \textit{exercise-induced angina} (\texttt{"exang"}) have small domains and low radial-fit scores, suggesting non-radial loss patterns where optimization can more effectively tailor perturbations to the task-induced utility landscape.

These observations motivate the central problem studied in this paper: \emph{\textbf{heterogeneous, attribute-wise mechanism selection and optimization under a joint mDP constraint}}. To the best of our knowledge, systematic treatments of multi-attribute mDP beyond straightforward extensions of single-attribute designs remain largely unexplored.


\begin{figure}[t]
\centering
\begin{minipage}{0.44\textwidth}
\centering
\includegraphics[width=\textwidth]{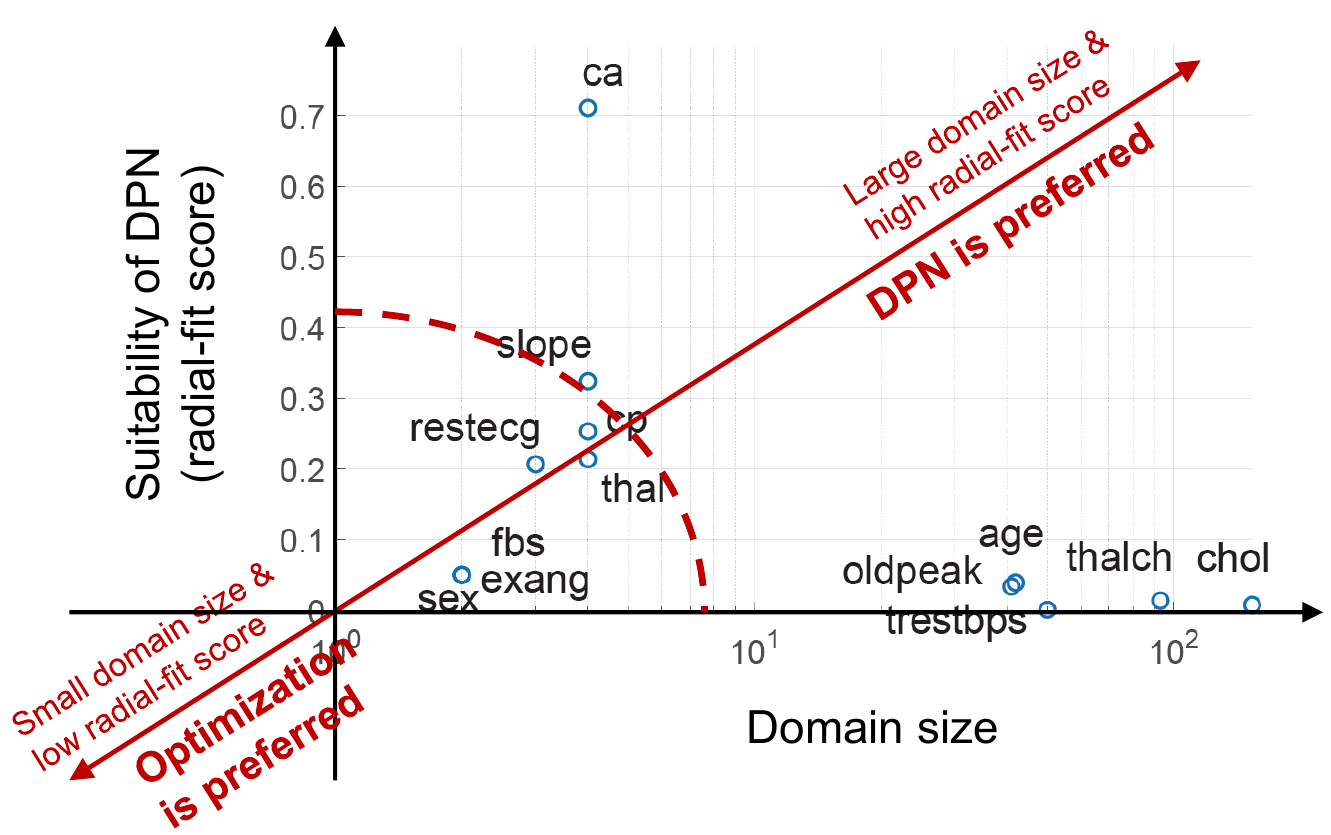}
\end{minipage}
\caption{Attribute-wise preferences for perturbation mechanisms in heart disease prediction~\cite{uci_heart_disease_1989}. Attributes: age (age), sex (sex), cp (chest pain type), trestbps (resting blood pressure), chol (serum cholesterol), fbs (fasting blood sugar $>120$ mg/dL), restecg (resting electrocardiographic results), thalch (maximum heart rate achieved), exang (exercise-induced angina), oldpeak (exercise-induced ST depression relative to rest), slope (slope of the peak exercise ST segment), and ca (number of major vessels colored by fluoroscopy).}
\label{fig:introduction}
\end{figure}


\smallskip
\noindent \textbf{Related works: Multi-attribute perturbation study in LDP.} Notably, privacy-preserving multi-attribute data collection has also been extensively studied under \emph{LDP}, but these designs do not transfer directly to the mDP setting. Existing multi-attribute LDP strategies largely follow two (often) suboptimal directions: \emph{privacy-budget splitting (SPL)}, which partitions a total budget across attributes~\cite{xu2020collecting}, and \emph{random sampling (SMP)}, which perturbs only a subset of attributes per reporting round~\cite{Arcolezi-VLDB2023}. Hybrid variants, e.g., random sampling (RS) combined with fake data (FD), further attempt to mask unreported attributes~\cite{arcolezi2021random}. These approaches are tailored to LDP, where indistinguishability is uniform across input pairs. In contrast, multi-attribute mDP couples privacy constraints and utility losses through a \emph{metric} (e.g., Euclidean distance \cite{Andres-CCS2013}) over the joint attribute space, so both privacy constraints and utility optimization are inherently geometry- and task-dependent rather than attribute-wise separable. We provide a more detailed comparison between our work with these LDP-based approaches in \textbf{Section~\ref{sec:relatedworks} (Related Works)}.

\subsection*{Our Contributions}
\noindent\textbf{Contribution 1: DepHDP: a dependency-aware heterogeneous data perturbation framework.} To address the heterogeneity gap in multi-attribute mDP, we propose DepHDP, a multi-stage framework that \emph{selects an appropriate perturbation mechanism for each attribute group} while accounting for inter-attribute dependence. A key challenge is that utility loss is often \emph{non-separable}: perturbing one attribute can change the downstream effect of others due to correlations and feasibility constraints, so attribute-wise mechanism selection can misestimate end-to-end degradation. DepHDP addresses this through \textbf{Stage~1 (attribute grouping)}, which clusters statistically dependent attributes for joint perturbation. \SPRevision{This stage builds on established ideas of dependency-aware grouping and correlation-guided allocation~\cite{Du-SECON2021,DomingoFerrer-ICDE2022}.} \textbf{Stage~2 (utility-loss evaluation and surrogate construction)} perturbs one group at a time to obtain group-wise loss profiles and combines them into an end-to-end surrogate objective. Using these profiles, \textbf{Stage~3 (mechanism selection)} determines whether a lightweight DPN mechanism is sufficient for each group (e.g., when loss is approximately radial in metric distance) or an optimization-based mechanism is needed to capture non-radial loss features. 
\SPRevision{Building on established grouping methods, our contribution is the integration of task-driven mechanism selection and joint mechanism--budget optimization under record-level mDP.}

\smallskip
\noindent\textbf{Contribution 2: Efficient joint optimization of heterogeneous mechanisms under a global mDP budget.} Building on \textbf{Stages~1--3}, DepHDP addresses the remaining scalability bottleneck: \emph{jointly} optimizing heterogeneous group mechanisms under a \emph{single} record-level mDP constraint. In particular, we model the record-level distance using an $\ell_p$ metric, whose flexibility allows practitioners to encode diverse application-specific notions of similarity by tuning $p$. Direct full-record optimization is infeasible because the decision variables scale with the exponentially large Cartesian-product domain. DepHDP therefore performs \textbf{Stage~4 (heterogeneous perturbation optimization)} by decomposing the problem into two coupled components: \emph{(i) within-group mechanism optimization}, which optimizes the selected mechanism for each group under a given group privacy budget, and \emph{(ii) cross-group budget allocation}, which allocates the global budget to minimize total utility loss using group-specific loss-budget profiles. Importantly, \textbf{Theorem~\ref{thm:composition}} provides a \emph{generalized composition rule} for record-level $\ell_p$-metric DP over attribute groups, beyond classic sequential composition \cite{Arcolezi-VLDB2023}, showing that the composed mechanism satisfies global mDP when the group budgets satisfy an $\ell_p$-norm constraint. We formulate budget allocation as a constrained optimization problem and solve it via Lagrange multipliers \cite{Luenberger2008Linear}. To handle potential non-convex empirical loss-budget curves, we construct convex envelope bounds \SPRevision{and provide a conditional suboptimality bound for the fixed Stage-4 surrogate objective in \textbf{Proposition~\ref{prop:convexbound}}. }


\smallskip
\noindent\textbf{Contribution 3: Evaluation across three case studies.}
We evaluate DepHDP on three multi-attribute case studies with diverse dependency structures and utility objectives: \emph{Case I: frequency estimation}~\cite{uci_adult_1996}, \emph{Case II: crowdsourcing aggregation}~\cite{uci_wine_quality_2009}, and \emph{Case III: entropy-based dependency preservation}~\cite{uci_heart_disease_1989}. We compare DepHDP with baselines that either apply a single perturbation primitive uniformly, such as DPN with or without Bayesian remapping~\cite{Chatzikokolakis-PETS2017}, or use fixed or optimized but mechanism-homogeneous budget allocations across attributes or groups. In particular, our strongest baseline, denoted as ALC+RMP, combines allocation-optimized budget splitting (ALC) with Bayesian remapping (RMP). We also conduct sensitivity studies on DepHDP's key hyperparameters, including grouping and mechanism-selection thresholds, to examine its utility-efficiency trade-offs. 
Across all three case studies, DepHDP achieves the lowest end-task utility loss under the same global privacy budget. Compared with ALC+RMP, DepHDP reduces utility loss by $52.5\%\text{--}55.7\%$ in Case~I, $13.8\%\text{--}19.2\%$ in Case~II, and $47.6\%\text{--}66.7\%$ in Case~III. The artifact is available through an anonymous repository~\cite{anonymous_artifact_2026}.

The remainder of the paper is organized as follows. Section~\ref{sec:preliminary} reviews preliminaries. Section~\ref{sec:methodology} presents the design of DepHDP, and Section~\ref{sec:exp} reports experimental results. Section~\ref{sec:relatedworks} discusses related work, and Section~\ref{sec:conclusion} concludes.

\section{Preliminaries}
\label{sec:preliminary}
In this section, we review the preliminaries of mDP, including the \emph{threat model} and the \emph{formal definition}, and summarize two main classes of perturbation mechanisms, \emph{distance-based predefined-noise (DPN)} and \emph{optimization-based (OPT)}. 

\smallskip
\noindent\textbf{Threat model.} We let $\mathcal{X}$ denote the input domain (i.e., the set of possible secret records) and $\mathcal{Y}$ denote the output domain (i.e., the set of values real records can be mapped to by perturbation mechanisms). A \emph{perturbation mechanism} is defined as a randomized mapping $\mathcal{M} : \mathcal{X} \rightarrow \mathcal{Y}$. 
To capture record similarity, we equip $\mathcal{X}$ with a distance function $d:\mathcal{X}\times\mathcal{X}\rightarrow \mathbb{R}_{\ge 0}$ that quantifies pairwise dissimilarity, e.g., geographic distance for locations, embedding-based distances for categorical values, or application-defined semantic distances. Let the random variable $X$ denote the true input record drawn from $\mathcal{X}$. Following prior work~\cite{Liu-CCS2025,Bordenabe-CCS2014,ImolaUAI2022}, we consider discrete input and output domains, i.e., $\mathcal{X}$ and $\mathcal{Y}$ are finite sets; in our multi-attribute setting, each attribute has a finite domain or is discretized into finitely many bins~\cite{uci_adult_1996,uci_wine_quality_2009,uci_heart_disease_1989}.

We consider the standard local privacy setting in which each user perturbs their sensitive record locally before sending any information to the data collector. The adversary is the data collector/server, which observes the perturbed report $\mathbf{y}$, knows the perturbation mechanism $\mathcal{M}$, the privacy parameters, and the input metric, and the prior distribution of the true data $X$. The privacy guarantee does not rely on the server honestly following a particular aggregation protocol; rather, it follows from the local mDP guarantee enforced by $\mathcal{M}$ before transmission. We assume only that each participating user applies the prescribed local perturbation mechanism to their own record. Malicious users who submit arbitrary reports may affect utility or estimation accuracy, but do not weaken the local privacy guarantee provided to an honest user by $\mathcal{M}$.

\smallskip
\noindent\textbf{Formal definition of mDP.} LDP requires data perturbation to maintain a uniform level of \emph{indistinguishability} between users' data~\cite{Duchi-FOCS2013}. mDP generalizes this concept by tying the privacy guarantee to a distance over the input space: nearby inputs should be harder to distinguish than far-apart ones. Originally proposed for location privacy~\cite{Andres-CCS2013}, mDP has been naturally extended to high-dimensional domains such as text embedding \cite{Feyisetan-2021-private} and image data protection \cite{Chen-CVPR2021}. \looseness =-1
\begin{definition}[$(\epsilon,d)$-mDP]
\label{def:mDP_lipschitz} 
We say that a perturbation mechanism $\mathcal{M} : \mathcal{X} \rightarrow \mathcal{Y}$ satisfies $(\epsilon,d)$-mDP if
for each $\mathbf{y}\in\mathcal{Y}$, $\ln \Pr[\mathcal{M}(\mathbf{x}) = \mathbf{y}]$ is $\epsilon$-Lipschitz with respect to $d$, i.e., $\forall \mathbf{x},\mathbf{x}'\in\mathcal X$, 
\begin{equation}
\label{eq:logprob_lipschitz}
\bigl|\ln \Pr[\mathcal{M}(\mathbf{x}) = \mathbf{y}]
      - \ln \Pr[\mathcal{M}(\mathbf{x}') = \mathbf{y}]\bigr|
\leq \epsilon d(\mathbf{x},\mathbf{x}').
\end{equation}
Here $\epsilon>0$ is the privacy budget controlling the strength of privacy.
\end{definition}
Intuitively, mDP ensures that small changes in the input $\mathbf{x}$ of the perturbation method $\mathcal{M}$ result in bounded changes in the distribution of the output $\mathcal{M}(\mathbf{x})$, thus providing privacy guarantees in the corresponding distance metric space. A lower $\epsilon$ signifies a tighter bound, meaning less information can be revealed about $\mathcal{M}(\mathbf{x})$.

\smallskip
\noindent\textbf{Data perturbation algorithms.} Over the past decade, substantial effort has focused on designing perturbation mechanisms to achieve mDP \cite{xie2025decademetricdifferentialprivacy}. Broadly, existing approaches fall into two categories:

\noindent \emph{(1) \underline{D}istance-based \underline{P}redefined-\underline{N}oise (DPN)} mechanisms, such as \emph{Exponential mechanisms}~\cite{Chatzikokolakis-PoPETs2015} (\textbf{Definition \ref{def:ExpMech}}) and Laplace \cite{Andres-CCS2013}, add noise from a fixed, parametric distribution whose shape is determined by the privacy budget $\epsilon$ and the distance between the true record and a candidate perturbed record. \looseness = -1

\noindent \emph{(2) \underline{OPT}imization-based (OPT) mechanisms}, such as \emph{linear-programming (LP)} formulations (\textbf{Definition \ref{def:OptMech}}), derive the perturbation distribution by solving an explicit optimization problem rather than adopting a fixed parametric form.

\begin{definition}[\textbf{Exponential Mechanism}~\cite{Chatzikokolakis-PoPETs2015}]
\label{def:ExpMech}
Given a privacy parameter $\epsilon>0$ and an input $\mathbf{x}\in\mathcal{X}$, the (metric) exponential mechanism samples an output $\mathbf{y}\in\mathcal{Y}$ with probability
\begin{equation}
\Pr[\mathcal{M}(\mathbf{x})=\mathbf{y}]
=
\frac{\exp\left(-\tfrac{\epsilon}{2} d(\mathbf{x},\mathbf{y})\right)}
{\sum_{\mathbf{y}'\in\mathcal{Y}} \exp\left(-\tfrac{\epsilon}{2} d(\mathbf{x},\mathbf{y}')\right)}, 
\label{eq:metric-em}
\end{equation}
where $d(\mathbf{x},\mathbf{y})$ measures the distance between $\mathbf{x}$ and $\mathbf{y}$.
\end{definition}
In this paper, we use the exponential mechanism as our primary DPN method considering that both $\mathcal{X}$ and $\mathcal{Y}$ are discrete, and EM provides a canonical way to sample from an arbitrary discrete output space. However, our framework does not rely on EM per se: other DPN mechanisms (like Laplace \cite{Andres-CCS2013}) whose conditional distribution can be evaluated (or sampled) as a function of $d(\mathbf{x},\mathbf{y})$ can be plugged into our pipeline with minimal changes.

\DEL{
\begin{definition}
[\textbf{Laplace Mechanism}~\cite{andres2013geo}] 
\label{def:Laplace}
Given a privacy parameter $\epsilon>0$ and an input $\mathbf{x}\in\mathcal{X}$, the metric-based Laplace (distance-based) mechanism outputs $\mathbf{y}\in\mathcal{Y}$ according to
\begin{equation}
\Pr[\mathcal{M}(\mathbf{x})=\mathbf{y}]
=
\frac{e^{-\epsilon d_{\mathcal{X}}(\mathbf{x},\mathbf{y})}}
{\int_{\mathbf{y}'\in\mathcal{Y}} e^{-\epsilon d_{\mathcal{X}}(\mathbf{x},\mathbf{y}')}d\mathbf{y}'}.
\label{eq:metric-laplace}
\end{equation}
(For discrete $\mathcal{Y}$, the integral is replaced by a sum.)
\end{definition}
\begin{definition}
[\textbf{Exponential Mechanism}~\cite{Chatzikokolakis-PoPETs2015}] 
\label{def:ExpMech}
Given a privacy parameter $\epsilon>0$ and an input $\textbf{x} \in \mathcal{X}$, the exponential mechanism selects an output $\textbf{y} \in \mathcal{Y}$ with probability:
\begin{equation}
\label{eq:exp_mech}
\Pr\left[\mathcal{M}(\mathbf{x}) = \mathbf{y}\right] = \frac{e^{-\frac{1}{2}\epsilon d(\mathbf{x}, \mathbf{y})}}{\sum_{\mathbf{y}' \in \mathcal{Y}} e^{-\frac{1}{2}\epsilon d(\mathbf{x}, \mathbf{y}')}}.
\end{equation}
\end{definition}}

\begin{definition}[\textbf{LP-based mechanism}~\cite{Bordenabe-CCS2014}]
\label{def:OptMech}
In LP-based (optimization-based) designs, since both the input and output domains $\mathcal{X}$ and $\mathcal{Y}$ are discrete, a perturbation mechanism $\mathcal{M}$ can be represented by a conditional probability matrix $\mathbf{Z}=\bigl\{z_{\mathbf{x},\mathbf{y}}\bigr\}_{\mathbf{x}\in\mathcal{X},\,\mathbf{y}\in\mathcal{Y}}$, where $z_{\mathbf{x},\mathbf{y}}=\Pr[\mathcal{M}(\mathbf{x})=\mathbf{y}]$.
Given an input prior $\pi(\mathbf{x})$ and a utility-loss function $c(\mathbf{y}|\mathbf{x})$,
an optimal mechanism is obtained by solving
\begin{eqnarray}
\min_{\mathbf{Z}} && \textstyle \sum_{\mathbf{x}\in\mathcal{X}}\sum_{\mathbf{y}\in\mathcal{Y}} \pi(\mathbf{x}) c(\mathbf{y}|\mathbf{x}) z_{\mathbf{x},\mathbf{y}}  \\
\mathrm{s.t.} && z_{\mathbf{x},\mathbf{y}} \le e^{\epsilon d(\mathbf{x},\mathbf{x}')}  z_{\mathbf{x}',\mathbf{y}},
\quad \forall \mathbf{x},\mathbf{x}'\in\mathcal{X},\ \forall \mathbf{y}\in\mathcal{Y}, \\
&& \textstyle \sum_{\mathbf{y}\in\mathcal{Y}} z_{\mathbf{x},\mathbf{y}} = 1,\quad \forall \mathbf{x}\in\mathcal{X}, \\
&& z_{\mathbf{x},\mathbf{y}} \ge 0,\quad \forall \mathbf{x}\in\mathcal{X},\ \forall \mathbf{y}\in\mathcal{Y}.
\end{eqnarray}
Here, the objective minimizes expected utility loss under the prior, and the constraints enforce the
mDP-induced indistinguishability conditions parameterized by $(\epsilon,d)$.
\end{definition}

\DEL{
Notably, DPN are not universally optimal for arbitrary utility objectives: when the utility loss departs from distance-driven behavior, an LP-based design can strictly improve utility by exploiting the finer structure of $U$. Nevertheless, Laplace admits a strong \emph{objective-specific} optimality guarantee in a canonical setting, which motivates our MSE-likeness score for deciding when Laplace is appropriate.
\begin{theorem}[Optimality of Laplace~\cite{koufogiannis2015optimality}]
\label{thm:laplace_opt}
Consider one-dimensional $\epsilon$-mDP (i.e., $\epsilon$-Lipschitz privacy) under the $\ell_1$ metric and the class of oblivious additive-noise mechanisms of the form $\mathcal{M}(\mathbf{x})=x+V$, where $V$ is input-independent with density $g$. The Laplace mechanism with
\begin{equation}
f_{\mathrm{Lap}}(v)=\frac{\epsilon}{2}e^{-\epsilon|v|}
\end{equation}
minimizes mean-squared error among this class:
\begin{equation}
\mathbb{E}\left[(\mathcal{M}(\mathbf{x})-x)^2\right]
=\mathbb{E}_{V\sim g}\left[V^2\right]
\ \ge\
\mathbb{E}_{V\sim f_{\mathrm{Lap}}}\left[V^2\right]
=\frac{2}{\epsilon^2}.
\end{equation}
\end{theorem}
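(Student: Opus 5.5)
The plan is to turn the mDP condition into a pointwise Lipschitz property of $\ln g$, convert that into an exponential lower bound on the tails of $V$, and then integrate the tail bound to bound $\mathbb{E}[V^2]$ from below.

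\textbf{Step 1 (privacy $\Rightarrow$ log-Lipschitz density).} For an oblivious additive mechanism $\mathcal{M}(x)=x+V$ with $V\sim g$, the output density at $y$ given input $x$ is $g(y-x)$. Definition~\ref{def:mDP_lipschitz} with $d(x,x')=|x-x'|$ then says that for every $y$,
\[
|\ln g(y-x)-\ln g(y-x')|\le \epsilon|x-x'| .
\]
Setting $v=y-x$ and $v'=y-x'$, this is exactly the statement that $\ln g$ is $\epsilon$-Lipschitz on $\mathbb{R}$. Finiteness of the logarithms forces $g>0$ everywhere. We also get $g(v+s)\ge g(v)e^{-\epsilon |s|}$ for all $v,s$.

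\textbf{Step 2 (exponential tail lower bound).} Define the right tail $R(t)=\Pr[V>t]$ and the left tail $L(t)=\Pr[V<-t]$ for $t\ge 0$. Using Step 1,
\[
R(t)=\int_0^\infty g(t+s)\,ds \;\ge\; g(t)\int_0^\infty e^{-\epsilon s}\,ds \;=\; \frac{g(t)}{\epsilon}.
\]
Since $R'(t)=-g(t)$, this reads $R'(t)\ge -\epsilon R(t)$. Hence $\frac{d}{dt}\bigl(e^{\epsilon t}R(t)\bigr)\ge 0$, which gives $R(t)\ge R(0)e^{-\epsilon t}$. The symmetric argument gives $L(t)\ge L(0)e^{-\epsilon t}$. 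Because $V$ has a density, $R(0)+L(0)=1$, and adding the two bounds yields
\[
\Pr[|V|>t]\ge e^{-\epsilon t}\qquad\text{for all } t\ge 0 .
\]

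\textbf{Step 3 (integrate).} By the layer-cake formula,
\[
\mathbb{E}[V^2]=\int_0^\infty 2t\,\Pr[|V|>t]\,dt\;\ge\;\int_0^\infty 2t\,e^{-\epsilon t}\,dt=\frac{2}{\epsilon^2}.
\]
For $f_{\mathrm{Lap}}$, $\ln f_{\mathrm{Lap}}(v)=\ln(\epsilon/2)-\epsilon|v|$ is $\epsilon$-Lipschitz, so the Laplace mechanism is feasible. Its tail is $\Pr[|V|>t]=e^{-\epsilon t}$, so every inequality above is tight and $\mathbb{E}_{f_{\mathrm{Lap}}}[V^2]=2/\epsilon^2$.

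\textbf{Main obstacle.} The crux is Step 2: a naive pointwise bound $g(v)\ge g(0)e^{-\epsilon|v|}$ does not control the variance, because $g(0)$ is not fixed. The differential inequality on each one-sided tail is what removes this dependence. I would also handle regularity briefly: $\ln g$ Lipschitz implies $g$ is continuous, so $R$ is $C^1$ and the Grönwall step is valid. If $\mathbb{E}[V^2]=\infty$ there is nothing to prove.
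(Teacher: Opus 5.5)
The paper gives no proof of this statement: it is quoted as a known result from \cite{koufogiannis2015optimality}, so there is no in-paper argument to compare yours against step by step. Your argument is correct and self-contained. Log-Lipschitzness gives $g(t)\le \epsilon R(t)$, so $e^{\epsilon t}R(t)$ is nondecreasing. The two one-sided tails add up to $\Pr[|V|>t]\ge e^{-\epsilon t}$ because $\Pr[V=0]=0$. The layer-cake integral then gives $2/\epsilon^2$. Finally, the Laplace density is feasible and has $\Pr[|V|>t]=e^{-\epsilon t}$ exactly, so the bound is attained.

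The nearest argument the authors actually write out is their (suppressed) radial-optimality result for discrete $\ell_2$ noise. There they tilt the \emph{density}, $q(r)=e^{\epsilon r}h(r)$, and note that the Lipschitz condition makes $q$ nondecreasing. They then apply Chebyshev's sum inequality to the similarly ordered sequences $r^2$ and $q(r)$ under the base weights $N(r)e^{-\epsilon r}$. You tilt the \emph{tail} instead and finish with the layer-cake formula.

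As written, their argument is restricted to radial noise with a nonincreasing profile (in one dimension, a symmetric, unimodal $g$). Yours treats the two half-lines separately, so it needs neither symmetry nor unimodality. It also produces a distributional conclusion: $|V|$ stochastically dominates an $\mathrm{Exp}(\epsilon)$ variable. That gives optimality of Laplace for every nondecreasing loss of $|V|$, not only the MSE in the statement.

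One simplification: Step~2 needs neither Gr\"onwall nor continuity of $g$. Apply the privacy condition to the output event $(t,\infty)$ with inputs $0$ and $-s$. This gives $R(t)\le e^{\epsilon s}R(t+s)$ directly, hence $R(s)\ge e^{-\epsilon s}R(0)$, and symmetrically for $L$. This version also remains valid if the mDP constraint is imposed on events rather than pointwise, i.e.\ on densities only almost everywhere, as is usual for continuous outputs.
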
}

\smallskip
\noindent\textbf{DPN vs. OPT-based mechanisms.} \emph{DPN} approaches are attractive for their simplicity and low runtime~\cite{xie2025decademetricdifferentialprivacy}: once a noise family is fixed, deployment incurs minimal per-instance computation. They are often effective when the task-induced utility loss is approximately \emph{radial}, i.e., it depends primarily on the metric distance, so that a single distance-based distribution captures most of the relevant utility structure. In particular, under $\ell_{1}$ distances, the Laplace mechanism is optimal for minimizing mean-squared error among oblivious additive-noise mechanisms~\cite{koufogiannis2015optimality}.

However, DPN is inherently limited by its fixed functional form: it may be suboptimal when loss landscapes are \emph{non-radial} or highly anisotropic, in which case the mechanism cannot adapt its probability mass to utility-critical outputs. In contrast, \emph{OPT-based} approaches directly solve for a perturbation channel tailored to the given metric and utility loss \cite{Bordenabe-CCS2014}, and can therefore better exploit non-radial loss structures and deliver lower utility loss under the same privacy budget. The main drawbacks are computational: the decision space has $O(|\mathcal{X}||\mathcal{Y}|)$ complexity \cite{ImolaUAI2022}, making the resulting programs expensive to solve, especially when the secret record is composed of multiple attributes. 

Together, these observations motivate a heterogeneous strategy that selects mechanisms on an attribute-wise basis according to attribute type, domain scale, and loss structure. 



\section{Methodology}
\label{sec:methodology}
In this section, we present the design of DepHDP. We first formulate the problem and summarize key challenges in \textbf{\S\ref{subsec:problem_challenge}}, then provide an overview in \textbf{\S\ref{subsec:overview}}, and detail the four stages in \textbf{\S\ref{subsec:stage1}--\S\ref{subsec:stage4}} \SPRevision{and analyze computational complexity and scalability in \textbf{\S\ref{subsec:complexity}}}.

\subsection{Problem Statement and Challenges}
\label{subsec:problem_challenge}

We consider a set of records, each consisting of $L$ \emph{heterogeneous} sensitive attributes (e.g., a mix of numerical, ordinal, and categorical fields). For each attribute $\ell\in[L]$, the true value $x_\ell$ lies in a domain $\mathcal{X}_\ell$, and the full record is the vector $\mathbf{x}=(x_1,\ldots,x_L)\in \mathcal{X}\triangleq \prod_{\ell=1}^L \mathcal{X}_\ell$. A perturbation mechanism $\mathcal{M}$ maps $\mathbf{x}$ to a sanitized output $\mathbf{y}=(y_1,\ldots,y_L)\in \mathcal{Y}\triangleq \prod_{\ell=1}^L \mathcal{Y}_\ell$, where $\mathcal{Y}_\ell$ denotes the release domain for attribute~$\ell$. 

We consider a set of records, each consisting of $L$ \emph{heterogeneous} sensitive attributes (e.g., a mix of numerical, ordinal, and categorical fields). For each attribute $\ell\in[L]$, the true value $x_\ell$ lies in a domain $\mathcal{X}_\ell$, and the full record is the vector $\mathbf{x}=(x_1,\ldots,x_L)\in \mathcal{X}\triangleq \prod_{\ell=1}^L \mathcal{X}_\ell$. A perturbation mechanism $\mathcal{M}$ maps $\mathbf{x}$ to a sanitized output $\mathbf{y}=(y_1,\ldots,y_L)\in \mathcal{Y}\triangleq \prod_{\ell=1}^L \mathcal{Y}_\ell$, where $\mathcal{Y}_\ell$ denotes the release domain for attribute~$\ell$. \SPRevision{DepHDP can accommodate type-specific component metrics without imposing a numerical ordering on nominal attributes. Possible choices include $d_\ell(a,b)=\mathbf{1}_{[a\ne b]}$ for nominal or binary attributes, normalized rank distance $d_\ell(a,b)=|r_\ell(a)-r_\ell(b)|/(m_\ell-1)$ for ordinal attributes with $m_\ell>1$ levels, and normalized absolute distance $d_\ell(u,v)=|u-v|/(b_\ell-a_\ell)$ for numerical attributes on $[a_\ell,b_\ell]$ with $b_\ell>a_\ell$.} 

\SPRevision{For $1\le p<\infty$, record similarity is measured by the weighted product metric 
$d_p(\mathbf{x},\mathbf{x}')=\left(\sum_{\ell=1}^{L}w_\ell d_\ell(x_\ell,x'_\ell)^p\right)^{1/p}$. For $p=\infty$, its limit with fixed positive weights is
$d_\infty(\mathbf{x},\mathbf{x}')=\max_\ell d_\ell(x_\ell,x'_\ell)$.
We use unit weights unless otherwise specified and define restricted group metrics using the same components and weights. This construction accommodates mixed-type groups without imposing artificial distances on nominal labels. The privacy guarantee is relative to the declared metric; its semantic suitability remains a modeling responsibility.}
Given $d_p$, a mechanism $\mathcal{M}$ satisfies \emph{record-level} $(\epsilon,d_p)$-mDP if, for all $\mathbf{x},\mathbf{x}'\in\mathcal{X}$ and all $\mathbf{y}\in\mathcal{Y}$,
\begin{equation}
\label{eq:mdp_def}
\left|\ln\frac{\Pr[\mathcal{M}(\mathbf{x})=\mathbf{y}]}{\Pr[\mathcal{M}(\mathbf{x}')=\mathbf{y}]}\right|
\le \epsilon\, d_p(\mathbf{x},\mathbf{x}').
\end{equation}
Unless stated otherwise, we focus on the $\ell_2$ metric in this work and defer the discussion of $\ell_1$ and $\ell_\infty$ cases to \textbf{Appendix~\ref{sec:discuss:norms}}.

We let $\mathcal{Q}(\epsilon)$ denote the feasible family of perturbation mechanism $\mathcal{M}$
\begin{equation}
\label{eq:Oeps_def}
\mathcal{Q}(\epsilon)\triangleq
\Bigl\{\mathcal{M}:\mathcal{X}\to\mathcal{Y} \bigm|\ 
\mathcal{M}\ \text{satisfies~\eqref{eq:mdp_def} under }(\mathcal{X},\mathcal{Y},d_p)\Bigr\}
\end{equation}
and let $\mathcal{L}(\mathcal{M})$ denote the \emph{expected task-induced utility loss} of $\mathcal{M}$:
\begin{equation}
\label{eq:loss}
\textstyle \mathcal{L}(\mathcal{M})
\triangleq
\sum_{\mathbf{x}\in\mathcal{X}}\boldsymbol\pi(\mathbf{x})\sum_{\mathbf{y}\in\mathcal{Y}}
\Pr\left[\mathcal{M}(\mathbf{x}) = \mathbf{y}\right]c(\mathbf{y}|\mathbf{x}). 
\end{equation}
Our goal is to find an $(\epsilon,d_p)$-mDP mechanism that minimizes utility loss:
\begin{equation}
\label{eq:opt_goal}
\mathcal{M}^\star = \arg\min_{\mathcal{M}\in\mathcal{Q}(\epsilon)} \mathcal{L}(\mathcal{M}).
\end{equation}

\DEL{
\begin{theorem} 
\label{thm:composition} 
\textbf{($p$-norm composition for mDP \cite{Qiu-USENIXSec2026})}
Let $d_{\mathcal{X}}$ denote the distance between any two secret attribute vectors $\mathbf{x}, \mathbf{x}' \in \mathcal{X}$ measured by the $p$-norm metric: 
\begin{equation} 
\textstyle d(\mathbf{x}, \mathbf{x}') = \|\mathbf{x} - \mathbf{x}'\|_p = \left(\sum_{\ell=1}^L \left|x_\ell - x'_\ell\right|^p\right)^{\frac{1}{p}}. \end{equation} 
Suppose that there are $L$ data perturbation mechanisms $\mathcal{M}_1, \dots, \mathcal{M}_L$, where each $\mathcal{M}_{k}$ ($\ell = 1, ..., L$) perturbs the $\ell$th attribute and satisfies $(\epsilon_\ell, d_{\mathcal{X}_\ell})$-mDP. If the privacy budgets of $\mathcal{M}_1, \dots, \mathcal{M}_L$ satify the constraint 
\begin{eqnarray}
&& \textstyle \sum_{\ell=1}^{L} \epsilon_\ell^{\frac{p}{p-1}} \leq \epsilon^{\frac{p}{p-1}}, ~\mbox{when $p > 1$}\\
&& \max_{\ell} \epsilon_\ell \leq \epsilon, ~\mbox{when $p = 1$}
\end{eqnarray}
then the combined perturbation mechanism $\mathcal{M} = (\mathcal{M}_1, \dots, \mathcal{M}_{k})$ satisfies $(\epsilon, d_{\mathcal{X}})$-mDP. 
\end{theorem}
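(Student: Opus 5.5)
The plan is to combine a product-form factorization of the joint mechanism with Hölder's inequality. The combined mechanism $\mathcal{M}=(\mathcal{M}_1,\dots,\mathcal{M}_L)$ runs the components independently, so for any output $\mathbf{y}=(y_1,\ldots,y_L)$ we have $\Pr[\mathcal{M}(\mathbf{x})=\mathbf{y}]=\prod_{\ell}\Pr[\mathcal{M}_\ell(x_\ell)=y_\ell]$. The log-ratio in \eqref{eq:mdp_def} therefore splits as a sum over attributes. By the triangle inequality and the per-component $(\epsilon_\ell,d_{\mathcal{X}_\ell})$-mDP guarantee, with $d_{\mathcal{X}_\ell}(x_\ell,x'_\ell)=|x_\ell-x'_\ell|$, this gives
\[
\left|\ln\frac{\Pr[\mathcal{M}(\mathbf{x})=\mathbf{y}]}{\Pr[\mathcal{M}(\mathbf{x}')=\mathbf{y}]}\right|\le \sum_{\ell=1}^L \epsilon_\ell\,|x_\ell-x'_\ell| .
\]

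The key step is to bound this weighted sum by $\epsilon\,\|\mathbf{x}-\mathbf{x}'\|_p$.

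\emph{Case $p>1$.} Let $q=p/(p-1)$ be the conjugate exponent. Hölder's inequality gives
\[
\sum_\ell \epsilon_\ell |x_\ell-x'_\ell|\le \Bigl(\sum_\ell \epsilon_\ell^{q}\Bigr)^{1/q}\|\mathbf{x}-\mathbf{x}'\|_p .
\]
The hypothesis $\sum_\ell\epsilon_\ell^{q}\le\epsilon^{q}$ then bounds the first factor by $\epsilon$.

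\emph{Case $p=1$.} The dual norm is $\ell_\infty$, so $\sum_\ell \epsilon_\ell|x_\ell-x'_\ell|\le \max_\ell\epsilon_\ell\cdot\|\mathbf{x}-\mathbf{x}'\|_1\le\epsilon\,d(\mathbf{x},\mathbf{x}')$.

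For $p=\infty$, if it is wanted, the conjugate exponent is $q=1$ and the same argument gives the condition $\sum_\ell\epsilon_\ell\le\epsilon$. This recovers sequential composition.

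The main point needing care is the first step: we must justify the product factorization, that is, that each $\mathcal{M}_\ell$ uses independent randomness and depends only on $x_\ell$. This must be stated as the meaning of the combined mechanism. Zero probabilities should also be handled: mDP with a finite bound forces supports to coincide across inputs, so the log-ratios are well defined. Everything else is the standard Hölder duality $\sup_{\|\mathbf{u}\|_p\le1}\langle\boldsymbol\epsilon,\mathbf{u}\rangle=\|\boldsymbol\epsilon\|_q$. The same argument extends to attribute groups and to weighted component metrics by replacing $|x_\ell-x'_\ell|$ with $w_\ell^{1/p}d_\ell$ and treating each group's restricted metric as one coordinate.
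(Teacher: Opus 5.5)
Your proposal is correct and follows essentially the same route as the paper's proof. You factorize the product mechanism using independent per-component randomization, bound the log-ratio by $\sum_\ell \epsilon_\ell |x_\ell - x'_\ell|$ via the per-component guarantees, and apply H\"older's inequality with the conjugate exponent, treating $p=1$ and $p=\infty$ as endpoint cases. The paper states this argument for attribute groups, using the identity $\sum_{k} d_{\mathcal{G}_k,p}(\mathbf{x}_k,\mathbf{x}'_k)^p = d_p(\mathbf{x},\mathbf{x}')^p$, which is exactly the extension you sketch at the end.
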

Theorem~1 (originally proposed by ~\cite{Qiu-USENIXSec2026}) shows that if an $N$-dimensional mechanism satisfies a coordinate-wise $(\epsilon_\ell,d_1)$-Lipschitz bound whenever two inputs differ only in coordinate $\ell$, then these bounds compose into a global $(\epsilon,d_p)$ guarantee as long as the budgets satisfy the dual-norm constraint $\|\boldsymbol{\epsilon}\|_q \le \epsilon$ (equivalently, $\sum_\ell \epsilon_\ell^{q}\le \epsilon^{q}$ with $q=\frac{p}{p-1}$, and $\max_\ell \epsilon_\ell\le \epsilon$ for $p=1$). 
Our Theorem~1 is the mDP/privacy-budget counterpart of this result: treating mDP as a Lipschitz bound on the log-likelihood ratio, per-attribute privacy bounds with budgets $\{\epsilon_\ell\}$ combine via Hölder’s inequality to yield an overall $(\epsilon,d_p)$ privacy guarantee under the same dual-norm budget condition.}



With the preliminaries in place, we motivate our design with a key observation: \emph{multi-attribute records are inherently heterogeneous}, so a single mDP perturbation primitive is unlikely to be uniformly suitable across all attributes. 
As analyzed in \textbf{Section \ref{sec:preliminary}}, DPN mechanisms (e.g., Laplace and EM) are efficient and effective when utility loss is largely distance-aligned, whereas OPT mechanisms can better fit complex non-radial loss landscapes but typically scale poorly as the domain grows. This heterogeneity also exposes the following key challenges:



\begin{figure*}[t]
\centering
\begin{minipage}{0.92\textwidth}
\centering
\subfigure{\includegraphics[width=1.00\textwidth]{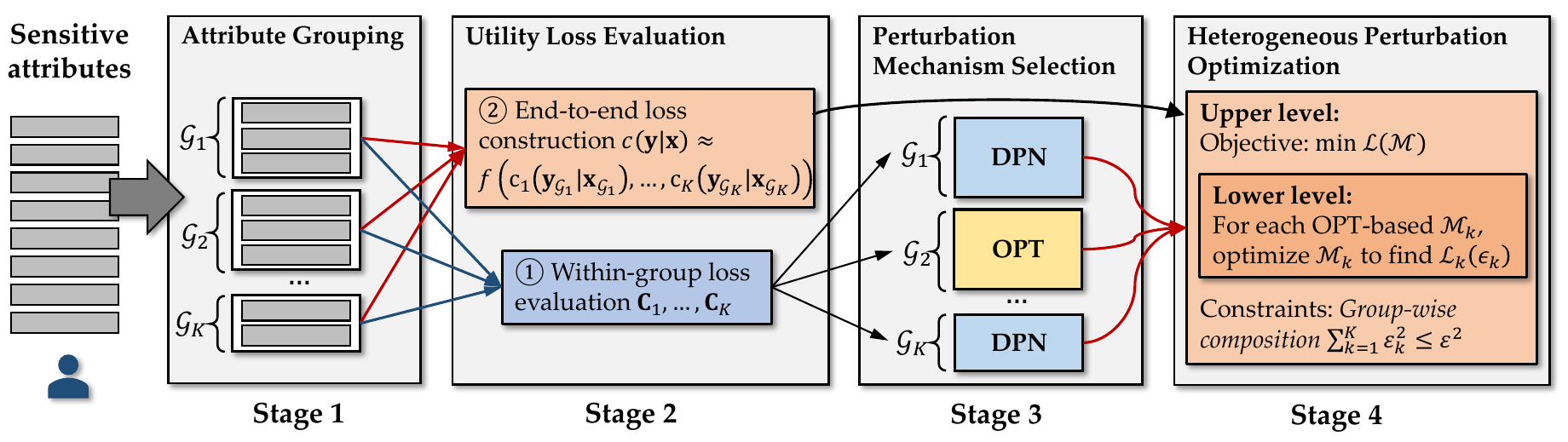}}
\end{minipage}
\caption{The four-stage framework of DepHDP.}
\label{fig:framework}
\end{figure*}

\smallskip
\noindent\textbf{Challenge~1: How to account for inter-attribute dependence.} 
To select suitable perturbation mechanisms, we must characterize each attribute’s utility--privacy profile under downstream tasks, yet these losses are often \emph{non-separable} across attributes. In many applications, the impact of perturbing one attribute depends on the values of others (e.g., multivariate aggregation \cite{Li-VLDB2015}). As a result, independently perturbing attributes can distort intrinsic cross-attribute relationships, produce implausible attribute combinations, and misestimate the true end-task degradation. This motivates \emph{attribute grouping}: by identifying correlated attributes and perturbing them jointly, we preserve their interactions and obtain a more accurate proxy of downstream utility under perturbation.

\smallskip
\noindent\textbf{Challenge~2: How to efficiently evaluate the utility loss.} After grouping, evaluating the task-induced utility loss at the full-record level remains computationally prohibitive, as it would require running the downstream task over all input-output pairs in $\prod_{\ell=1}^L \mathcal{X}_\ell \times \prod_{\ell=1}^L \mathcal{Y}_\ell$. A key technical requirement is therefore to establish a computationally efficient link between the \emph{global} utility loss and the \emph{group-wise} losses, enabling accurate loss estimation without enumerating this full Cartesian-product domain.

\smallskip
\noindent\textbf{Challenge~3: How to efficiently optimize heterogeneous perturbation mechanisms across multiple attribute groups.} Given a mechanism choice for each attribute group, we need to \emph{jointly} optimize them under a \emph{global} record-level mDP constraint. Directly optimizing a full-record channel is infeasible because the decision variables scale with the exponentially large Cartesian-product domain. We therefore adopt a \emph{decomposition} strategy: we design per-group submechanisms with dedicated privacy budgets and compose them to satisfy record-level mDP. This introduces an additional \emph{cross-group} decision, i.e., how to allocate the global budget across groups, yielding a \emph{bilevel} program in which the upper level selects budgets and the lower level computes group-wise optimal mechanisms. The resulting upper-level objective, defined implicitly via lower-level value functions, is typically \emph{non-convex}, which further complicates the algorithmic design. 


\smallskip 
Next, we present our method in detail and show how it addresses the above three challenges.

\subsection{Algorithm Overview}
\label{subsec:overview}
As illustrated in Fig.~\ref{fig:framework}, DepHDP consists of \textbf{four stages}:

\smallskip
\noindent\textbf{Stage 1: Attribute grouping.} DepHDP first quantifies statistical dependence between each pair of attributes (e.g., using \emph{normalized mutual information}) and clusters them into $K$ groups $\mathcal{G}_1,\ldots,\mathcal{G}_K$, so that strongly correlated attributes are perturbed \emph{coherently} rather than independently. Without loss of generality, we reorder attribute indices so that these groups form a partition of $[L]$ and their concatenation follows the index order. For each group $\mathcal{G}_k$ ($k\in[K]$), let $\mathbf{x}_{k}\triangleq \mathbf{x}_{\mathcal{G}_k}$ denote the sub-record obtained by restricting the full record $\mathbf{x}$ to the coordinates in $\mathcal{G}_k$, so that the full record can be written as $\mathbf{x} = (\mathbf{x}_1,\ldots,\mathbf{x}_K)$, where $\mathbf{x}_k \in \mathcal{X}_{\mathcal{G}_k}$.
Here, $\mathcal{X}_{\mathcal{G}_k}$ denotes the \emph{input domain} of group~$k$ (i.e., all feasible value combinations over attributes in $\mathcal{G}_k$), and $\mathcal{Y}_{\mathcal{G}_k}$ denotes the corresponding \emph{output domain} of perturbed sub-records. This grouping decomposes the multi-attribute design problem into $K$ coupled subproblems, each characterized by a group mechanism $\mathcal{M}_{k}:\mathcal{X}_{\mathcal{G}_k}\to\mathcal{Y}_{\mathcal{G}_k}$ and a group privacy budget $\epsilon_k$. Accordingly, the original mechanism $\mathcal{M}$ is composed across groups as
\begin{equation}
\mathcal{M}(\mathbf{x}) \triangleq \bigl(\mathcal{M}_{1}(\mathbf{x}_{1}),\ldots,\mathcal{M}_{K}(\mathbf{x}_{K})\bigr).
\end{equation}

\smallskip
\noindent\textbf{Stage 2: Utility loss evaluation.} This stage has two components:
\begin{itemize}
    \item [\textcircled{1}] \emph{Within-group loss evaluation:} For each group $\mathcal{G}_k$, DepHDP estimates the \emph{within-group utility-loss matrix} $\mathbf{C}_k \triangleq \bigl\{ c_k(\mathbf{y}_{k}| \mathbf{x}_{k}) \bigr\}_{(\mathbf{x}_{k},\mathbf{y}_{k}) \in \mathcal{X}_{\mathcal{G}_k}\times \mathcal{Y}_{\mathcal{G}_k}}$, where $c_k(\mathbf{y}_{k}| \mathbf{x}_{k})$ is defined later in Eq.~(\ref{eq:UL}). This matrix $\mathbf{C}_k$ is then used in \textbf{Stage~3} to guide within-group mechanism selection.
    \item [\textcircled{2}] \emph{End-to-end loss construction}, where DepHDP estimates the overall loss $c(\mathbf{y}|\mathbf{x})$ and learns a surrogate that links it to group-wise losses, i.e., 
    $$c(\mathbf{y}|\mathbf{x}) \approx f(c_1(\mathbf{y}_{1}|\mathbf{x}_{1}), \ldots,c_K(\mathbf{y}_{K}|\mathbf{x}_{K})).$$ 
    This surrogate is then used in \textbf{Stage~4} for cross-group budget allocation.
\end{itemize}

\smallskip
\noindent\textbf{Stage 3: Perturbation mechanism selection.} For each attribute group $\mathcal{G}_k$, DepHDP selects an appropriate perturbation mechanism. It applies a DPN method (i.e., the Exponential mechanism) when the domain is large or the task loss is primarily distance-based, indicated by a high \emph{radial-fit}/\emph{MSE-likeness} score. Otherwise, when the domain is small or the loss is noticeably non-radial, DepHDP switches to an \emph{OPT-based} design.

\smallskip
\noindent\textbf{Stage 4: Heterogeneous perturbation optimization.} Finally, DepHDP \emph{jointly} optimizes the cross-group budget allocation and the within-group mechanisms via a \emph{bilevel} formulation. At the \emph{upper level}, we choose the group budget vector $\boldsymbol{\epsilon}=(\epsilon_1,\ldots,\epsilon_K)$ to minimize the end-to-end utility loss of the composed mechanism. At the \emph{lower level}, for each group $\mathcal{G}_k$, we instantiate the best within-group perturbation distribution/mechanism subject to the budget $\epsilon_k$, which induces a group-specific loss-budget function $\mathcal{L}_k(\epsilon_k)$. 

To ensure that any feasible $\boldsymbol{\epsilon}$ yields a valid \emph{record-level} guarantee, we enforce a composition rule (\textbf{Theorem~\ref{thm:composition}}) that lifts the per-group mDP constraints to record-level mDP and thereby couples the allocation across groups.


\begin{theorem}[Group-wise composition under an $\ell_p$ record metric]
\label{thm:composition}
For each group $k$ (k = 1, ..., K), define the restricted $\ell_p$ metric
\SPRevision{
\begin{equation}
\textstyle d_{\mathcal{G}_k,p}(\mathbf{x}_k,\mathbf{x}'_k) =\left(\sum_{\ell\in\mathcal{G}_k}
w_\ell d_\ell(x_\ell,x'_\ell)^p\right)^{1/p}.
\end{equation}}
Assume that each group mechanism $\mathcal{M}_k:\mathcal{X}_{\mathcal{G}_k}\to\mathcal{Y}_{\mathcal{G}_k}$ satisfies $(\epsilon_k,d_{\mathcal{G}_k,p})$-mDP, i.e., for all $\mathbf{x}_{k},\mathbf{x}'_{k}\in\mathcal{X}_{\mathcal{G}_k}$ and all $\mathbf{y}_{k}\in \mathcal{Y}_{\mathcal{G}_k}$,
\begin{equation}
\label{eq:group-mdp}
\left|\ln
\frac{\Pr\!\left[\mathcal{M}_{k}(\mathbf{x}_{k})= \mathbf{y}_{k}\right]}
{\Pr\!\left[\mathcal{M}_{k}(\mathbf{x}'_{k})= \mathbf{y}_{k}\right]}
\right|
\le
\epsilon_k \, d_{\mathcal{G}_k,p}\!\left(\mathbf{x}_{k},\mathbf{x}'_{k}\right).
\end{equation}
Then the composed mechanism $\mathcal{M}(\mathbf{x}) \triangleq \bigl(\mathcal{M}_{1}(\mathbf{x}_{1})$, $\ldots$, $\mathcal{M}_{K}(\mathbf{x}_{K})\bigr)$ satisfies $(\epsilon,d_p)$-mDP under the full-record metric $d_p(\mathbf{x},\mathbf{x}')$, provided that the group budgets $\boldsymbol{\epsilon}=(\epsilon_1,\ldots,\epsilon_K)$ obey the dual-norm constraint
\begin{equation}
\label{eq:group-budget-constraint}
\textstyle \|\boldsymbol{\epsilon}\|_{q}
\triangleq
\left(\sum_{k=1}^{K}\epsilon_{k}^{q}\right)^{1/q}
\le \epsilon,
\quad\text{where }\ \frac{1}{p}+\frac{1}{q}=1,
\end{equation}
with the usual convention $\|\boldsymbol{\epsilon}\|_{\infty}=\max_{k\in[K]}\epsilon_k$ (corresponding to $p=1$).
Equivalently,
\begin{itemize}
    \item if $p>1$ (so $q=\tfrac{p}{p-1}$), it suffices that $\sum_{k=1}^{K}\epsilon_{k}^{q}\le \epsilon^{q}$;
    \item if $p=1$ (so $q=\infty$), it suffices that $\max_{k}\epsilon_{k}\le \epsilon$.
\end{itemize}
\end{theorem}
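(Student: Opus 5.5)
The proof reduces to two facts: the log-likelihood ratio of the composed mechanism splits into a sum over groups, and the full-record metric is the $\ell_p$ norm of the vector of group distances. Hölder's inequality then links the two.

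\emph{Factorization.} By construction, $\mathcal{M}$ applies $\mathcal{M}_1,\ldots,\mathcal{M}_K$ independently to the disjoint sub-records $\mathbf{x}_1,\ldots,\mathbf{x}_K$, using independent internal randomness. Hence, for any $\mathbf{y}=(\mathbf{y}_1,\ldots,\mathbf{y}_K)$,
\[
\Pr[\mathcal{M}(\mathbf{x})=\mathbf{y}]=\prod_{k=1}^K \Pr[\mathcal{M}_k(\mathbf{x}_k)=\mathbf{y}_k].
\]
Taking logs and applying the triangle inequality, then \eqref{eq:group-mdp} group by group, gives
\[
\left|\ln\frac{\Pr[\mathcal{M}(\mathbf{x})=\mathbf{y}]}{\Pr[\mathcal{M}(\mathbf{x}')=\mathbf{y}]}\right|
\le \sum_{k=1}^K \epsilon_k\, \delta_k,
\qquad \delta_k\triangleq d_{\mathcal{G}_k,p}(\mathbf{x}_k,\mathbf{x}'_k).
\]
This step needs some care with zero probabilities. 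If some $\Pr[\mathcal{M}_k(\mathbf{x}_k)=\mathbf{y}_k]=0$, then \eqref{eq:group-mdp} forces it to be zero for every input with finite distance. So either both sides of the ratio vanish, and the constraint is vacuous under the usual convention, or every factor is positive.

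\emph{Hölder step.} Apply Hölder's inequality with conjugate exponents $(q,p)$:
\[
\sum_k \epsilon_k\delta_k \le \|\boldsymbol{\epsilon}\|_q\,\|\boldsymbol{\delta}\|_p .
\]
For $p=1$, $q=\infty$, this is simply $\sum_k\epsilon_k\delta_k\le(\max_k\epsilon_k)\sum_k\delta_k$. Next, because the groups partition $[L]$ and the restricted metrics reuse the same components $d_\ell$ and weights $w_\ell$,
\[
\|\boldsymbol{\delta}\|_p^p=\sum_k\sum_{\ell\in\mathcal{G}_k} w_\ell\, d_\ell(x_\ell,x'_\ell)^p = d_p(\mathbf{x},\mathbf{x}')^p .
\]
Combining this with the budget constraint \eqref{eq:group-budget-constraint} bounds the log-ratio by $\epsilon\, d_p(\mathbf{x},\mathbf{x}')$, which is the claimed $(\epsilon,d_p)$-mDP.

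The only real obstacle is the \emph{metric-decomposition identity} $\|\boldsymbol{\delta}\|_p=d_p$. It relies on the groups forming a partition with no overlapping attributes, and on the restricted metrics inheriting the same weights, both of which the setup guarantees. For completeness I would also remark on $p=\infty$. There $q=1$, and $\sum_k\epsilon_k\delta_k\le(\sum_k\epsilon_k)\max_k\delta_k$, where $\max_k\delta_k=d_\infty$ provided each $d_{\mathcal{G}_k,\infty}$ is the maximum over its own group. The same argument therefore goes through, even though the statement only lists $p\ge 1$ finite together with the $p=1$ convention.
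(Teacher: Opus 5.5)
Your proposal is correct and follows essentially the same route as the paper: factorize the output probability using the independent group randomizations, bound each factor with the group-wise mDP condition, use the partition identity $\sum_{k} d_{\mathcal{G}_k,p}(\mathbf{x}_k,\mathbf{x}'_k)^p = d_p(\mathbf{x},\mathbf{x}')^p$, and apply H\"older with conjugate exponents, with the same treatment of $p=1$ and $p=\infty$. Your explicit handling of zero-probability outputs is a small refinement that the paper's proof leaves implicit.
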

\begin{proof}[Proof sketch]
By independence across groups in the product mechanism, the log-likelihood ratio for any output $\mathbf{y}=(\mathbf{y}_1,\ldots,\mathbf{y}_K)$ decomposes as a sum of group-wise log ratios, each bounded by \eqref{eq:group-mdp}. Summing these bounds and applying H\"older's inequality yields
$\sum_{k=1}^K \epsilon_k\, d_{\mathcal{G}_k,p}(\mathbf{x}_k,\mathbf{x}'_k)\le \|\boldsymbol{\epsilon}\|_q \cdot d_p(\mathbf{x},\mathbf{x}')$, which is at most $\epsilon\, d_p(\mathbf{x},\mathbf{x}')$ under \eqref{eq:group-budget-constraint}; see \textbf{Appendix~\ref{subsec:proof:thm:composition}} for details.
\end{proof}
As we focus on the $\ell_2$ metric in this paper, the group-wise composition requires that $\sum_{k=1}^{K}\epsilon_{k}^{2} \le \epsilon^{2}$.

\noindent\textbf{Remark.}
The group-wise composition rule in Theorem~\ref{thm:composition} provides a generalized privacy accounting principle that subsumes the standard multi-attribute LDP composition as a special case. When each group contains exactly one attribute (i.e., $K=L$ and $\mathcal{G}_k=\{k\}$), the resulting mechanism becomes attribute-wise perturbation, and the constraint in Theorem~\ref{thm:composition} reduces to the familiar sequential composition condition for multi-attribute LDP as $p \to \infty$~\cite{Qiu-USENIXSec2026}.

\smallskip
\SPRevision{\noindent\textbf{How the record metric shapes mechanism design and allocation.}
Standard $\epsilon$-LDP bounds every record-pair likelihood ratio by $e^\epsilon$, whereas mDP uses $e^{\epsilon d_p(\mathbf{x},\mathbf{x}')}$ (Eq.~\eqref{eq:mdp_def}). The declared metric shapes both within-group constraints (Eq.~\eqref{eq:group-mdp}) and feasible budget allocations. For a fixed partition and product-form mechanism, Theorem~\ref{thm:composition} gives
\begin{equation}
\textstyle \sum_{k=1}^{K}\epsilon_k d_{\mathcal{G}_k,p}(\mathbf{x}_k,\mathbf{x}'_k)
\le \|\boldsymbol{\epsilon}\|_q d_p(\mathbf{x},\mathbf{x}')
\le \epsilon d_p(\mathbf{x},\mathbf{x}'),
\label{eq:SP-metric-allocation}
\end{equation}
provided $\|\boldsymbol{\epsilon}\|_q\le\epsilon$, where $1/p+1/q=1$. For $p=1,2,\infty$, this sufficient condition becomes $\max_k\epsilon_k\le\epsilon$, $\sum_k\epsilon_k^2\le\epsilon^2$, and $\sum_k\epsilon_k\le\epsilon$, respectively. The metric is fixed before optimization; changing it changes both the guarantee and feasible allocations. Although $p=\infty$ yields the additive budget constraint of sequential LDP composition, the privacy definitions coincide for the discrete $0/1$ record metric, not arbitrary metrics.

Notably, DepHDP minimizes estimated end-to-end task loss using group-specific loss--budget profiles, so allocation depends on the selected DPN/OPT mechanisms, task loss, and metric geometry. Grouping supports utility decomposition, but composition requires only conditionally independent group randomizations given the input record, not statistically independent attributes.}

\smallskip
Next, we introduce the details of the four stages.

\subsection{Stage 1: Attribute Grouping}
\label{subsec:stage1}
DepHDP begins by partitioning the attribute index set $[L]$ into $K$ disjoint groups $\mathcal{G}_1,\ldots,\mathcal{G}_K$, with the goal of perturbing strongly dependent attributes \emph{coherently} rather than independently. This step is motivated by the observation that independently perturbing correlated attributes may destroy cross-attribute relationships, produce implausible records, and misestimate the downstream utility loss.

To construct the groups, we use \emph{normalized mutual information (NMI)} to measure pairwise statistical dependence between attributes \cite{Cover-Thomas2006}. Continuous attributes are first discretized into bins, while categorical attributes use their original categories. For each attribute pair $(i,j)$, we compute $I_{ij} \triangleq
\frac{2\mathrm{MI}(X_i;X_j)}{H(X_i)+H(X_j)},$  
where $\mathrm{MI}(X_i;X_j)$ is estimated from empirical joint and marginal distributions, and $H(X_i)$ and $H(X_j)$ denote the empirical Shannon entropies. The resulting score $I_{ij}\in[0,1]$ provides a scale-free dependence measure that is comparable across heterogeneous attributes.

We then build an undirected dependency graph whose nodes are attributes and whose edges connect attribute pairs with non-negligible dependence: an edge $(i,j)$ is added if $I_{ij}>\tau_{\mathrm{group}}$, where $\tau_{\mathrm{group}}=0.1$ by default. The connected components of this graph define the attribute groups $\mathcal{G}_1,\ldots,\mathcal{G}_K$. \SPRevision{We use dependency-aware grouping as a building block that connects group-level loss estimation, heterogeneous mechanism selection, and privacy-budget allocation within DepHDP.} 


\subsection{Stage 2: Utility Loss Evaluation}
\label{subsec:stage2}

Given the attribute groups from Stage~1, DepHDP next quantifies how candidate perturbations affect downstream utility. This stage has two components: \emph{\textcircled{1} within-group loss evaluation}, which
constructs group-level utility-loss profiles for mechanism selection, and \emph{\textcircled{2} end-to-end loss construction}, which learns how group-wise perturbation effects combine into record-level utility loss for budget allocation.

\smallskip
\noindent\textbf{\textcircled{1} Within-group loss evaluation.} This component estimates the task-induced utility loss incurred when perturbing sub-records in $\mathcal{X}_{\mathcal{G}_k}$, for $k=1,\ldots,K$. The resulting loss profile is a key input to \textbf{Stage~3}, where DepHDP decides whether group $\mathcal{G}_k$ should use a distance-based predefined-noise mechanism or an optimization-based mechanism.

Let $\mathbf{x}_{-k}$ denote the attributes outside group $\mathcal{G}_k$, so that a full record can be written as $\mathbf{x}=(\mathbf{x}_k,\mathbf{x}_{-k})$. For a candidate perturbed group value $\mathbf{y}_k$, we define the plug-in operator 
$\Plug_k(\mathbf{x},\mathbf{y}_k)
\triangleq
(\mathbf{y}_k,\mathbf{x}_{-k}),$ 
which constructs a counterfactual record by replacing only the group component $\mathbf{x}_k$ with $\mathbf{y}_k$ while keeping the remaining attributes $\mathbf{x}_{-k}$ fixed.

Let $\gamma_{\mathrm{task}}(\cdot,\cdot)$ denote a task-specific nonnegative discrepancy function, such as absolute error, squared error, or a statistic-specific deviation. For a fixed context $\mathbf{x}_{-k}$, we first define the context-dependent group-isolated loss as 
\begin{equation}
\label{eq:context_group_loss}
\tilde{c}_k(\mathbf{y}_k|\mathbf{x}_k;\mathbf{x}_{-k})
\triangleq
\gamma_{\mathrm{task}}\left(
U(\mathbf{x}_k,\mathbf{x}_{-k}),
U(\mathbf{y}_k,\mathbf{x}_{-k})
\right),
\end{equation}
where $U(\cdot)$ denotes the downstream utility/statistic of interest. The semicolon makes the fixed context $\mathbf{x}_{-k}$ explicit, as the same group perturbation may induce different losses under different contexts.

Let $\mathbf{X}_k$ and $\mathbf{X}_{-k}$ denote the random vectors of attributes within and outside $\mathcal{G}_k$, respectively. To obtain a group-level loss value indexed only by $(\mathbf{x}_k,\mathbf{y}_k)$, we average the context-dependent loss over the conditional distribution of the remaining attributes:
\begin{equation}
\label{eq:UL}
c_k(\mathbf{y}_k|\mathbf{x}_k)
\triangleq
\mathbb{E}_{\mathbf{X}_{-k}\mid \mathbf{X}_k=\mathbf{x}_k}
\left[
\tilde{c}_k(\mathbf{y}_k|\mathbf{x}_k;\mathbf{X}_{-k})
\right].
\end{equation}
\SPRevision{The pointwise loss $c_k(\mathbf{y}_k|\mathbf{x}_k)$ is mechanism-independent, whereas its expectation $\mathcal{L}(\mathcal{M})$ depends on the output probabilities induced by $\mathcal{M}$. The same distinction applies to group-wise losses and their expectations.} In implementation, this expectation is estimated from the evaluation set by averaging the observed context-dependent losses associated with $\mathbf{x}_k$; when exact matches are sparse, we use bin-based or sample-based averaging.

We collect the group-level losses into the within-group utility-loss profile
$\mathbf{C}_k
\triangleq
\left\{
c_k(\mathbf{y}_k|\mathbf{x}_k)
\right\}_{(\mathbf{x}_k,\mathbf{y}_k)\in
\mathcal{X}_{\mathcal{G}_k}\times\mathcal{Y}_{\mathcal{G}_k}}.$ For small or OPT-eligible groups, DepHDP evaluates all candidate pairs $(\mathbf{x}_k,\mathbf{y}_k)\in \mathcal{P}_k$, which requires $O(|\mathcal{X}_{\mathcal{G}_k}||\mathcal{Y}_{\mathcal{G}_k}|)$ group-level evaluations. For large groups, where full enumeration is computationally expensive, DepHDP estimates $\mathbf{C}_k$ and the corresponding DPN-adequacy scores using sampled candidate pairs.

\smallskip
\noindent\textbf{\textcircled{2} End-to-end loss construction.} This component learns how record-level utility loss can be predicted from group-wise perturbation effects. Importantly, this step does not assume that the final perturbation mechanism has already been selected. Instead, DepHDP constructs candidate perturbation pairs $(\mathbf{x},\mathbf{y})$ to explore the relationship between perturbation patterns and downstream utility loss. For small domains, candidate perturbed records $\mathbf{y}\in\mathcal{Y}$ can be enumerated for sampled inputs $\mathbf{x}$. For larger domains, DepHDP samples a representative subset of feasible perturbed records. These candidate pairs are used only to learn the utility-loss surrogate, before the final group mechanisms and privacy budgets are determined.

For each candidate pair
$(\mathbf{x}^{(s)},\mathbf{y}^{(s)})$, we compute the record-level loss
\begin{equation}
\label{eq:record_level_candidate_loss}
c^{(s)}
\triangleq
c(\mathbf{y}^{(s)}|\mathbf{x}^{(s)})
=
\gamma_{\mathrm{task}}\left(
U(\mathbf{x}^{(s)}),
U(\mathbf{y}^{(s)})
\right).
\end{equation}
To relate this end-to-end loss to group-wise effects, we also evaluate
group-isolated counterfactuals. For each group $\mathcal{G}_k$, we form $\Plug_k(\mathbf{x}^{(s)},\mathbf{y}_k^{(s)})
=
(\mathbf{y}_k^{(s)},\mathbf{x}_{-k}^{(s)}),$ 
where only the attributes in $\mathcal{G}_k$ are replaced by
$\mathbf{y}_k^{(s)}$ while all other groups remain fixed. This gives the
sample-level group-isolated loss $c_k^{(s)}
\triangleq
c_k\left(
\mathbf{y}_k^{(s)}
\middle|
\mathbf{x}_k^{(s)};
\mathbf{x}_{-k}^{(s)}
\right).$ Collecting these values yields the training set $\mathcal{T}
\triangleq
\left\{
\left(c^{(s)},c_1^{(s)},\ldots,c_K^{(s)}\right)
\right\}_{s=1}^{T}.$  

We then fit a surrogate model $f$ on $\mathcal{T}$ to predict the
record-level loss from the group-isolated losses:
\begin{equation}
\label{eq:general_surrogate}
c(\mathbf{y}|\mathbf{x})
\approx
f\left(
c_1(\mathbf{y}_1|\mathbf{x}_1;\mathbf{x}_{-1}),\ldots,
c_K(\mathbf{y}_K|\mathbf{x}_K;\mathbf{x}_{-K})
\right).
\end{equation}
In principle, $f$ can include interaction terms to capture cross-group effects. We begin with a quadratic form and empirically find that higher-order interaction terms are negligible in our settings (see \textbf{Appendix~\ref{sec:quad_diagnostic}} for details). Therefore, we use the following first-order surrogate in the rest of the paper:
\begin{equation}
\label{eq:surrogate}
\textstyle 
c(\mathbf{y}|\mathbf{x})
\approx
\beta_0+\sum_{k=1}^{K}\beta_k
c_k(\mathbf{y}_k|\mathbf{x}_k;\mathbf{x}_{-k}).
\end{equation}
We fit the coefficients using nonnegative least squares, enforcing $\beta_k\ge 0$ for all $k$. 
\SPRevision{This linear surrogate supports \textbf{Stage~4} by predicting end-to-end loss for candidate mechanisms and budget allocations. Its coefficients capture group contributions but do not imply exact task additivity: cross-group interactions may leave prediction residuals. We evaluate its fit on held-out calibration pairs. Taking expectations yields the approximate objective in Proposition~\ref{prop:lossdecompose}. Because we do not bound the residual uniformly over mechanisms, surrogate optimization does not certify optimality for the original objective in Eq.~\eqref{eq:opt_goal}. \SPRevision{The linear surrogate provides an approximate end-to-end objective for comparing mechanisms and allocations. Held-out calibration pairs assess its fit, and Proposition~\ref{prop:lossdecompose} expresses the corresponding expected-loss approximation.}}

\begin{proposition}[Expected utility loss linear surrogate]
\label{prop:lossdecompose}
Let the record-level expected utility loss be defined as in
Eq.~(\ref{eq:loss}). Assume (i) the record-level mechanism composes
across attribute groups as $\mathcal{M}(\mathbf{y}|\mathbf{x})
= \prod_{k=1}^{K}
\mathcal{M}_k(\mathbf{y}_k|\mathbf{x}_k)$, and (ii) the end-to-end utility loss admits the first-order surrogate
\begin{equation}
\label{eq:lin_surrogate_prop}
\textstyle 
c(\mathbf{y}|\mathbf{x})
\approx
\beta_0+\sum_{k=1}^{K}\beta_k
c_k(\mathbf{y}_k|\mathbf{x}_k;\mathbf{x}_{-k}).
\end{equation}
Then the record-level expected utility loss decomposes approximately as
\begin{equation}
\label{eq:loss_decomp_final}
\textstyle \mathcal{L}(\mathcal{M})
\approx
\beta_0+\sum_{k=1}^{K}
\beta_k\mathcal{L}_k(\mathcal{M}_k),
\end{equation}
where the group-wise expected loss is
\begin{equation}
\label{eq:group_loss}
\mathcal{L}_k(\mathcal{M}_k)
\triangleq
\mathbb{E}_{\mathbf{x}\sim \boldsymbol{\pi}}
\mathbb{E}_{\mathbf{y}_k\sim
\mathcal{M}_k(\cdot|\mathbf{x}_k)}
\left[
c_k(\mathbf{y}_k|\mathbf{x}_k;\mathbf{x}_{-k})
\right].
\end{equation}
The detailed proof can be found in
\textbf{Appendix~\ref{subsec:proof:prop:lossdecompose}}.
\end{proposition}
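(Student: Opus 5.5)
The plan is to substitute the surrogate (ii) into the definition of $\mathcal{L}(\mathcal{M})$ in Eq.~(\ref{eq:loss}) and then use linearity of expectation together with the product structure (i). I will treat ``$\approx$'' as an equality up to an additive residual $r(\mathbf{x},\mathbf{y})\triangleq c(\mathbf{y}|\mathbf{x})-\beta_0-\sum_k\beta_k c_k(\mathbf{y}_k|\mathbf{x}_k;\mathbf{x}_{-k})$. Then
\begin{equation*}
\textstyle \mathcal{L}(\mathcal{M})=\sum_{\mathbf{x}}\boldsymbol{\pi}(\mathbf{x})\sum_{\mathbf{y}}\mathcal{M}(\mathbf{y}|\mathbf{x})\bigl[\beta_0+\sum_{k}\beta_k c_k(\mathbf{y}_k|\mathbf{x}_k;\mathbf{x}_{-k})+r(\mathbf{x},\mathbf{y})\bigr],
\end{equation*}
and the claimed approximation holds with error exactly $\mathbb{E}_{\mathbf{x}\sim\boldsymbol{\pi}}\mathbb{E}_{\mathbf{y}\sim\mathcal{M}(\cdot|\mathbf{x})}[r(\mathbf{x},\mathbf{y})]$. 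This error is small whenever the residual is small on the support of $\mathcal{M}$. In line with the remark preceding the proposition, I will not attempt a uniform bound on it.

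The $\beta_0$ term is handled by normalization. Since $\sum_{\mathbf{y}}\mathcal{M}(\mathbf{y}|\mathbf{x})=1$ and $\sum_{\mathbf{x}}\boldsymbol{\pi}(\mathbf{x})=1$, this term contributes exactly $\beta_0$.

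For the $k$-th term I fix $\mathbf{x}$ and split $\mathbf{y}=(\mathbf{y}_k,\mathbf{y}_{-k})$. By (i), $\mathcal{M}(\mathbf{y}|\mathbf{x})=\mathcal{M}_k(\mathbf{y}_k|\mathbf{x}_k)\prod_{j\neq k}\mathcal{M}_j(\mathbf{y}_j|\mathbf{x}_j)$. The summand $c_k(\mathbf{y}_k|\mathbf{x}_k;\mathbf{x}_{-k})$ depends on $\mathbf{y}$ only through $\mathbf{y}_k$; the context $\mathbf{x}_{-k}$ is part of the input, not the output. I can therefore sum out each $\mathbf{y}_j$ with $j\neq k$ over $\mathcal{Y}_{\mathcal{G}_j}$, and each such factor sums to one. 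What remains is
\begin{equation*}
\textstyle \sum_{\mathbf{x}}\boldsymbol{\pi}(\mathbf{x})\sum_{\mathbf{y}_k}\mathcal{M}_k(\mathbf{y}_k|\mathbf{x}_k)\,c_k(\mathbf{y}_k|\mathbf{x}_k;\mathbf{x}_{-k}),
\end{equation*}
which is precisely $\mathcal{L}_k(\mathcal{M}_k)$ as defined in Eq.~(\ref{eq:group_loss}). Multiplying by $\beta_k$ and summing over $k$ gives the result.

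The only delicate point is bookkeeping. The context $\mathbf{x}_{-k}$ must be drawn jointly with $\mathbf{x}_k$ from $\boldsymbol{\pi}$, which is why Eq.~(\ref{eq:group_loss}) keeps the outer expectation over the full $\mathbf{x}$. It must not be replaced by the perturbed $\mathbf{y}_{-k}$. Marginalization over the other groups' outputs is valid precisely because the plug-in loss conditions on true context, and it uses only the conditional independence of group randomizations given $\mathbf{x}$, not independence of the attributes. Optionally, I will note that one could further rewrite the inner quantity through the context-averaged $c_k(\mathbf{y}_k|\mathbf{x}_k)$ of Eq.~(\ref{eq:UL}) via the tower property, since $\mathcal{M}_k$ depends only on $\mathbf{x}_k$.
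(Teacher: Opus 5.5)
Your proof is correct and follows the same route as the paper's: substitute the surrogate into Eq.~(\ref{eq:loss}), use linearity of expectation, and sum out the other groups' outputs via the product form (i). Your bookkeeping is actually tighter than the paper's, which marginalizes directly to $\boldsymbol{\pi}_k$ with the context-averaged $c_k(\mathbf{y}_k|\mathbf{x}_k)$ and only then restates $\mathcal{L}_k$ as an expectation over the full $\mathbf{x}$; your explicit residual pins down the approximation error exactly, and your tower-property remark shows why the paper's two forms agree.
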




\subsection{Stage 3: Perturbation Mechanism Selection}
\label{subsec:stage3}


Given a privacy budget $\epsilon_k > 0$, we select the perturbation mechanism $\mathcal{M}_{k}$ based on the effective domain size $|\mathcal{X}_{\mathcal{G}_k}|$ and the DPN adequacy score $S^{\mathrm{dpn}}_k$. As illustrated in Figure \ref{fig:mechselect}, using a dimensionality threshold $\tau_{\mathrm{dim}}$ and a DPN-fit threshold $\tau_{\mathrm{dpn}} \in (0,1)$, the selection rule is defined as:
\begin{equation}
\label{eq:mech-selection}
\mathcal{M}_{k} 
\begin{cases} 
\text{is OPT}, & \text{if } |\mathcal{X}_{\mathcal{G}_k}| \le\tau_{\mathrm{dim}} \text{ and } S^{\mathrm{dpn}}_k \le \tau_{\mathrm{dpn}} \\
\text{is DPN}, & \text{otherwise}
\end{cases}
\end{equation}
Intuitively, we favor the DPN mechanisms when the effective domain is large (making optimization costly) or when the loss is well explained by the noise magnitude (high radial-fit), whereas we invoke optimization-based mechanisms to capture non-radial or heterogeneous loss patterns that are poorly modeled by magnitude-only noise. \SPRevision{We use EM for DPN because it supports arbitrary finite metric spaces, and finite-domain LP optimization for OPT. Approximate alternatives such as PAnDA~\cite{Liu-CCS2025} improve scalability without the same exact optimality guarantee as full LP. Additional mechanisms can be incorporated within the framework after verifying their privacy guarantees and profiling their utility and computational costs.}

\begin{figure}[t]
\centering
\begin{minipage}{0.48\textwidth}
\centering
  \subfigure{
\includegraphics[width=1.00\textwidth]{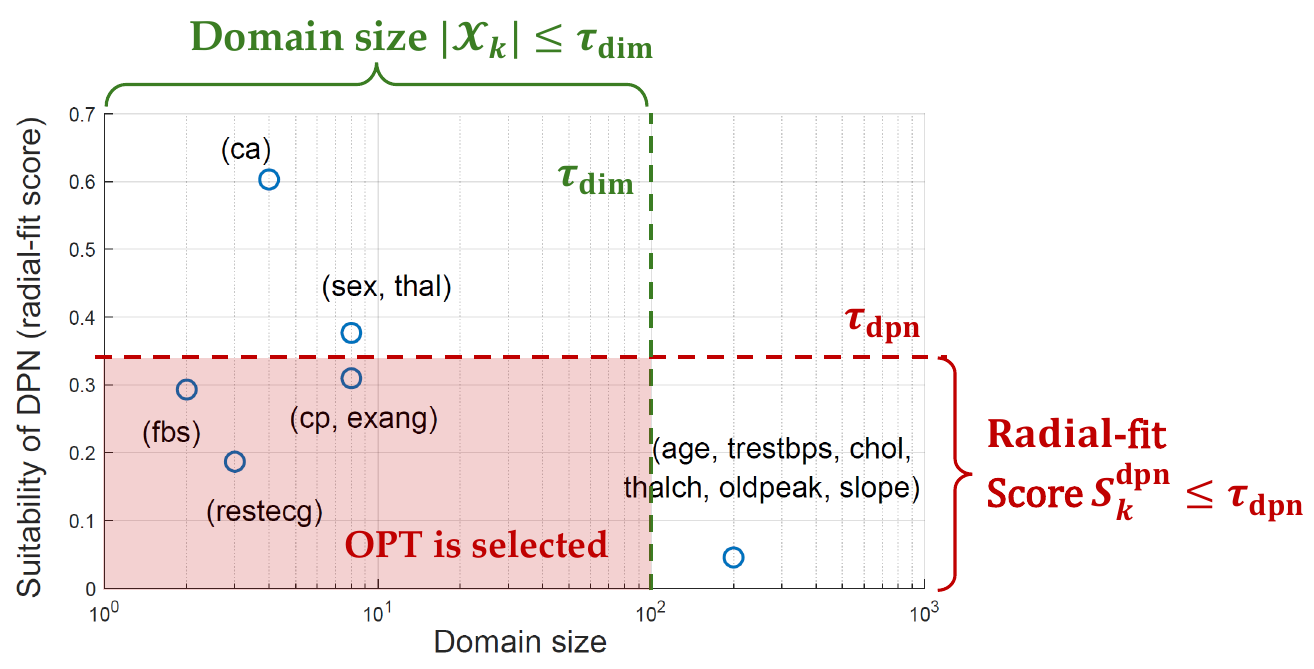}}
\label{}
\end{minipage}
\caption{Intuitive idea of perturbation mechanism selection.}
\label{fig:mechselect}
\end{figure}

Next, we introduce two \emph{DPN adequacy} scores that help decide whether a simple DPN is likely to be near-optimal for an attribute (or an attribute group): an \emph{MSE-likeness} score and a more general \emph{radial-fit} score.

\smallskip
\noindent\textbf{MSE-likeness score}. 
Our MSE-likeness score is motivated by a classical optimality result for mDP under the $\ell_1$ metric: among \emph{oblivious additive-noise} mechanisms, Laplace minimizes MSE for identity queries~\cite{koufogiannis2015optimality}. \SPRevision{We implement EM rather than Laplace, so this result provides intuition for the score, not an optimality guarantee for our mechanism.} Specifically, it motivates the heuristic that a DPN mechanism may suffice when task-induced loss is largely determined by perturbation magnitude and approximately quadratic in that magnitude. Accordingly, we fit the group loss $c_k(\mathbf{y}_k|\mathbf{x}_k)$ to the scaled squared-distance model $a_k d^2_{\mathcal{G}_k}(\mathbf{x}_k,\mathbf{y}_k)+b_k$ via least squares. Let $\mathcal{P}_k\triangleq\mathcal{X}_{\mathcal{G}_k}\times\mathcal{Y}_{\mathcal{G}_k}$. Then
\begin{equation}
\label{eq:mse-fit}
\small (a_k^\star,b_k^\star)
=
\arg\min_{a\ge 0,\, b\in\mathbb{R}}
\sum_{(\mathbf{x}_k,\mathbf{y}_k)\in\mathcal{P}_k}
\left(c_k(\mathbf{y}_k|\mathbf{x}_k)
-a d^2_{\mathcal{G}_k}(\mathbf{x}_k,\mathbf{y}_k)-b
\right)^2 .
\end{equation}

We then define the MSE-likeness score as a normalized fit quality:
\begin{equation}
\label{eq:mse-score}
\small S^{\mathrm{mse}}_k
\triangleq
1 - \frac{
\sqrt{\frac{\sum_{(\mathbf{x}_k,\mathbf{y}_k)\in\mathcal{P}_k}
\Big(c_k(\mathbf{y}_k|\mathbf{x}_k)-(a_k^\star \times d^2_{\mathcal{G}_k}(\mathbf{x}_k,\mathbf{y}_k)+b_k^\star)\Big)^2}{|\mathcal{X}_{\mathcal{G}_k}||\mathcal{Y}_{\mathcal{G}_k}|}}
}{
\mathrm{Std}\big(\mathbf{C}_k\big)+\delta
},
\end{equation}
where $\delta>0$ avoids division by zero. A larger $S^{\mathrm{mse}}_k$ indicates that $c_k(\mathbf{y}_k|\mathbf{x}_k)$ is well explained by a quadratic function of the metric radius, suggesting that a DPN mechanism is likely sufficient. Conversely, a small $S^{\mathrm{mse}}_k$ indicates non-quadratic or direction-dependent loss patterns, for which an optimization-based mechanism may be preferable. \looseness = -1

\SPRevision{We implement EM rather than Laplace; the cited result motivates the score but does not establish EM's optimality for our metrics or downstream tasks.}


\smallskip
\noindent\textbf{Radial-fit score.} The MSE-likeness score above provides a \emph{parametric} adequacy test by checking whether the loss is approximately quadratic in the noise magnitude. In practice, however, task-induced losses need not be quadratic. A broader (and often more realistic) condition under which DPN mechanisms are adequate is that the loss depends \emph{primarily} on the \emph{noise magnitude} and is \emph{monotone} in that magnitude, regardless of its exact functional form. Concretely, if
\begin{equation}
c_k(\mathbf{y}_k|\mathbf{x}_k) \approx g\left(d_{\mathcal{G}_k}(\mathbf{x}_k,\mathbf{y}_k)\right)
\quad \text{for some nondecreasing } g,
\end{equation}
then mechanisms whose conditional distribution depends only on $d_{\mathcal{G}_k}(\mathbf{x}_k,\mathbf{y}_k)$ match the dominant structure of the utility loss.

To quantify this property, we fit the best nondecreasing function $g^\star_k$ that predicts loss from the metric radius using isotonic regression:
\begin{equation}
g^\star_k = \arg\min_{g \ \text{nondecreasing}}
\sum_{(\mathbf{x}_k,\mathbf{y}_k)\in\mathcal{P}_k}
\big(c_k(\mathbf{y}_k|\mathbf{x}_k)-g(d_{\mathcal{G}_k}(\mathbf{x}_k,\mathbf{y}_k))\big)^2,
\label{eq:rad-fit-iso}
\end{equation}
Analogous to Eq.~\eqref{eq:mse-score}, we define the radial-fit score as a normalized fit quality:
\begin{equation}
\label{eq:rad-score}
S^{\mathrm{rad}}_k \triangleq 1 -
\frac{
\sqrt{\frac{\sum_{(\mathbf{x}_k,\mathbf{y}_k)\in\mathcal{P}_k}
\big(c_k(\mathbf{y}_k|\mathbf{x}_k)-g^\star_k(d_{\mathcal{G}_k}(\mathbf{x}_k,\mathbf{y}_k))\big)^2}{|\mathcal{X}_{\mathcal{G}_k}||\mathcal{Y}_{\mathcal{G}_k}|}}
}{\mathrm{Std}\big(\mathbf{C}_k\big)+\delta},
\end{equation}
where $\delta>0$ is a small constant. We interpret $S^{\mathrm{rad}}_k$ close to $1$ as evidence that the utility loss is largely \emph{radial} (i.e., mainly determined by the metric radius and approximately monotone), in which case distance-based DPN mechanisms are typically adequate. Conversely, a small $S^{\mathrm{rad}}_k$ indicates substantial non-radial or non-monotone structure, motivating an optimization-based design such as an LP mechanism.

\SPRevision{Notably, a central question in DepHDP is when OPT's potential utility gain warrants its offline computational cost. The selection thresholds expose this trade-off: tighter domain limits or stricter evidence of DPN inadequacy favor DPN, while OPT is reserved for tractable groups with poorly distance-aligned loss. Configuration is performed offline; deployment samples from fixed group mechanisms.}

\subsection{Stage 4: Heterogeneous Perturbation Optimization}
\label{subsec:stage4}
Given the selected mechanism for each attribute group, we jointly compose and optimize the group mechanisms under a single record-level mDP constraint, using the end-to-end loss surrogate from \textbf{Proposition~\ref{prop:lossdecompose}}. We formulate this step as the bilevel program in Eq.~(\ref{eq:bilevel}):
\begin{equation} 
\label{eq:bilevel} 
\small 
\boxed{ \begin{aligned} 
\mbox{Upper}&~\mbox{level:} \\ \min&\quad\beta_0+\sum_{k=1}^{K}\beta_k\mathcal{L}_k(\epsilon_k) 
\\ 
\mathrm{s.t.} & \textstyle \sum_{k=1}^{K}\epsilon_{k}^2\le \epsilon^2~\mbox{(group-wise composition under $\ell_2$ metric)},\\
&\epsilon_{k} \geq 0,~\forall ~k = 1,..., K. \\ 
&\boxed{ \begin{aligned} &\mbox{Lower}~\mbox{level:} \\ 
& \mbox{For each $\mathcal{M}_k$ selected as OPT} \\ 
& \mathcal{L}_k(\epsilon_k) = \min_{\mathcal{M}_k \in \mathcal{Q}_k(\epsilon_k)} \mathbb{E}_{\mathbf{x}_k\sim \boldsymbol\pi_{k}}\left[ \mathbb{E}_{\mathbf{y}_k\sim \mathcal{M}_k(\cdot| \mathbf{x}_k)}\big[c_k(\mathbf{y}_k|\mathbf{x}_k)\big] \right] \\ 
\end{aligned}} 
\end{aligned}} 
\end{equation}
Here, $\mathcal{Q}_{k}(\epsilon_k)$ is the feasible set of OPT-based mechanisms that satisfy the within-group mDP constraint under budget $\epsilon_k$, i.e.,
\begin{equation}
\label{eq:Oeps_def_group}
\mathcal{Q}_k(\epsilon_k)\triangleq
\Bigl\{\mathcal{M}_k:\mathcal{X}_{\mathcal{G}_k}\to\mathcal{Y}_{\mathcal{G}_k}  \bigm| \mathcal{M}_k~\text{satisfies~$(\epsilon_k, d_{\mathcal{G}_k})$-mDP}\Bigr\}
\end{equation}
Intuitively, the upper level allocates group privacy budgets $\boldsymbol{\epsilon}=(\epsilon_1,\ldots,\epsilon_K)$ to minimize the predicted end-to-end loss subject to \emph{group-wise composition under $\ell_2$ metric (\textbf{Theorem \ref{thm:composition}})}, while the lower level computes each group \emph{loss-budget function} $\mathcal{L}_k(\epsilon_k)$ by optimizing $\mathcal{M}_k$ via an LP. The resulting group mechanisms then compose into the final multi-attribute mechanism $\mathcal{M}=(\mathcal{M}_1,\ldots,\mathcal{M}_K)$. Next, we describe how we solve the two-level optimization problems:

\DEL{
The \emph{upper-level} (leader) allocates the global privacy budget across $K$ attribute
groups by choosing $\boldsymbol{\epsilon}=(\epsilon_1,\dots,\epsilon_K)$, aiming to
minimize the overall loss of the composed mechanism
$\mathcal{M}(\boldsymbol{\epsilon})$, where the objective aggregates the
\emph{group-wise optimal} losses:
\begin{equation}
\min_{\boldsymbol{\epsilon}} \mathcal{L}(\mathcal{M}(\boldsymbol{\epsilon}))
~=~ f\left(\mathcal{L}_1^\star(\epsilon_1),\dots,\mathcal{L}_K^\star(\epsilon_K)\right)
\quad
\text{s.t. } \sum_{k=1}^K \epsilon_k^2 \le \epsilon^2,\ \epsilon_k\ge 0.
\end{equation}
At the \emph{lower level}, for each group $k$ whose mechanism type is selected as OPT,
the follower instantiates the mechanism $\mathcal{M}_k$ under budget $\epsilon_k$ to
minimize the expected task loss, yielding the value function
$\mathcal{L}_k(\epsilon_k)$:
\begin{equation}
\mathcal{L}_k(\epsilon_k)
~:=~
\min_{\mathcal{M}_k}
\mathbb{E}_{\mathbf{x}\sim\boldsymbol\pi}
\mathbb{E}_{\mathbf{y}\sim \mathcal{M}_k(\cdot| \mathbf{x})}
\left[\mathcal{L}(\mathbf{x},\mathbf{y})\right].
\end{equation}

Consider the bilevel program in Eq.~(\ref{eq:bilevel}), where the upper-level allocates
a privacy budget $\epsilon_k$ to each attribute group $k\in[K]$. For any fixed budget
$\epsilon_k$, the lower-level yields the minimum achievable expected loss, which we
denote by the loss-budget function $\mathcal{L}_k(\epsilon_k)$.
Once these functions are obtained, the bilevel problem reduces to the following single-level \emph{privacy budget allocation} problem:
\begin{equation}
\label{eq:single-level}
\min_{\{\epsilon_k\}_{k=1}^K}\quad
\beta_0 + \sum_{k=1}^K \beta_k \mathcal{L}_{k}(\epsilon_k)
\qquad
\text{s.t.}\quad
\sum_{k=1}^K \epsilon_k^2 \le \epsilon^2,\ \ \epsilon_k\ge 0.
\end{equation}}

\smallskip
\noindent\textbf{Lower level problem.} To compute the \emph{loss-budget function} $\mathcal{L}_k(\epsilon_k)$ for an attribute group $\mathcal{G}_k$, we solve the following LP problem that finds the minimum expected within-group utility loss under an $\epsilon_k$-mDP constraint (i.e., $\mathcal{M}_k \in \mathcal{Q}_k(\epsilon_k)$):
\begin{align}
\label{eq:lp_loss}
\mathcal{L}_k(\epsilon_k)
=
\min_{\mathcal{M}_k \in \mathcal{Q}_k(\epsilon_k)}
\sum_{\mathbf{x}_k\in\mathcal{X}_k}\pi_k(\mathbf{x}_k)
\sum_{\mathbf{y}_k\in\mathcal{Y}_k}
\mathcal{M}_k(\mathbf{y}_k|\mathbf{x}_k)c_k(\mathbf{y}_k|\mathbf{x}_k).
\end{align}
According to our mechanism selection strategy (\textbf{Section \ref{subsec:stage3}}), an OPT-based mechanism is chosen only when the group domain size $|\mathcal{X}_{\mathcal{G}_k}|$ is sufficiently small, so the LP in Eq.~(\ref{eq:lp_loss}) remains tractable and can be solved efficiently using standard LP solvers (e.g., the dual simplex method~\cite{MATLABlinprog}). For moderately larger domains (e.g., $|\mathcal{X}_{\mathcal{G}_k}|\geq 10^2$), we improve scalability via decomposition techniques such as Benders decomposition~\cite{Qiu-IJCAI2024}.

Notably, the lower-level problem need not handle DPN groups. For a predefined-noise mechanism,
$\mathcal{M}_k$ is fully specified once $\epsilon_k$ is given, so its expected loss can be evaluated
directly:
\begin{equation}
\label{eq:pnd_loss}
\mathcal{L}_k(\epsilon_k)
\triangleq
\sum_{\mathbf{x}_k\in\mathcal{X}_{\mathcal{G}_k}}\boldsymbol\pi_{k}(\mathbf{x}_k)\sum_{\mathbf{y}_k\in\mathcal{Y}_{\mathcal{G}_k}}
\mathcal{M}_k(\mathbf{y}_k|\mathbf{x}_k)c_k(\mathbf{y}_k|\mathbf{x}_k).
\end{equation}


\smallskip
\noindent\textbf{Upper level problem.}  Once each $\mathcal{L}_k(\epsilon_k)$ ($k= 1, ..., K$) is obtained, the bilevel problem formulated in Eq. (\ref{eq:bilevel}) reduces to the following single-level \emph{privacy budget allocation} problem:
\begin{equation}
\label{eq:single-level}
\min_{\{\epsilon_k\}_{k=1}^K}~\beta_0 + \sum_{k=1}^K \beta_k \mathcal{L}_{k}(\epsilon_k)
\qquad
\text{s.t.}~\sum_{k=1}^K \epsilon_k^2 \le \epsilon^2,\ \ \epsilon_k\ge 0.
\end{equation}
where $\beta_0, \beta_1,..., \beta_K$ are all non-negative. 
This problem can be solved efficiently via a Lagrangian approach by introducing a dual variable for the global $\ell_2$-composition constraint~\cite{Luenberger2008Linear}. The resulting \emph{Karush--Kuhn--Tucker (KKT)} conditions characterize stationary allocations by balancing the marginal utility--budget tradeoff across groups: at a stationary point, the marginal decrease $\partial \mathcal{L}_k(\epsilon_k)/\partial \epsilon_k$ is equalized across $k$ up to the Lagrange multiplier. This yields an efficiently solvable condition for computing the budget distribution under the quadratic constraint (the Lagrangian algorithm details are deferred to \textbf{Appendix~\ref{subsec:langrange}}). 

Notably, when the upper-level objective ($\beta_0 + \sum_{k=1}^K \beta_k \mathcal{L}_{k}(\epsilon_k)$) is \emph{non-convex}, KKT-based methods may converge only to a local optimum and do not guarantee global optimality. To obtain a tractable alternative, we next introduce two \emph{convex surrogates} of the utility loss. Optimizing these surrogates yields an allocation that is typically close to the optimum in practice, and, importantly, their optimal values provide a \emph{lower bound} on the original non-convex objective (\textbf{Proposition \ref{prop:convexbound}}), enabling us to quantify the suboptimality gap of our solution.


\begin{figure}[t]
\centering
\begin{minipage}{0.40\textwidth}
\centering
  \subfigure{
\includegraphics[width=1.00\textwidth]{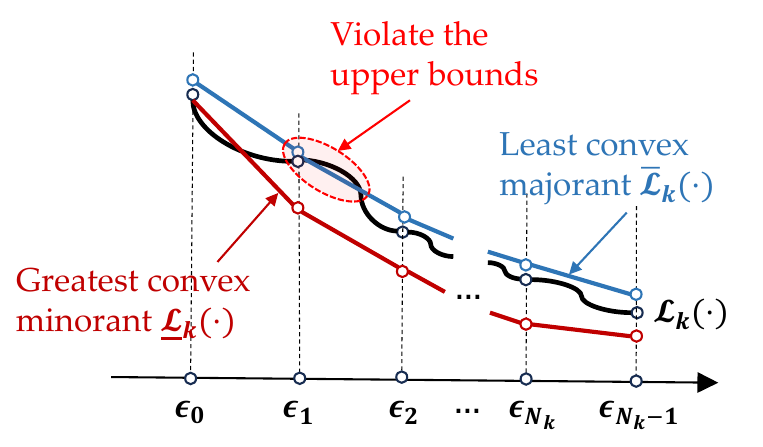}}
\label{}
\end{minipage}
\caption{Convex surrogates of the loss-budget function $\mathcal{L}_k(\cdot)$.}
\label{fig:convexsurrogate}
\end{figure}

\smallskip
\noindent\textbf{Convex surrogates of the loss-budget function}. 
To obtain a tractable loss-budget function $\mathcal{L}_k(\epsilon_k)$, we evaluate $\mathcal{L}_k(\epsilon_k)$ on a uniform grid $\{\epsilon_k^{(i)}\}_{i=1}^{N_k}$ with equal interval size. Here, we let $\mathbf{u}_k=[u_{k,1},\dots,u_{k,N_k}]^\top$ denote surrogate values at the grid
points. Discrete convexity is enforced by requiring nondecreasing slopes:
\begin{equation}
\frac{u_{k,i+1}-u_{k,i}}{\epsilon_k^{(i+1)}-\epsilon_k^{(i)}}
\le
\frac{u_{k,i+2}-u_{k,i+1}}{\epsilon_k^{(i+2)}-\epsilon_k^{(i+1)}},
~ i=1,\dots,N_k-2,
\label{eq:disc-conv}
\end{equation}
which can be written compactly as $\mathbf{A}_k^{\mathrm{conv}}\mathbf{u}_k \le \mathbf{0}$. We then approximate $\mathcal{L}_k(\epsilon_k)$ using two convex surrogates:

\smallskip
\noindent \textcircled{1} \textbf{Greatest convex minorant (GCM)}  $\underline{\mathcal{L}}_k(\epsilon_k)$: the largest convex function that underestimates the sampled points, i.e., $\underline{\mathcal{L}}_k(\epsilon_k^{(i)}) \le \mathcal{L}_k(\epsilon_k^{(i)})$ for all $i$. The GCM is obtained by maximizing the surrogate values subject to convexity and pointwise under-approximation:
\begin{equation}
\label{eq:gcm-lp}
\small 
\underline{\mathbf{u}}_k \in \arg\max_{\mathbf{u}_k}\ \mathbf{1}^\top \mathbf{u}_k\quad
\text{s.t.}~\mathbf{A}_k^{\mathrm{conv}}\mathbf{u}_k \le \mathbf{0},
u_{k,i} \le \mathcal{L}_k\left(\epsilon_k^{(i)}\right)\ \forall i.
\end{equation}   

\smallskip
\noindent \textcircled{2} \textbf{Least convex majorant (LCM)} $\overline{\mathcal{L}}_k(\epsilon_k)$: the smallest convex function that overestimates the sampled points, i.e., $\overline{\mathcal{L}}_k(\epsilon_k^{(i)}) \ge \mathcal{L}_k(\epsilon_k^{(i)})$ for all $i$. The LCM is obtained by minimizing the surrogate values subject to convexity and pointwise over-approximation:
\begin{equation}
\label{eq:lcm-lp}
\small 
\overline{\mathbf{u}}_k \in \arg\min_{\mathbf{u}_k}\ \mathbf{1}^\top \mathbf{u}_k \quad \text{s.t.}~ \mathbf{A}_k^{\mathrm{conv}}\mathbf{u}_k \le \mathbf{0},
u_{k,i} \ge \mathcal{L}_k\left(\epsilon_k^{(i)}\right)\ \forall i.
\end{equation}

The LPs in Eqs.~(\ref{eq:gcm-lp})--(\ref{eq:lcm-lp}) guarantee
$\underline{\mathcal{L}}_k(\epsilon_k^{(i)}) \le \mathcal{L}_k(\epsilon_k^{(i)}) \le
\overline{\mathcal{L}}_k(\epsilon_k^{(i)})$ at the sampled grid points
$\{\epsilon_k^{(i)}\}_{i=1}^{N_k}$. To use these surrogates in the continuous upper-level optimization, we extend them to all $\epsilon_k$ via piecewise-linear interpolation between adjacent grid points. Because $\mathcal{L}_k(\epsilon_k)$ is continuous, but may vary nonlinearly between samples, a direct interpolation of the sampled envelopes can still violate the intended bounds between grid points (cf.~Fig.~\ref{fig:convexsurrogate}). 

To avoid such violations, we apply a conservative \emph{horizontal shift} to the discrete envelopes before interpolation: the upper envelope is shifted right by one grid step $\Delta\epsilon_k$ and the lower envelope is shifted left by $\Delta\epsilon_k$, yielding valid pointwise bounds over the entire budget range in practice. After shifting, we apply boundary corrections (discard out-of-range endpoints and insert extrapolated/aligned boundary points) and then use piecewise-linear interpolation to obtain continuous, monotone envelopes 
$\underline{\mathcal{L}}'_k(\epsilon)\le \mathcal{L}_k(\epsilon)\le \overline{\mathcal{L}}'_k(\epsilon)$
over the full budget interval while preserving convexity of the upper envelope (see \textbf{Appendix~\ref{subsec:horzonshift}} for details). We denote the resulting continuous bounds by
$\underline{\mathcal{L}}'_k(\epsilon_k)$ and $\overline{\mathcal{L}}'_k(\epsilon_k)$.

\smallskip
\noindent\textbf{Upper-level solution.} We solve the upper-level budget allocation problem using the two shifted convex envelope surrogates $\{\underline{\mathcal{L}}'_k(\epsilon_k)\}_{k=1}^K$ and $\{\overline{\mathcal{L}}'_k(\epsilon_k)\}_{k=1}^K$.
Let the resulting allocations be denoted by $\underline{\epsilon}_k^\star$ and $\overline{\epsilon}_k^\star$ for $k=1,\ldots,K$, respectively.
We then define a record-level expected loss lower bound: $\mathcal{L}_{\text{lower}}
\triangleq \beta_0+\sum_{k=1}^{K}\beta_k\underline{\mathcal{L}}'_{k}(\underline{\epsilon}_k^\star)$. 
In DepHDP, we 
select the privacy allocation $\boldsymbol\epsilon$ that achieves
$\min\{\mathcal{L}(\underline{\boldsymbol\epsilon}^\star),\mathcal{L}(\overline{\boldsymbol\epsilon}^\star)\}$ as the final privacy budget assignment, where $\mathcal{L}(\boldsymbol\epsilon) \triangleq \beta_0+\sum_{k=1}^{K}\beta_k\mathcal{L}_{k}(\epsilon_k)$. Let $\mathcal{L}_{\text{opt}}$ denote the globally optimal value of the original (possibly non-convex) upper-level problem.
The following proposition provides a sandwich bound that quantifies how close our returned solution is to $\mathcal{L}_{\text{opt}}$.

\DEL{
\begin{proposition}[Sandwich bound via convex envelopes]
\label{prop:convexbound}
The optimal value $\mathcal{L}_{\text{opt}}$ satisfies
$\mathcal{L}_{\text{lower}}~\le~\mathcal{L}_{\text{opt}}~\le~\min\{\mathcal{L},\mathcal{L}_{\text{upper}}\}$.
\end{proposition}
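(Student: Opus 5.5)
The plan is to prove the two inequalities separately, working with the single-level problem in Eq.~(\ref{eq:single-level}) over the feasible set $\mathcal{E}\triangleq\{\boldsymbol\epsilon\ge 0:\sum_k\epsilon_k^2\le\epsilon^2\}$. I read $\mathcal{L}$ in the statement as $\min\{\mathcal{L}(\underline{\boldsymbol\epsilon}^\star),\mathcal{L}(\overline{\boldsymbol\epsilon}^\star)\}$, the value actually returned by DepHDP. I read $\mathcal{L}_{\text{upper}}\triangleq\beta_0+\sum_k\beta_k\overline{\mathcal{L}}'_k(\overline{\epsilon}_k^\star)$, the optimal value of the LCM-surrogate problem. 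The argument uses three ingredients: nonnegativity of the coefficients $\beta_k$, the pointwise envelope property $\underline{\mathcal{L}}'_k(\epsilon_k)\le\mathcal{L}_k(\epsilon_k)\le\overline{\mathcal{L}}'_k(\epsilon_k)$ on the whole budget interval (guaranteed by the horizontal-shift construction), and the fact that both surrogate problems share the feasible set $\mathcal{E}$ with the original problem.

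\emph{Lower bound.} Let $\boldsymbol\epsilon^{\text{opt}}\in\mathcal{E}$ be a global minimizer of the original upper-level problem. A minimizer exists because $\mathcal{E}$ is compact and each $\mathcal{L}_k$ is continuous, as stated above. Since $\beta_k\ge0$ and $\underline{\mathcal{L}}'_k\le\mathcal{L}_k$ pointwise,
\[
\mathcal{L}_{\text{opt}}=\beta_0+\sum_k\beta_k\mathcal{L}_k(\epsilon^{\text{opt}}_k)\ge\beta_0+\sum_k\beta_k\underline{\mathcal{L}}'_k(\epsilon^{\text{opt}}_k)\ge\beta_0+\sum_k\beta_k\underline{\mathcal{L}}'_k(\underline{\epsilon}^\star_k)=\mathcal{L}_{\text{lower}}.
\]
The last step holds because $\underline{\boldsymbol\epsilon}^\star$ minimizes the GCM surrogate over the same set $\mathcal{E}$. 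This surrogate problem is convex, so the Lagrangian/KKT procedure of Appendix~\ref{subsec:langrange} attains its global minimum; this needs to be noted explicitly.

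\emph{Upper bound.} Both $\underline{\boldsymbol\epsilon}^\star$ and $\overline{\boldsymbol\epsilon}^\star$ lie in $\mathcal{E}$, so each is feasible for the original problem. Hence $\mathcal{L}_{\text{opt}}\le\min\{\mathcal{L}(\underline{\boldsymbol\epsilon}^\star),\mathcal{L}(\overline{\boldsymbol\epsilon}^\star)\}$. In addition, $\mathcal{L}(\overline{\boldsymbol\epsilon}^\star)\le\beta_0+\sum_k\beta_k\overline{\mathcal{L}}'_k(\overline{\epsilon}^\star_k)=\mathcal{L}_{\text{upper}}$ by the majorant property, so $\mathcal{L}_{\text{opt}}$ is bounded by both quantities. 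Combining the two chains gives the sandwich. It also gives the certified gap $\min\{\mathcal{L},\mathcal{L}_{\text{upper}}\}-\mathcal{L}_{\text{lower}}$ for the returned allocation.

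\emph{Main obstacle.} The delicate step is the pointwise envelope inequality between grid points. The LPs in Eqs.~(\ref{eq:gcm-lp})--(\ref{eq:lcm-lp}) only guarantee it at the grid nodes. To extend it, I would use monotonicity of $\mathcal{L}_k$ in $\epsilon_k$: a larger budget enlarges $\mathcal{Q}_k(\epsilon_k)$, so the OPT loss is nonincreasing, and I would assume the same for EM. Then for $\epsilon\in[\epsilon^{(i)},\epsilon^{(i+1)}]$ we have $\mathcal{L}_k(\epsilon^{(i+1)})\le\mathcal{L}_k(\epsilon)\le\mathcal{L}_k(\epsilon^{(i)})$. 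Shifting the majorant right by $\Delta\epsilon_k$ and the minorant left by $\Delta\epsilon_k$, together with monotonicity of the interpolated envelopes, makes the bounds hold on each whole cell. The boundary cells need the boundary corrections of Appendix~\ref{subsec:horzonshift}. Where this monotonicity is not guaranteed, the statement holds conditionally on the envelope property, which matches the "conditional suboptimality bound" wording in the contributions.
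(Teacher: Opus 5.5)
Your proposal is correct and follows the paper's own argument. You derive the pointwise sandwich from $\beta_k\ge 0$ and the envelope property, get the lower bound from optimality of $\underline{\boldsymbol\epsilon}^\star$ over the same feasible set $\mathcal{E}$, and get the upper bound from feasibility of $\underline{\boldsymbol\epsilon}^\star$ and $\overline{\boldsymbol\epsilon}^\star$; your treatment of envelope validity between grid points matches the paper's Proposition~\ref{prop:shifted_PL_valid}, which likewise assumes $\mathcal{L}_k$ is non-increasing. Your remark that $\mathcal{L}_{\text{lower}}$ is certified only if $\underline{\boldsymbol\epsilon}^\star$ is a global, not merely KKT, minimizer of the lower-envelope problem is a worthwhile caveat, which the paper leaves implicit by defining $\underline{\boldsymbol\epsilon}^\star$ as an $\arg\min$.
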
}

\begin{proposition}[Sandwich bound via convex envelopes]
\label{prop:convexbound}
The optimal value $\mathcal{L}_{\text{opt}}$ satisfies
$$\mathcal{L}_{\text{lower}}~\le~\mathcal{L}_{\text{opt}}~\le~\min\{\mathcal{L}(\underline{\boldsymbol\epsilon}^\star),\mathcal{L}(\overline{\boldsymbol\epsilon}^\star)\}.$$ 
The detailed proof can be found in  \textbf{Appendix~\ref{subsec:proof:prop:convexbound}}.
\end{proposition}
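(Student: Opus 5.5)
The two inequalities are handled separately. The upper bound is immediate. The lower bound follows from pointwise domination of each $\mathcal{L}_k$ by its shifted greatest convex minorant, combined with nonnegativity of the surrogate coefficients.

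\emph{Upper bound.} Both allocations $\underline{\boldsymbol\epsilon}^\star$ and $\overline{\boldsymbol\epsilon}^\star$ are obtained by optimizing over the same feasible set $\mathcal{F}\triangleq\{\boldsymbol\epsilon\ge 0:\sum_k\epsilon_k^2\le\epsilon^2\}$ as the original upper-level problem in Eq.~(\ref{eq:single-level}); only the objective differs, since the envelopes replace $\mathcal{L}_k$. Each allocation is therefore feasible for the original problem. Because $\mathcal{L}_{\text{opt}}$ is the minimum of $\mathcal{L}(\boldsymbol\epsilon)$ over $\mathcal{F}$, we get $\mathcal{L}_{\text{opt}}\le\mathcal{L}(\underline{\boldsymbol\epsilon}^\star)$ and $\mathcal{L}_{\text{opt}}\le\mathcal{L}(\overline{\boldsymbol\epsilon}^\star)$. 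Taking the minimum of the two gives the right-hand inequality.

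\emph{Lower bound.} Let $\boldsymbol\epsilon^{\text{opt}}\in\mathcal{F}$ attain $\mathcal{L}_{\text{opt}}$. The objective is continuous: each $\mathcal{L}_k$ is continuous, as stated in the paper, and $\mathcal{F}$ is compact, so a minimizer exists. If one prefers not to rely on continuity, the same argument works with an $\eta$-optimal point and $\eta\to 0$. The key ingredient is the pointwise envelope property established in the construction of the shifted envelopes (Appendix~\ref{subsec:horzonshift}), namely $\underline{\mathcal{L}}'_k(\epsilon_k)\le\mathcal{L}_k(\epsilon_k)$ for every $\epsilon_k$ in the budget interval. Since $\beta_k\ge0$ (nonnegative least squares), these inequalities can be multiplied by $\beta_k$ and summed:
\[
\mathcal{L}_{\text{opt}}=\beta_0+\sum_{k}\beta_k\mathcal{L}_k(\epsilon^{\text{opt}}_k)\ \ge\ \beta_0+\sum_k\beta_k\underline{\mathcal{L}}'_k(\epsilon^{\text{opt}}_k).
\]
The right-hand side is at least $\min_{\boldsymbol\epsilon\in\mathcal{F}}\bigl[\beta_0+\sum_k\beta_k\underline{\mathcal{L}}'_k(\epsilon_k)\bigr]$. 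By definition, $\underline{\boldsymbol\epsilon}^\star$ attains this minimum, so the minimum equals $\mathcal{L}_{\text{lower}}$.

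\emph{Main obstacle.} The delicate step is the lower inequality, because it requires $\underline{\boldsymbol\epsilon}^\star$ to be a \emph{global} minimizer of the minorant problem. This is where convexity enters. Each $\underline{\mathcal{L}}'_k$ is a piecewise-linear interpolation of a convex sequence, obtained from Eq.~(\ref{eq:gcm-lp}) and then shifted, and is therefore convex. With $\beta_k\ge0$ the weighted sum is convex, and $\mathcal{F}$ is convex. The Lagrangian/KKT procedure of Appendix~\ref{subsec:langrange} thus returns a global minimizer, which is exactly what the argument needs.

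I would also check that the left shift and the boundary corrections preserve convexity and the pointwise bound on the whole interval. The proposition's guarantee is conditional on this property, which the paper asserts holds in practice. If that check fails, I would state the result as conditional on the validity of the envelope bound.
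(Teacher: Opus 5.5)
Your proof is correct and follows the paper's argument. The upper bound comes from feasibility of $\underline{\boldsymbol\epsilon}^\star$ and $\overline{\boldsymbol\epsilon}^\star$ in the same constraint set, and the lower bound comes from the chain $\mathcal{L}_{\text{lower}}\le \beta_0+\sum_k\beta_k\underline{\mathcal{L}}'_k(\epsilon^{\text{opt}}_k)\le\mathcal{L}_{\text{opt}}$, using the pointwise minorant property and $\beta_k\ge 0$. The paper simply defines $\underline{\boldsymbol\epsilon}^\star$ as a global minimizer of the minorant objective, so your convexity discussion is not needed for the proposition itself. Note also that the paper only asserts convexity of the shifted upper envelope; convexity of $\underline{\mathcal{L}}'_k$ depends on how the boundary point $\underline{\ell}'_N$ is inserted, so your ``therefore convex'' claim is not guaranteed by the construction.
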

\SPRevision{Notably, 
Proposition~\ref{prop:convexbound} quantifies allocation suboptimality for the fixed Stage~4 surrogate, while the task experiments evaluate the utility achieved by the complete pipeline.} 

\SPRevision{
\subsection{Computational Complexity and Scalability}
\label{subsec:complexity}
We now summarize offline and deployment costs. Let $N$ be the reference-data size, $L$ the attribute count, and $K$ the group count.

\smallskip
\noindent\textbf{Stages 1--2.}
After discretization, sparse pairwise dependence estimation takes $O(NL^2)$ time, followed by $O(L^2)$ graph processing. Evaluating $Q_k$ candidate pairs per group costs $O(\sum_k Q_k c_k^{\mathrm{eval}})$, including context averaging and task evaluation. Here, $Q_k=|\mathcal{X}_{\mathcal{G}_k}||\mathcal{Y}_{\mathcal{G}_k}|$ for exhaustive evaluation and is smaller with sampling. Surrogate construction additionally evaluates full-record and group-isolated losses for $T$ calibration pairs, stores an $O(TK)$ design matrix, and performs solver-dependent nonnegative least-squares fitting.

\smallskip
\noindent\textbf{Stage 3.}
Given precomputed losses and distances for $P_k$ pairs, quadratic fitting costs $O(\sum_k P_k)$; sorting and isotonic fitting cost $O(\sum_k P_k\log P_k)$, plus validation.

\smallskip
\noindent\textbf{Stage 4.}
With precomputed distances, EM construction and loss evaluation cost $O(|\mathcal{X}_{\mathcal{G}_k}||\mathcal{Y}_{\mathcal{G}_k}|)$ per budget. OPT has $O(|\mathcal{X}_{\mathcal{G}_k}||\mathcal{Y}_{\mathcal{G}_k}|)$ decision variables and $O(|\mathcal{X}_{\mathcal{G}_k}|^2|\mathcal{Y}_{\mathcal{G}_k}|)$ privacy constraints; solver time and memory depend on sparsity and tolerance. For each group, we evaluate EM or solve the OPT LP at sampled budgets to construct its loss--budget profile. We then fit convex envelopes, optimize the $K$ group budgets, and construct the final mechanisms at the allocated budgets.

\smallskip
Finally, storing all group mechanisms explicitly requires $O(\sum_k|\mathcal{X}_{\mathcal{G}_k}||\mathcal{Y}_{\mathcal{G}_k}|)$ space. Sampling one output per group costs $O(\sum_k|\mathcal{Y}_{\mathcal{G}_k}|)$ per record by scanning probability rows. Optional alias-table preprocessing reduces sampling to $O(K)$, excluding input indexing and output assembly. Notably, restricting OPT to tractable groups improves scalability over monolithic OPT on the evaluated domains, but does not eliminate domain-size and storage limitations.
}

\section{Performance Evaluation}
\label{sec:exp}
In this section, we evaluate the performance of DepHDP. 
\textbf{\S\ref{subsec:setup}} describes the experimental setup. 
\textbf{\S\ref{subsec:UL}} compares DepHDP with baseline methods in terms of utility loss. 
\SPRevision{Additional experiments supporting adequacy-based selection and task-aware allocation, assessing surrogate and threshold sensitivity, and comparing against all-DPN, full-record OPT, and other optimization-based methods are provided in 
Appendix~\ref{sec:fulladdexp}.}


\subsection{Experimental Setup}
\label{subsec:setup}
\noindent \textbf{Datasets.} We evaluate DepHDP on three representative multi-attribute case studies with different attribute types and downstream utility objectives: 
\textbf{Case I:} \emph{Frequency Estimation (Adult Census)}~\cite{uci_adult_1996}, with 15 demographic and socioeconomic attributes, where utility loss is the discrepancy between privatized and reference marginal frequency distributions; 
\textbf{Case II:} \emph{Crowdsourcing Aggregation (Wine Quality)}~\cite{uci_wine_quality_2009}, with 5 quality-related attributes, where utility loss is the deviation of majority-vote aggregation results from the non-private outputs; and 
\textbf{Case III:} \emph{Entropy-Based Dependency Preservation (Heart Disease)}~\cite{uci_heart_disease_1989}, with 13 mixed clinical/demographic attributes, where utility loss is the perturbation-induced change in information-theoretic statistics, including marginal entropy, joint entropy, and total correlation. \looseness = -1

\SPRevision{Our evaluation simulates a local-privacy deployment in which the held-out test records constitute the protected population: each test user keeps their raw record locally and sends to the server only a privatized report generated by the finalized $\epsilon$-mDP randomizer. For each of the three case studies, we use the same fitting/test protocol, randomly selecting 62.5\% of the records as the fitting split and the remaining 37.5\% as the test split. The fitting split is used only as offline calibration information for DepHDP, including estimating attribute domains and discretization bins, pairwise dependency scores for grouping, empirical priors or feature statistics, DPN-adequacy scores, utility-loss profiles, and the surrogate model for budget allocation. It is not a substitute for the protected users' reports: it captures expected population structure and task-loss geometry, but not the realized records in the held-out test population. Thus, directly using the fitting prior or sampling from the fitting split would provide only historical or population-level information, whereas privatized test reports provide record-specific signal from the actual protected users, which is important for tasks such as recommendation, navigation, and crowdsourcing decisions.}

Once the offline configuration is fixed, no raw test record is used for mechanism configuration, parameter tuning, or budget allocation. During evaluation, each test record is locally perturbed under the global privacy budget $\epsilon$, and task degradation is measured using the task-specific utility-loss definition in Eq.~\eqref{eq:loss}. Raw test records are accessed only by the offline evaluator to compute non-private reference statistics and benchmarking utility loss. Although the datasets used in our experiments are publicly available, we use them solely for reproducible benchmarking rather than as already-public user disclosures.

\begin{table*}[t]
\centering
\caption{Utility loss across different perturbation methods (Mean$\pm$1.96$\times$standard deviation).}
\label{tab:UL}
\scriptsize
\begin{tabular}{p{1.38cm} |  p{1.15cm} p{1.15cm} p{1.15cm} p{1.15cm} p{1.15cm} p{1.15cm} p{1.15cm} p{1.15cm} p{1.15cm} p{1.15cm} } 
\toprule
\multicolumn{11}{c }{Case I: Frequency Estimation}\\ 
 \hline
Method & $\epsilon = 1.0$& $\epsilon = 2.0$& $\epsilon = 3.0$& $\epsilon = 4.0$ & $\epsilon = 5.0$ & $\epsilon = 6.0$ & $\epsilon = 7.0$ & $\epsilon = 8.0$ & $\epsilon = 9.0$ & $\epsilon = 10.0$ \\ 
 \hline
SPL + DPN & 19.84 ± 0.2 & 18.81 ± 0.2 & 17.74 ± 0.2 & 16.64 ± 0.2 & 15.54 ± 0.2 & 14.45 ± 0.2 & 13.40 ± 0.2 & 12.42 ± 0.2 & 11.52 ± 0.2 & 10.70 ± 0.2 \\
SPL + RMP & 18.17 ± 0.2 & 17.64 ± 0.2 & 16.80 ± 0.2 & 15.91 ± 0.2 & 14.95 ± 0.2 & 13.97 ± 0.2 & 13.01 ± 0.2 & 12.11 ± 0.2 & 11.28 ± 0.2 & 10.51 ± 0.2 \\
\hline
ALC + DPN & 19.27 ± 0.2 & 17.98 ± 0.2 & 16.69 ± 0.2 & 15.41 ± 0.2 & 14.18 ± 0.2 & 13.02 ± 0.2 & 12.00 ± 0.2 & 11.10 ± 0.1 & 10.31 ± 0.1 & 9.60 ± 0.1 \\
ALC + RMP & 17.42 ± 0.2 & 16.45 ± 0.1 & 15.50 ± 0.1 & 14.56 ± 0.1 & 13.63 ± 0.1 & 12.73 ± 0.1 & 11.84 ± 0.1 & 11.00 ± 0.1 & 10.23 ± 0.1 & 9.54 ± 0.1 \\
\hline
\rowcolor{lightgray!30}
DepHDP-m-s  & 8.28 ± 0.1 & 7.67 ± 0.1 & 7.08 ± 0.1 & 6.55 ± 0.1 & 6.07 ± 0.1 & 5.64 ± 0.1 & 5.25 ± 0.1 & 4.89 ± 0.1 & 4.57 ± 0.1 & 4.27 ± 0.1 \\
\rowcolor{lightgray!30}
DepHDP-r-s  & 12.45 ± 0.1 & 11.70 ± 0.1 & 10.82 ± 0.1 & 9.99 ± 0.1 & 9.25 ± 0.1 & 8.60 ± 0.1 & 8.02 ± 0.1 & 7.50 ± 0.1 & 7.03 ± 0.1 & 6.61 ± 0.1 \\
\hline
\rowcolor{lightgray!50}
DepHDP-m  & 8.28 ± 0.1 & 7.67 ± 0.1 & 7.08 ± 0.1 & 6.55 ± 0.1 & 6.07 ± 0.1 & 5.64 ± 0.1 & 5.25 ± 0.1 & 4.89 ± 0.1 & 4.57 ± 0.1 & 4.27 ± 0.1 \\
\rowcolor{lightgray!50}
DepHDP-r  & 12.45 ± 0.1 & 11.70 ± 0.1 & 10.82 ± 0.1 & 9.99 ± 0.1 & 9.25 ± 0.1 & 8.60 ± 0.1 & 8.02 ± 0.1 & 7.50 ± 0.1 & 7.03 ± 0.1 & 6.61 ± 0.1 \\
\hline
\rowcolor{lightgray!70}
Bound-m  & 8.11 ± 0.2 & 7.49 ± 0.2 & 6.92 ± 0.2 & 6.41 ± 0.2 & 5.95 ± 0.2 & 5.53 ± 0.2 & 5.16 ± 0.2 & 4.82 ± 0.2 & 4.51 ± 0.2 & 4.22 ± 0.2 \\
\rowcolor{lightgray!70}
Bound-r  & 11.90 ± 0.1 & 11.12 ± 0.1 & 10.36 ± 0.1 & 9.63 ± 0.1 & 8.94 ± 0.1 & 8.32 ± 0.1 & 7.76 ± 0.1 & 7.27 ± 0.1 & 6.83 ± 0.1 & 6.42 ± 0.1 \\
\toprule
\multicolumn{11}{c }{Case II: Crowdsourcing}\\ 
 \hline
Method & $\epsilon = 1.0$& $\epsilon = 2.0$& $\epsilon = 3.0$& $\epsilon = 4.0$ & $\epsilon = 5.0$ & $\epsilon = 6.0$ & $\epsilon = 7.0$ & $\epsilon = 8.0$ & $\epsilon = 9.0$ & $\epsilon = 10.0$ \\ 
 \hline
SPL + DPN & 28.52 ± 4.3 & 24.66 ± 3.2 & 20.20 ± 2.5 & 16.30 ± 2.5 & 13.40 ± 2.4 & 11.23 ± 2.2 & 9.52 ± 2.0 & 8.15 ± 1.8 & 7.05 ± 1.7 & 6.16 ± 1.5 \\
SPL + RMP & 23.57 ± 3.6 & 21.04 ± 2.6 & 17.41 ± 1.9 & 14.05 ± 1.9 & 11.49 ± 1.8 & 9.55 ± 1.5 & 8.01 ± 1.3 & 6.75 ± 1.1 & 5.74 ± 0.9 & 4.93 ± 0.8 \\
\hline
ALC + DPN & 28.32 ± 4.5 & 23.34 ± 4.2 & 18.58 ± 3.6 & 14.80 ± 3.0 & 11.97 ± 2.6 & 9.85 ± 2.3 & 8.29 ± 2.0 & 7.11 ± 1.8 & 6.20 ± 1.6 & 5.48 ± 1.5 \\
ALC + RMP & 22.92 ± 3.9 & 18.97 ± 3.8 & 15.45 ± 2.9 & 12.61 ± 2.4 & 10.21 ± 1.9 & 8.28 ± 1.5 & 6.83 ± 1.2 & 5.74 ± 1.1 & 4.90 ± 0.9 & 4.25 ± 0.8 \\
\hline
\rowcolor{lightgray!30}
DepHDP-m-s  & 19.62 ± 7.1 & 15.88 ± 6.3 & 12.99 ± 4.8 & 10.49 ± 3.8 & 8.37 ± 3.1 & 6.73 ± 2.4 & 5.52 ± 1.9 & 4.67 ± 1.5 & 3.96 ± 1.2 & 3.46 ± 1.0 \\
\rowcolor{lightgray!30}
DepHDP-r-s  & 19.76 ± 6.9 & 15.81 ± 6.4 & 13.03 ± 4.8 & 10.52 ± 3.8 & 8.42 ± 2.9 & 6.89 ± 2.2 & 5.56 ± 1.8 & 4.66 ± 1.5 & 4.00 ± 1.2 & 3.45 ± 1.0 \\
\hline
\rowcolor{lightgray!50}
DepHDP-m  & 19.76 ± 7.1 & 16.07 ± 6.4 & 13.08 ± 4.9 & 10.61 ± 3.8 & 8.35 ± 3.1 & 6.69 ± 2.4 & 5.67 ± 1.8 & 4.72 ± 1.4 & 3.96 ± 1.3 & 3.48 ± 1.0 \\
\rowcolor{lightgray!50}
DepHDP-r  & 19.78 ± 7.1 & 16.21 ± 6.2 & 13.05 ± 5.0 & 10.35 ± 4.1 & 8.45 ± 2.9 & 6.81 ± 2.3 & 5.60 ± 1.8 & 4.67 ± 1.5 & 3.99 ± 1.2 & 3.42 ± 1.1 \\
\rowcolor{lightgray!70}
\hline
Bound-m  & 17.60 ± 6.9 & 14.41 ± 6.1 & 11.74 ± 5.0 & 9.64 ± 4.0 & 7.80 ± 3.1 & 6.48 ± 2.5 & 5.34 ± 2.0 & 4.50 ± 1.6 & 3.84 ± 1.3 & 3.34 ± 1.1 \\
\rowcolor{lightgray!70}
Bound-r  & 17.59 ± 6.9 & 14.42 ± 6.0 & 11.74 ± 5.0 & 9.58 ± 4.0 & 7.82 ± 3.1 & 6.42 ± 2.5 & 5.33 ± 2.0 & 4.48 ± 1.6 & 3.83 ± 1.3 & 3.34 ± 1.1 \\
\rowcolor{lightgray!70}
\toprule
\multicolumn{11}{c } {Case III: Entropy estimation}\\ 
 \hline
Method & $\epsilon = 1.0$& $\epsilon = 2.0$& $\epsilon = 3.0$& $\epsilon = 4.0$ & $\epsilon = 5.0$ & $\epsilon = 6.0$ & $\epsilon = 7.0$ & $\epsilon = 8.0$ & $\epsilon = 9.0$ & $\epsilon = 10.0$ \\ 
 \hline
SPL + DPN & 16.35 ± 0.9 & 14.65 ± 0.8 & 13.15 ± 0.7 & 11.75 ± 0.6 & 10.41 ± 0.6 & 9.14 ± 0.5 & 8.00 ± 0.5 & 7.05 ± 0.5 & 6.30 ± 0.5 & 5.73 ± 0.5 \\
SPL + RMP & 13.38 ± 1.0 & 11.91 ± 1.0 & 10.69 ± 0.9 & 9.58 ± 0.8 & 8.54 ± 0.8 & 7.54 ± 0.7 & 6.61 ± 0.7 & 5.80 ± 0.7 & 5.13 ± 0.7 & 4.61 ± 0.6 \\
\hline
ALC + DPN & 15.72 ± 0.9 & 13.70 ± 0.7 & 11.25 ± 0.7 & 8.82 ± 0.6 & 7.12 ± 0.5 & 6.09 ± 0.5 & 5.49 ± 0.4 & 5.10 ± 0.4 & 4.83 ± 0.4 & 4.63 ± 0.4 \\
ALC + RMP & 12.73 ± 1.0 & 11.26 ± 1.0 & 9.55 ± 0.8 & 7.54 ± 0.8 & 5.99 ± 0.7 & 4.96 ± 0.6 & 4.36 ± 0.6 & 3.98 ± 0.6 & 3.72 ± 0.6 & 3.54 ± 0.6 \\
\hline
\rowcolor{lightgray!30}
DepHDP-m-s  & 5.66 ± 0.8 & 3.85 ± 0.7 & 3.21 ± 0.7 & 2.69 ± 0.8 & 2.37 ± 0.5 & 2.23 ± 0.5 & 2.11 ± 0.6 & 2.01 ± 0.5 & 1.93 ± 0.6 & 1.86 ± 0.6 \\
\rowcolor{lightgray!30}
DepHDP-r-s  & 5.88 ± 0.8 & 4.41 ± 0.8 & 3.30 ± 0.9 & 2.84 ± 0.6 & 2.61 ± 0.6 & 2.28 ± 0.6 & 2.07 ± 0.3 & 2.01 ± 0.5 & 2.00 ± 0.6 & 1.88 ± 0.5\\
\hline
\rowcolor{lightgray!50}
DepHDP-m  & 5.56 ± 0.8 & 3.87 ± 0.7 & 3.18 ± 0.7 & 2.64 ± 0.8 & 2.38 ± 0.5 & 2.15 ± 0.5 & 2.10 ± 0.6 & 2.03 ± 0.5 & 1.95 ± 0.6 & 1.85 ± 0.6 \\
\rowcolor{lightgray!50}
DepHDP-r  & 5.71 ± 0.8 & 4.29 ± 0.8 & 3.39 ± 0.9 & 2.76 ± 0.6 & 2.55 ± 0.6 & 2.24 ± 0.6 & 2.06 ± 0.3 & 1.99 ± 0.5 & 2.00 ± 0.6 & 1.89 ± 0.5\\
\hline
\rowcolor{lightgray!70}
Bound-m  & 5.14 ± 0.7 & 3.68 ± 0.6 & 2.90 ± 0.6 & 2.47 ± 0.6 & 2.23 ± 0.6 & 2.09 ± 0.6 & 2.00 ± 0.5 & 1.94 ± 0.5 & 1.89 ± 0.5 & 1.85 ± 0.5 \\
\rowcolor{lightgray!70}
Bound-r  & 5.35 ± 0.7 & 3.90 ± 0.6 & 3.08 ± 0.6 & 2.62 ± 0.6 & 2.33 ± 0.6 & 2.16 ± 0.6 & 2.06 ± 0.5 & 1.99 ± 0.5 & 1.94 ± 0.5 & 1.89 ± 0.5 \\
\bottomrule
\end{tabular}

\end{table*}

\smallskip
\noindent \textbf{Baselines and DepHDP variants.}
We compare DepHDP with four homogeneous-mechanism baselines of the form ``$\text{A}+\text{B}$'', where $\text{A}$ denotes the budget-allocation strategy and $\text{B}$ denotes the per-group perturbation method. 
For allocation, \emph{SPL} uniformly splits the global budget across groups~\cite{Ren-TIFS2018}, while \emph{ALC} optimizes the group-wise budget allocation under the global mDP constraint. 
For perturbation, \emph{DPN} applies a distance-driven predefined-noise mechanism, while \emph{RMP} further applies Bayesian remapping as post-processing~\cite{Chatzikokolakis-PETS2017}. 
This yields four baselines: \emph{SPL+DPN}, \emph{SPL+RMP}, \emph{ALC+DPN}, and \emph{ALC+RMP}. 
A pure \emph{OPT} baseline that optimizes every group is omitted because it is computationally impractical for the multi-attribute domains considered here. 

We report four DepHDP variants. 
\emph{DepHDP-m} and \emph{DepHDP-r} estimate DepHDP's configuration quantities using the full dataset and use the MSE-likeness and radial-fit scores, respectively, for Stage~3 mechanism selection. 
\emph{DepHDP-m-s} and \emph{DepHDP-r-s} use the same two scores but estimate configuration quantities only from the fitting split and evaluate utility on the protected test split; the ``-s'' suffix denotes this split-data setting.

\smallskip
\noindent \textbf{Computing resources.} Our experiments were conducted on a workstation equipped with an Intel Core i9-13900F CPU (24 cores, 2.00–5.60~GHz), 32~GB DDR5 memory (4800~MHz), and an NVIDIA GeForce RTX~4090 GPU with 24~GB GDDR6X VRAM. All LPs were solved in MATLAB using \texttt{linearprog} from the Optimization Toolbox~\cite{MATLABlinprog}.

\subsection{Utility Loss Evaluation}
\label{subsec:UL}
We first evaluate the downstream \emph{utility loss} incurred by different privacy mechanisms under a fixed global privacy budget. Using the task-specific loss definition in Eq.~(\ref{eq:loss}), we compare DepHDP with baseline approaches across the three case studies I--III. We examine how utility loss varies with the privacy budget and analyze the impact of optimized budget allocation and mechanism selection on reducing task degradation in multi-attribute settings. 
We set the default grouping threshold to $\tau_{\mathrm{group}}=0.1$, the domain-size threshold to $\tau_{\mathrm{dim}}=300$, and the DPN adequacy threshold to $\tau_{\mathrm{dpn}}=0.3,\,0.1,$ and $0.2$ for Cases~I--III, respectively.

\smallskip
\noindent\textbf{DepHDP vs. Baselines.}
Table~\ref{tab:UL} reports downstream utility loss (mean $\pm1.96\times$ standard deviation) across three case studies and global budgets $\epsilon\in\{1,\ldots,10\}$. DepHDP consistently achieves the lowest loss, outperforming the strongest baseline ALC+RMP (optimized allocation + Bayesian remapping). The baselines provide partial improvements: switching from SPL to ALC helps when groups have unequal loss--budget sensitivities, while switching from DPN to RMP can mitigate mismatches between distance-based perturbation and task-specific loss. However, neither change fully addresses non-distance-driven loss, where directly optimizing the perturbation mechanism can provide additional gains. \SPRevision{DepHDP's gains are larger in Cases~I and III but modest in Case~II, where all-DPN may offer a preferable utility--computation trade-off.} \looseness = -1

More concretely, in Case~I (Frequency Estimation), optimized allocation provides only modest gains (ALC+RMP improves over SPL+RMP by $8.1\%$ on average), whereas DepHDP-m achieves substantially larger reductions: $52.5\%$--$55.7\%$ relative to ALC+RMP ($54.8\%$ on average) and $58.3\%$--$61.0\%$ relative to SPL+DPN ($60.2\%$ on average), indicating that mechanism mismatch is a dominant source of error.
In Case~II (Crowdsourcing), where distance-driven designs are more competitive, DepHDP-m still reduces loss by $13.8\%$--$19.2\%$ relative to ALC+RMP ($17.0\%$ on average) and by $30.7\%$--$43.8\%$ relative to SPL+DPN ($38.4\%$ on average); DepHDP-r follows the same trend with slightly higher loss, yet remains below all SPL/ALC baselines.
Finally, in Case~III (Entropy Estimation), DepHDP-m improves over ALC+RMP by $47.6\%$--$66.7\%$ ($56.6\%$ on average), with especially pronounced gains at large budgets (e.g., at $\epsilon=10$, $1.85$ vs.\ $3.54$, a $47.7\%$ reduction); DepHDP-r is often nearly identical to DepHDP-m, suggesting that both variants approach the same low-loss regime when distance alignment is strong or remapping is already effective.

Overall, Table~\ref{tab:UL} shows that both aspects of DepHDP are important: cross-group budget optimization accounts for heterogeneous loss sensitivities, and selectively invoking OPT (beyond predefined noise and remapping) is necessary when the utility loss is not distance-driven. In this regime, the baselines largely plateau, whereas DepHDP continues to substantially reduce utility loss.


\noindent\textbf{DepHDP With vs. Without Fitting/Test Split.} 
Table~\ref{tab:UL} compares DepHDP configured using the full dataset (\emph{DepHDP-m/r}) with the corresponding fitting/test-split results (\emph{DepHDP-m/r-s}). The full-data setting serves as an idealized reference for assessing how closely the split-based evaluation approaches the case where configuration has access to all available records. In Case~I, the two settings produce nearly indistinguishable utility losses across different budgets. Averaged over all privacy budgets and both scoring variants, the split-based results have $0.36\%$ lower loss in Case~II and $1.08\%$ higher loss in Case~III. These results indicate that the fitting/test split closely matches the idealized full-data reference, with only mild degradation in Case~III, likely due to distributional differences between the fitting and test records. Overall, DepHDP-m/r-s remains close to DepHDP-m/r and continues to substantially outperform the homogeneous baselines. \looseness = -1

\noindent\textbf{DepHDP vs. Lower Bounds.} Finally, to assess how close DepHDP is to the best achievable utility loss under the same group-wise loss evaluations, Table~\ref{tab:UL} additionally reports two lower bounds derived from \textbf{Proposition~\ref{prop:convexbound}}, denoted \emph{Bound-m} and \emph{Bound-r}, corresponding to DepHDP-m and DepHDP-r, respectively. Comparing DepHDP-m with Bound-m (and DepHDP-r with Bound-r) provides an optimality-gap diagnostic: since Bound-$\star$ is a certified lower bound on the optimal upper-level objective value under the same group-wise loss curves, the gap between DepHDP-$\star$ and Bound-$\star$ upper-bounds how far DepHDP-$\star$ can be from the (unknown) optimum for $\star\in\{m,r\}$. To summarize this gap, we report the relative error 
$\mathrm{RelErr}\triangleq (\mathcal{L} -\mathcal{L}_{\text{lower}})/\mathcal{L}_{\text{lower}}$, where $\mathcal{L}$ is DepHDP’s achieved objective value and $\mathcal{L}_{\text{lower}}$ is the lower bound from Proposition~\ref{prop:convexbound}. As shown in Table~\ref{tab:UL}, the relative error is at most \emph{5.22\%}, \emph{12.45\%}, and \emph{10.06\%} in Cases~I--III, respectively (max over $\epsilon$ and over both scoring variants), indicating that DepHDP performs close to the certified lower bound and hence is near-optimal. 

\section{Related Works}
\label{sec:relatedworks}

\subsubsection*{\textbf{Perturbation mechanism for mDP (DPN vs. OPT)}} Prior data perturbation design work on mDP can be broadly organized into two families: \emph{Distance-driven Predefined Noise (DPN)} mechanisms and \emph{optimization (OPT)-based} mechanisms that explicitly solve for a near-optimal perturbation channel under mDP constraints. 

\smallskip
\noindent\emph{DPN-based works.}
Early and most widely studied mDP deployments focus on \emph{location privacy on grids}. Andrés et al.~\cite{Andres-CCS2013} introduced geo-indistinguishability via planar Laplace noise; subsequent work refined predefined kernels and their deployment (e.g., personalized/group-based noise~\cite{Zhang-TBD2023,Ma-TITS2022}, truncation~\cite{yang2021blockchain}, and federated/blockchain adaptations~\cite{galli2022group}). A common enhancement keeps the kernel fixed but improves utility via \emph{post-processing}, such as Bayesian or utility-aware remapping~\cite{Chatzikokolakis-PETS2017,oyaGeoIndLooking}. Hybrid designs further combine mDP with auxiliary privacy/utility notions or context-aware noise~\cite{Yu-NDSS2017}. Beyond locations, predefined or hybrid mechanisms have been applied to \emph{high-dimensional embeddings} in NLP and multimedia, e.g., Laplace noise with nearest-neighbor remapping for text~\cite{Fernandes2018AuthorOU,Feyisetan-ICDM2019}, truncated exponential mechanisms with semantic candidate sets~\cite{Carvalho2021TEMHU}, and perturbations for speech/facial embeddings~\cite{Han-ICME2020,Chen-CVPR2021}. Overall, these approaches favor simplicity by relying on a distance-driven kernel (optionally with remapping) fixed a priori. \looseness = -1

\smallskip
\noindent\emph{OPT-based works.} A parallel line of work optimizes perturbation mechanisms for task-specific privacy--utility trade-offs. Bordenabe et al.~\cite{Bordenabe-CCS2014} formulated optimal geo-indistinguishable mechanisms via LP. Subsequent work improves scalability through LP combined with exponential-mechanism structure~\cite{ImolaUAI2022}, input-domain partitioning and Benders decomposition~\cite{Qiu-IJCAI2024}, local relevance restrictions~\cite{Qiu-PETS2025}, and scalable mechanism composition~\cite{Qiu-USENIXSec2026}. Other approaches explore reinforcement learning~\cite{Min-TDSC2024}, bilevel optimization for task-specific objectives~\cite{Yu-TSP2023}, and sequential perturbation for trajectory privacy~\cite{Xiao-CCS2015}.

\SPRevision{These advances can inform DepHDP's optimization backends. DepHDP addresses a complementary multi-attribute design problem: selecting DPN or OPT per group using data- and task-dependent adequacy signals, then jointly optimizing group budgets and OPT mechanisms under a record-level metric constraint. Relative to prior decomposition and composition approaches~\cite{Qiu-IJCAI2024,Qiu-PETS2025,Qiu-USENIXSec2026}, our emphasis is this integrated heterogeneous design. 
Theorem~\ref{thm:composition} applies established product-mechanism composition principles and H\"older's inequality to provide the record-level privacy guarantee for these heterogeneous group mechanisms.} \looseness = -1

\smallskip
\noindent\textbf{Multi-Attribute Privacy Mechanisms for LDP.} 
Multi-attribute LDP lets users perturb high-dimensional records before reporting, but naively perturbing every attribute can introduce excessive noise~\cite{xu2020collecting}. Existing designs include \emph{budget-centric} strategies, such as privacy budget splitting (SPL) across attributes~\cite{xu2020collecting,Wang-ICDE2019}, and \emph{sampling-centric} strategies that report selected attributes (SMP)~\cite{Arcolezi-VLDB2023}, sometimes adding fake data (RS+FD) to obscure unselected attributes~\cite{arcolezi2021random}. Probabilistic graphical models also help preserve attribute correlations~\cite{liu2023multi}.

These techniques do not directly resolve our multi-attribute mDP design problem. Many LDP approaches focus on allocation or sampling with fixed reporting primitives~\cite{xu2020collecting,arcolezi2021random,liu2023multi}. In contrast, mDP specifies privacy through a metric over the joint record space, motivating mechanism selection and optimization tailored to task-loss geometry. DepHDP studies the choice between predefined distance-based noise and optimized mechanisms when utility sensitivity varies across attributes (\textbf{Appendix~\ref{sec:app:loss_sensitivity}, Fig.~\ref{fig:loss_sensitivity}}). Our setting is a one-shot release under a single global metric-privacy constraint, rather than the repeated-reporting and longitudinal inference settings considered in some LDP studies~\cite{Arcolezi-VLDB2023}.


\smallskip
\SPRevision{
\noindent {\textbf{Dependency-Aware Grouping and Correlation-Guided Allocation.}} 
Prior LDP work studies dependency-aware grouping and correlation-guided allocation: 
Domingo-Ferrer and Soria-Comas~\cite{DomingoFerrer-ICDE2022} cluster attributes for approximate inter-cluster independence and apply joint randomized response; Du et al.~\cite{Du-SECON2021} calibrate perturbation and attribute budgets using correlation bounds under univariate-dominance LDP, a relaxation of standard LDP, and allocate reporting samples to reduce estimation variance. DepHDP instead selects DPN or OPT by task-loss geometry and tractability, then optimizes mechanisms and group budgets using loss--budget profiles to minimize estimated end-to-end loss.} 

 

\section{Conclusions}
\label{sec:conclusion}
This paper studies multi-attribute mDP mechanism design under heterogeneous domains and task-induced utility losses, and proposes \textbf{DepHDP}, a decomposition-based framework for heterogeneous mechanism synthesis. DepHDP integrates dependency-aware grouping, landscape-driven selection between DPN and OPT mechanisms, and joint optimization of group-wise perturbation under a global metric constraint. Across diverse case studies, DepHDP consistently achieves lower utility loss than the baselines, especially when loss is weakly distance-aligned and a single monolithic mechanism becomes mismatched.

DepHDP shows strong empirical performance but has several limitations. Its bilevel optimization depends on empirically estimated loss--budget profiles, which may be sensitive to sampling rate or distribution shift. DepHDP also restricts optimization-based perturbation to small domains, while scalable OPT mechanisms for large or continuous spaces may require approximate or distributed solvers. Finally, this study considers one-shot release; extending DepHDP to streaming or interactive settings with adaptive budget allocation remains an important direction for future mDP systems. 


\clearpage



 
\section*{Appendix}

\renewcommand{\thesection}{\Alph{section}}
\setcounter{section}{0}

\setcounter{section}{0}

\section{Omitted Proofs}
\subsection{Proof of Theorem \ref{thm:composition}: Group-wise Composition Under $\ell_p$ Record Metric}
\label{subsec:proof:thm:composition}
\DEL{
\begin{reptheorem}[Group-wise composition under an $\ell_p$ record metric]
For each group $k$ (k = 1, ..., K), define the restricted $\ell_p$ metric
\begin{equation}
\textstyle d_{\mathcal{G}_k,p}\!\left(\mathbf{x}_{k},\mathbf{x}'_{k}\right)
\triangleq
\left(\sum_{\ell\in\mathcal{G}_k} |x_\ell-x'_\ell|^p\right)^{1/p}.
\end{equation}
Assume that each group mechanism $\mathcal{M}_k:\mathcal{X}_{\mathcal{G}_k}\to\mathcal{Y}_{\mathcal{G}_k}$ satisfies $(\epsilon_k,d_{\mathcal{G}_k,p})$-mDP, i.e., for all $\mathbf{x}_{k},\mathbf{x}'_{k}\in\mathcal{X}_{\mathcal{G}_k}$ and all $\mathbf{y}_{k}\in \mathcal{Y}_{\mathcal{G}_k}$,
\begin{equation}
\left|\ln
\frac{\Pr\!\left[\mathcal{M}_{k}(\mathbf{x}_{k})= \mathbf{y}_{k}\right]}
{\Pr\!\left[\mathcal{M}_{k}(\mathbf{x}'_{k})= \mathbf{y}_{k}\right]}
\right| \le \epsilon_k \, d_{\mathcal{G}_k,p}\!\left(\mathbf{x}_{k},\mathbf{x}'_{k}\right).
\end{equation}
Then the composed mechanism $$\mathcal{M}(\mathbf{x}) \triangleq \bigl(\mathcal{M}_{1}(\mathbf{x}_{1}),\ldots,\mathcal{M}_{K}(\mathbf{x}_{K})\bigr)$$ 
satisfies $(\epsilon,d_p)$-mDP under the full-record metric $d_p(\mathbf{x},\mathbf{x}')$, provided that the group budgets $\boldsymbol{\epsilon}=(\epsilon_1,\ldots,\epsilon_K)$ obey the dual-norm constraint
\begin{equation}
\textstyle \|\boldsymbol{\epsilon}\|_{q}
\triangleq
\left(\sum_{k=1}^{K}\epsilon_{k}^{q}\right)^{1/q}
\le \epsilon,
\quad\text{where }\ \frac{1}{p}+\frac{1}{q}=1,
\end{equation}
with the usual convention $\|\boldsymbol{\epsilon}\|_{\infty}=\max_{k\in[K]}\epsilon_k$ (corresponding to $p=1$).
Equivalently,
\begin{itemize}
    \item if $p>1$ (so $q=\tfrac{p}{p-1}$), it suffices that $\sum_{k=1}^{K}\epsilon_{k}^{q}\le \epsilon^{q}$;
    \item if $p=1$ (so $q=\infty$), it suffices that $\max_{k\in[K]}\epsilon_{k}\le \epsilon$.
\end{itemize}
\end{reptheorem}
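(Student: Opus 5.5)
The plan is to work directly from the product structure of the composed mechanism. First, fix arbitrary records $\mathbf{x}=(\mathbf{x}_1,\ldots,\mathbf{x}_K)$ and $\mathbf{x}'=(\mathbf{x}'_1,\ldots,\mathbf{x}'_K)$ and an output $\mathbf{y}=(\mathbf{y}_1,\ldots,\mathbf{y}_K)$. Because the group mechanisms are run with independent randomness given the input, we have $\Pr[\mathcal{M}(\mathbf{x})=\mathbf{y}]=\prod_{k=1}^K\Pr[\mathcal{M}_k(\mathbf{x}_k)=\mathbf{y}_k]$.

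Taking logarithms, the record-level log-likelihood ratio becomes a sum of group-wise log ratios. The triangle inequality together with the hypothesis \eqref{eq:group-mdp} then gives
\[
\left|\ln\frac{\Pr[\mathcal{M}(\mathbf{x})=\mathbf{y}]}{\Pr[\mathcal{M}(\mathbf{x}')=\mathbf{y}]}\right|
\le \sum_{k=1}^K \epsilon_k\, d_{\mathcal{G}_k,p}(\mathbf{x}_k,\mathbf{x}'_k).
\]
Probabilities that vanish need a brief remark: with finite $d$, mDP forces the supports to agree, so if one side is zero both are, and the ratio convention is trivial.

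The second step is a Hölder inequality on the vectors $\boldsymbol{\epsilon}=(\epsilon_k)_k$ and $\mathbf{v}=(v_k)_k$, where $v_k\triangleq d_{\mathcal{G}_k,p}(\mathbf{x}_k,\mathbf{x}'_k)\ge 0$. This yields $\sum_k\epsilon_k v_k\le\|\boldsymbol{\epsilon}\|_q\|\mathbf{v}\|_p$ with $1/p+1/q=1$, and the case $p=1$, $q=\infty$ is covered by $\sum_k\epsilon_k v_k\le\max_k\epsilon_k\sum_k v_k$. The step I expect to need the most care is identifying $\|\mathbf{v}\|_p$ with the full-record metric. Since the groups partition $[L]$ and the restricted metrics use the same components and weights,
\[
\|\mathbf{v}\|_p^p=\sum_{k=1}^K\sum_{\ell\in\mathcal{G}_k}w_\ell d_\ell(x_\ell,x'_\ell)^p=d_p(\mathbf{x},\mathbf{x}')^p,
\]
so $\|\mathbf{v}\|_p=d_p(\mathbf{x},\mathbf{x}')$ exactly. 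This exact identity relies on the partition property from Stage~1 and on the shared weights; I would state both explicitly.

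Combining the two steps with \eqref{eq:group-budget-constraint}, $\|\boldsymbol{\epsilon}\|_q\le\epsilon$, gives the bound $\epsilon\, d_p(\mathbf{x},\mathbf{x}')$, which is \eqref{eq:mdp_def}. The two bulleted equivalences follow immediately: for $p>1$, raise $\|\boldsymbol{\epsilon}\|_q\le\epsilon$ to the power $q$; for $p=1$, the condition is just the definition of the $\infty$-norm.

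I would also remark on $p=\infty$ ($q=1$), in case it is wanted for the later discussion. Here Hölder gives $\sum_k\epsilon_k v_k\le(\sum_k\epsilon_k)\max_k v_k$, and $\max_k v_k=d_\infty(\mathbf{x},\mathbf{x}')$, provided the restricted group metric is taken to be the max over the group.
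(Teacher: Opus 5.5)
Your proposal is correct and follows essentially the same route as the paper's proof: factorize the joint likelihood using independent group randomization, bound each factor by the group-level mDP condition, use the partition identity $\sum_{k} d_{\mathcal{G}_k,p}(\mathbf{x}_k,\mathbf{x}'_k)^p = d_p(\mathbf{x},\mathbf{x}')^p$, and apply H\"older's inequality with $\|\boldsymbol{\epsilon}\|_q\le\epsilon$, treating the $p=1$ and $p=\infty$ edge cases the same way. The paper instead proves a one-sided multiplicative bound; your use of absolute log-ratios with the triangle inequality, together with the remark on vanishing probabilities, is only a cosmetic difference.
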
}

\begin{proof}
Let the $L$ attributes be partitioned into $K$ disjoint groups
$\mathcal{G}_1,\dots,\mathcal{G}_K$ with $\bigcup_{k=1}^K \mathcal{G}_k=[L]$.
For each group $k$, let $\mathcal{M}_{k}$ be a (possibly heterogeneous, internally coupled)
perturbation mechanism applied to the sub-vector $\mathbf{x}_{k}$ and satisfying
$(\epsilon_{k}, d_{\mathcal{G}_k})$-mDP, i.e., for all measurable
$\mathcal{B}_k \subseteq \mathcal{Y}_{\mathcal{G}_k}$,
\begin{equation}
\Pr\big[\mathcal{M}_{k}(\mathbf{x}_{k})= \mathbf{y}\big]
\le
\exp\big(\epsilon_{k} d_{\mathcal{G}_k}(\mathbf{x}_{k},\mathbf{x}'_{k})\big)
\Pr\big[\mathcal{M}_{k}(\mathbf{x}'_{k})= \mathbf{y}\big].
\label{eq:group-mDP}
\end{equation}

Define the joint mechanism over groups
\begin{equation}
\mathcal{M}(\mathbf{x})
=
\Big(\mathcal{M}_{1}(\mathbf{x}_{1}),\dots,
\mathcal{M}_{K}(\mathbf{x}_{K})\Big).
\end{equation}
\SPRevision{Using the chain rule and the independent group randomization required by Theorem~\ref{thm:composition},}
\begin{eqnarray}
&& \Pr[\mathcal{M}(\mathbf{x})= \mathbf{y}] \\
&=&
\prod_{k=1}^{K}
\Pr\Big[\mathcal{M}_{k}(\mathbf{x}_{k})= \mathbf{y}
\ \big|\ 
\mathcal{M}_{1}(\mathbf{x}_{1}),\dots,\mathcal{M}_{k-1}(\mathbf{x}_{k-1})\Big] \nonumber\\
&=&
\prod_{k=1}^{K}
\Pr\big[\mathcal{M}_{k}(\mathbf{x}_{k})= \mathbf{y}\big] \nonumber\\
&\le&
\prod_{k=1}^{K}
\exp\big(\epsilon_{k} d_{\mathcal{G}_k}(\mathbf{x}_{k},\mathbf{x}'_{k})\big)
\Pr\big[\mathcal{M}_{k}(\mathbf{x}'_{k})= \mathbf{y}\big] \label{eq:group-step}\\
&=& 
\exp\Big(\sum_{k=1}^{K}\epsilon_{k} d_{\mathcal{G}_k}(\mathbf{x}_{k},\mathbf{x}'_{k})\Big)
\Pr[\mathcal{M}(\mathbf{x}')= \mathbf{y}]. \nonumber
\end{eqnarray}
In \eqref{eq:group-step} we used \eqref{eq:group-mDP} for each group.

\SPRevision{For finite $p$, relate the group-wise distances to the declared weighted product metric.}
Define $d_{\mathcal{G}_k}(\mathbf{x}_{k},\mathbf{x}'_{k})
\triangleq 
\Big(\sum_{\ell\in\mathcal{G}_k}\SPRevision{w_\ell d_\ell(x_\ell,x'_\ell)^p}\Big)^{1/p}$ and 
$d_p(\mathbf{x},\mathbf{x}') = \Big(\sum_{\ell=1}^{L}\SPRevision{w_\ell d_\ell(x_\ell,x'_\ell)^p}\Big)^{1/p}$. Then the partition implies the identity
\begin{eqnarray}
&& \sum_{k=1}^{K} d_{\mathcal{G}_k}(\mathbf{x}_{k},\mathbf{x}'_{k})^p
=
\sum_{\ell=1}^{L}\SPRevision{w_\ell d_\ell(x_\ell,x'_\ell)^p}, \\ 
\qquad\text{hence} && 
\Big(\sum_{k=1}^{K} d_{\mathcal{G}_k}(\mathbf{x}_{k},\mathbf{x}'_{k})^p\Big)^{1/p}
=
d_p(\mathbf{x},\mathbf{x}').
\label{eq:group-to-record}
\end{eqnarray}

Apply H\"older's inequality with conjugate exponents $(p,q)$, where $\frac1p+\frac1q=1$:
\begin{align}
\sum_{k=1}^{K}\epsilon_{k} d_{\mathcal{G}_k}(\mathbf{x}_{k},\mathbf{x}'_{k})
&\le
\Big(\sum_{k=1}^{K}\epsilon_{k}^{q}\Big)^{1/q}
\Big(\sum_{k=1}^{K} d_{\mathcal{G}_k}(\mathbf{x}_{k},\mathbf{x}'_{k})^{p}\Big)^{1/p} \nonumber\\
&=
\Big(\sum_{k=1}^{K}\epsilon_{k}^{q}\Big)^{1/q}
d_p(\mathbf{x},\mathbf{x}'),
\label{eq:holder-group}
\end{align}
where the last equality uses \eqref{eq:group-to-record}.

Therefore, if the group budgets satisfy $\Big(\sum_{k=1}^{K}\epsilon_{k}^{q}\Big)^{1/q}\le \epsilon$, 
then combining \eqref{eq:group-step} and \eqref{eq:holder-group} yields
\begin{equation}
\Pr[\mathcal{M}(\mathbf{x})= \mathbf{y}]
\le
\exp\big(\epsilon d_p(\mathbf{x},\mathbf{x}')\big)
\Pr[\mathcal{M}(\mathbf{x}')= \mathbf{y}],
\end{equation}
i.e., $\mathcal{M}$ satisfies $(\epsilon,d_p)$-mDP.

For completeness:
(i) when $1<p<\infty$, $q=\frac{p}{p-1}$ and the condition is
$\sum_{k=1}^{K}\epsilon_{k}^{q}\le \epsilon^{q}$;
(ii) when $p=1$, $q=\infty$ and \eqref{eq:holder-group} becomes
$\sum_k \epsilon_{k} d_{\mathcal{G}_k}\le (\max_k \epsilon_{k}) d_1(\mathbf{x},\mathbf{x}')$,
so it suffices that $\max_k \epsilon_{k}\le \epsilon$.
\SPRevision{For $p=\infty$, the record distance is the maximum group distance, so $\sum_k\epsilon_k d_{\mathcal G_k,\infty}\leq(\sum_k\epsilon_k)d_\infty$. Thus $\|\boldsymbol\epsilon\|_1\leq\epsilon$ suffices. The argument depends on the declared component metrics and fixed weights, not on numerical encodings of nominal categories.}
\end{proof}

\subsection{Proof of Propostion \ref{prop:lossdecompose}}
\label{subsec:proof:prop:lossdecompose}
\DEL{
\begin{reproposition}[Expected utility loss decomposition]
Given the record-level expected utility loss defined the Eq.~(\ref{eq:loss}), if we use the linear surrogate, then we can obtain that 
\begin{equation}
\label{eq:loss_decomp_final}
\textstyle \mathcal{L}(\mathcal{M})
~\approx~
\beta_0+\sum_{k=1}^{K}\beta_k\mathcal{L}_k(\mathcal{M}_k).
\end{equation}
where the group-wise expected loss $\mathcal{L}_k(\mathcal{M}_k)$ is defined by 
\begin{align}
\label{eq:group_loss}
\mathcal{L}_k(\mathcal{M}_k)
~\triangleq~
\mathbb{E}_{\mathbf{x}\sim\boldsymbol\pi}
\mathbb{E}_{\mathbf{y}_k\sim \mathcal{M}_k(\cdot|\mathbf{x}_k)}
\big[c_k(\mathbf{y}_k|\mathbf{x})\big]. 
\end{align}
\end{reproposition}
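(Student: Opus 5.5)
The plan is to substitute the first-order surrogate (assumption (ii), Eq.~\eqref{eq:lin_surrogate_prop}) pointwise into the definition of $\mathcal{L}(\mathcal{M})$ in Eq.~\eqref{eq:loss}. Then we use linearity of expectation together with the product structure of the mechanism (assumption (i)) to reduce each summand to a group-wise expectation. Concretely, we first write
\begin{equation*}
\mathcal{L}(\mathcal{M})=\sum_{\mathbf{x}}\boldsymbol\pi(\mathbf{x})\sum_{\mathbf{y}}\prod_{j=1}^{K}\mathcal{M}_j(\mathbf{y}_j|\mathbf{x}_j)\,c(\mathbf{y}|\mathbf{x})
\approx\sum_{\mathbf{x}}\boldsymbol\pi(\mathbf{x})\sum_{\mathbf{y}}\prod_{j=1}^{K}\mathcal{M}_j(\mathbf{y}_j|\mathbf{x}_j)\Bigl[\beta_0+\sum_{k=1}^{K}\beta_k c_k(\mathbf{y}_k|\mathbf{x}_k;\mathbf{x}_{-k})\Bigr].
\end{equation*}
The approximation is inherited directly from the surrogate. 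Since the weights $\boldsymbol\pi(\mathbf{x})\Pr[\mathcal{M}(\mathbf{x})=\mathbf{y}]$ are nonnegative and sum to one, the error of this step is at most the weighted average of the pointwise residual. It is therefore exactly the surrogate residual averaged under the joint law of $(\mathbf{X},\mathcal{M}(\mathbf{X}))$. I will state this explicitly so that ``$\approx$'' has a precise meaning.

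Next we split the bracket by linearity. For the constant term, the product of the $K$ conditional distributions sums to one over $\mathbf{y}\in\prod_j\mathcal{Y}_{\mathcal{G}_j}$, and $\boldsymbol\pi$ sums to one, which yields $\beta_0$. For the $k$-th term, $c_k(\mathbf{y}_k|\mathbf{x}_k;\mathbf{x}_{-k})$ depends on $\mathbf{y}$ only through $\mathbf{y}_k$. We therefore factor the sum over $\mathbf{y}$ as $\sum_{\mathbf{y}_k}\mathcal{M}_k(\mathbf{y}_k|\mathbf{x}_k)c_k(\cdot)\prod_{j\neq k}\sum_{\mathbf{y}_j}\mathcal{M}_j(\mathbf{y}_j|\mathbf{x}_j)$. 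The inner factors equal one, so the $k$-th term marginalizes the other groups' outputs. What remains is $\beta_k\,\mathbb{E}_{\mathbf{x}\sim\boldsymbol\pi}\mathbb{E}_{\mathbf{y}_k\sim\mathcal{M}_k(\cdot|\mathbf{x}_k)}[c_k(\mathbf{y}_k|\mathbf{x}_k;\mathbf{x}_{-k})]=\beta_k\mathcal{L}_k(\mathcal{M}_k)$, which is Eq.~\eqref{eq:group_loss}. Summing over $k$ gives Eq.~\eqref{eq:loss_decomp_final}.

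As a remark, I would note that conditioning on $\mathbf{X}_k=\mathbf{x}_k$ and using the tower property rewrites $\mathcal{L}_k(\mathcal{M}_k)$ as $\sum_{\mathbf{x}_k}\boldsymbol\pi_k(\mathbf{x}_k)\sum_{\mathbf{y}_k}\mathcal{M}_k(\mathbf{y}_k|\mathbf{x}_k)c_k(\mathbf{y}_k|\mathbf{x}_k)$, where $c_k$ is the context-averaged loss of Eq.~\eqref{eq:UL}. This is valid because $\mathcal{M}_k$ depends only on $\mathbf{x}_k$, so the inner expectation over $\mathbf{X}_{-k}$ can be moved inside. This links the proposition to the lower-level objective in Eq.~\eqref{eq:lp_loss}.

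The algebra is routine. The main obstacle is conceptual rather than technical: the surrogate is only approximate, so the conclusion can hold only up to the averaged residual. I do not expect a uniform bound over mechanisms, consistent with the paper's caveat. I would handle this by carrying an explicit residual term $r(\mathbf{y},\mathbf{x})$ through the computation and concluding $\mathcal{L}(\mathcal{M})=\beta_0+\sum_k\beta_k\mathcal{L}_k(\mathcal{M}_k)+\mathbb{E}[r]$, with $|\mathbb{E}[r]|\le\sup|r|$. This turns the ``$\approx$'' into an identity plus a controlled error.
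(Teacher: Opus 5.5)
Your proposal is correct and follows essentially the same route as the paper's proof. Both substitute the linear surrogate into the definition of $\mathcal{L}(\mathcal{M})$, split by linearity, and use the product form of $\mathcal{M}$ to marginalize the other groups' outputs, so that each term reduces to $\beta_k\mathcal{L}_k(\mathcal{M}_k)$. Your explicit residual term and your tower-property remark also fill in a step the paper's proof leaves implicit: it passes without comment from the context-dependent loss $c_k(\mathbf{y}_k|\mathbf{x}_k;\mathbf{x}_{-k})$ to the context-averaged $c_k(\mathbf{y}_k|\mathbf{x}_k)$ when it marginalizes the prior to $\pi_k$.
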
}
\begin{proof}
Start from the record-level expected utility loss (Eq.~(\ref{eq:loss})):
\begin{align}
\mathcal{L}(\mathcal{M})
~\triangleq~
\mathbb{E}_{\mathbf{x}\sim \pi}
\mathbb{E}_{\mathbf{y}\sim \mathcal{M}(\cdot|\mathbf{x})}
\big[c(\mathbf{y}|\mathbf{x})\big].
\label{eq:record_loss}
\end{align}
Plugging \eqref{eq:loss_decomp_final} into \eqref{eq:loss} and using linearity of expectation yields
\normalsize
\small 
\begin{eqnarray}
\label{eq:loss_decomp_1}
\mathcal{L}(\mathcal{M}) &\approx&
\mathbb{E}_{(\mathbf{x}_1, ..., \mathbf{x}_K)\sim\boldsymbol\pi}
\mathbb{E}_{\mathbf{y}\sim \mathcal{M}(\cdot|\mathbf{x})}
\left[\beta_0+\sum_{k=1}^{K}\beta_kc_k(\mathbf{y}_k|\mathbf{x}_k)\right] \nonumber\\
&\overset{(a)}{=}&
\beta_0+\sum_{k=1}^{K}\beta_k
\mathbb{E}_{\mathbf{x}_k\sim\boldsymbol\pi_k}
\mathbb{E}_{\mathbf{y}_k\sim \mathcal{M}_k(\cdot|\mathbf{x}_k)}
\big[c_k(\mathbf{y}_k|\mathbf{x}_k)\big],
\end{eqnarray}
\normalsize
\noindent where $\boldsymbol{\pi}_k$ is the marginal of $\boldsymbol{\pi}$ on $\mathcal{X}_{\mathcal{G}_k}$, and (a) uses the product-form mechanism
$\mathcal{M}(\mathbf{y}|\mathbf{x})=\prod_{k=1}^K \mathcal{M}_k(\mathbf{y}_k|\mathbf{x}_k)$.
\DEL{
\begin{align}
\label{eq:loss_decomp_1}
&\mathcal{L}(\mathcal{M})
\\ 
&\approx
\mathbb{E}_{\mathbf{x}\sim \boldsymbol{\pi}}
\mathbb{E}_{\mathbf{y}\sim \mathcal{M}(\cdot| \mathbf{x})}
\left[\beta_0+\sum_{k=1}^{K}\beta_k\,c_k(\mathbf{y}_k| \mathbf{x}_k)\right] \nonumber\\
&=
\beta_0+\sum_{k=1}^{K}\beta_k\,
\mathbb{E}_{\mathbf{x}\sim \boldsymbol{\pi}}
\mathbb{E}_{\mathbf{y}\sim \mathcal{M}(\cdot| \mathbf{x})}
\left[c_k(\mathbf{y}_k| \mathbf{x}_k)\right] \nonumber\\
&=
\beta_0+\sum_{k=1}^{K}\beta_k\,
\sum_{\mathbf{x}\in\mathcal{X}} \boldsymbol{\pi}(\mathbf{x})
\sum_{\mathbf{y}\in\mathcal{Y}} \mathcal{M}(\mathbf{y}| \mathbf{x})\,
c_k(\mathbf{y}_k| \mathbf{x}_k) \nonumber\\
&=
\beta_0+\sum_{k=1}^{K}\beta_k\,
\sum_{\mathbf{x}\in\mathcal{X}} \boldsymbol{\pi}(\mathbf{x})
\sum_{\mathbf{y}_k\in\mathcal{Y}_{\mathcal{G}_k}}
\Bigg(\sum_{\mathbf{y}_{-k}} \mathcal{M}\bigl((\mathbf{y}_k,\mathbf{y}_{-k})| \mathbf{x}\bigr)\Bigg)
c_k(\mathbf{y}_k| \mathbf{x}_k) \nonumber\\
&=
\beta_0+\sum_{k=1}^{K}\beta_k\,
\sum_{\mathbf{x}\in\mathcal{X}} \boldsymbol{\pi}(\mathbf{x})
\sum_{\mathbf{y}_k\in\mathcal{Y}_{\mathcal{G}_k}}
\Pr\!\left[\mathbf{Y}_k=\mathbf{y}_k | \mathbf{X}=\mathbf{x}\right]\,
c_k(\mathbf{y}_k| \mathbf{x}_k) \nonumber\\
&\overset{(a)}{=}
\beta_0+\sum_{k=1}^{K}\beta_k\,
\sum_{\mathbf{x}\in\mathcal{X}} \boldsymbol{\pi}(\mathbf{x})
\sum_{\mathbf{y}_k\in\mathcal{Y}_{\mathcal{G}_k}}
\mathcal{M}_k(\mathbf{y}_k| \mathbf{x}_k)\,
c_k(\mathbf{y}_k| \mathbf{x}_k) \nonumber\\
&=
\beta_0+\sum_{k=1}^{K}\beta_k\,
\sum_{\mathbf{x}_k\in\mathcal{X}_{\mathcal{G}_k}}
\Bigg(\sum_{\mathbf{x}_{-k}}\boldsymbol{\pi}\bigl((\mathbf{x}_k,\mathbf{x}_{-k})\bigr)\Bigg)
\sum_{\mathbf{y}_k\in\mathcal{Y}_{\mathcal{G}_k}}
\mathcal{M}_k(\mathbf{y}_k| \mathbf{x}_k)\,
c_k(\mathbf{y}_k| \mathbf{x}_k) \nonumber\\
&=
\beta_0+\sum_{k=1}^{K}\beta_k\,
\sum_{\mathbf{x}_k\in\mathcal{X}_{\mathcal{G}_k}}
\boldsymbol{\pi}_k(\mathbf{x}_k)
\sum_{\mathbf{y}_k\in\mathcal{Y}_{\mathcal{G}_k}}
\mathcal{M}_k(\mathbf{y}_k| \mathbf{x}_k)\,
c_k(\mathbf{y}_k| \mathbf{x}_k) \nonumber\\
&=
\beta_0+\sum_{k=1}^{K}\beta_k\,
\mathbb{E}_{\mathbf{x}_k\sim \boldsymbol{\pi}_k}
\mathbb{E}_{\mathbf{y}_k\sim \mathcal{M}_k(\cdot| \mathbf{x}_k)}
\left[c_k(\mathbf{y}_k| \mathbf{x}_k)\right],
\end{align}}
where $\mathbf{x}=(\mathbf{x}_k,\mathbf{x}_{-k})$ and $\mathbf{y}=(\mathbf{y}_k,\mathbf{y}_{-k})$ denote the concatenation of group-$k$ components and the remaining components, and
\begin{equation}
\textstyle \boldsymbol{\pi}_k(\mathbf{x}_k)\triangleq \sum_{\mathbf{x}_{-k}}\boldsymbol{\pi}((\mathbf{x}_k,\mathbf{x}_{-k}))
\end{equation}
is the marginal of $\boldsymbol{\pi}$ on $\mathcal{X}_{\mathcal{G}_k}$.
In step~\emph{(a)}, we use the decomposed mechanism $\mathcal{M}(\mathbf{y}| \mathbf{x})=\prod_{j=1}^K \mathcal{M}_j(\mathbf{y}_j| \mathbf{x}_j)$, which implies
$\Pr[\mathbf{Y}_k=\mathbf{y}_k| \mathbf{X}=\mathbf{x}]=\mathcal{M}_k(\mathbf{y}_k| \mathbf{x}_k)$.

Moreover, 
\begin{eqnarray}
\nonumber \mathbb{E}_{\mathbf{y}_k\sim \mathcal{M}(\cdot|\mathbf{x}_k)}
\big[c_k(\mathbf{y}_k|\mathbf{x}_k)\big]
&=& 
\sum_{\mathbf{y}_k\in\mathcal{Y}_k}\mathcal{M}_k(\mathbf{y}_k|\mathbf{x}_k)c_k(\mathbf{y}_k|\mathbf{x}_k)
\\
&=&
\mathbb{E}_{\mathbf{y}_k\sim \mathcal{M}_k(\cdot|\mathbf{x}_k)}
\big[c_k(\mathbf{y}_k|\mathbf{x}_k)\big].
\nonumber 
\end{eqnarray}
Define the group-wise expected loss
\begin{align}
\mathcal{L}_k(\mathcal{M}_k)
~\triangleq~
\mathbb{E}_{\mathbf{x}\sim\boldsymbol\pi}
\mathbb{E}_{\mathbf{y}_k\sim \mathcal{M}_k(\cdot|\mathbf{x}_k)}
\big[c_k(\mathbf{y}_k|\mathbf{x})\big],
\label{eq:group_loss}
\end{align}
then combining \eqref{eq:loss_decomp_1}--\eqref{eq:group_loss} gives
\begin{align}
\textstyle  \mathcal{L}(\mathcal{M})
~\approx~
\beta_0+\sum_{k=1}^{K}\beta_k\mathcal{L}_k(\mathcal{M}_k).
\end{align}
\end{proof}

\subsection{Proof of Proposition \ref{prop:convexbound}}
\label{subsec:proof:prop:convexbound}
\DEL{
\begin{reproposition}[Sandwich bound via convex envelopes]
The optimal value $\mathcal{L}_{\text{opt}}$ satisfies
$$\mathcal{L}_{\text{lower}}~\le~\mathcal{L}_{\text{opt}}~\le~\min\{\mathcal{L}(\underline{\boldsymbol\epsilon}^\star),\mathcal{L}(\overline{\boldsymbol\epsilon}^\star)\},$$ 
where $\mathcal{L}(\boldsymbol\epsilon) \triangleq \beta_0+\sum_{k=1}^{K}\beta_k\mathcal{L}_{k}(\epsilon_k)$. 
\end{reproposition}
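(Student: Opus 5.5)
The argument has two halves: an upper bound from feasibility and a lower bound from the envelope property of the shifted minorant. Throughout, the feasible set is $\mathcal{E}\triangleq\{\boldsymbol\epsilon\ge 0:\sum_k\epsilon_k^2\le\epsilon^2\}$, and $\mathcal{L}_{\text{opt}}=\min_{\boldsymbol\epsilon\in\mathcal{E}}\mathcal{L}(\boldsymbol\epsilon)$ with $\mathcal{L}(\boldsymbol\epsilon)=\beta_0+\sum_k\beta_k\mathcal{L}_k(\epsilon_k)$. The minimum is attained because each $\mathcal{L}_k$ is continuous, which the paper assumes, and $\mathcal{E}$ is compact. Otherwise I read $\mathcal{L}_{\text{opt}}$ as an infimum and the same argument applies.

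\emph{Upper bound.} Both $\underline{\boldsymbol\epsilon}^\star$ and $\overline{\boldsymbol\epsilon}^\star$ are produced by solving the upper-level problem in Eq.~(\ref{eq:single-level}) with the original objective replaced by a surrogate. The constraint set is unchanged, so both allocations lie in $\mathcal{E}$. By the definition of $\mathcal{L}_{\text{opt}}$ as a minimum over $\mathcal{E}$, we get $\mathcal{L}_{\text{opt}}\le\mathcal{L}(\underline{\boldsymbol\epsilon}^\star)$ and $\mathcal{L}_{\text{opt}}\le\mathcal{L}(\overline{\boldsymbol\epsilon}^\star)$, and therefore $\mathcal{L}_{\text{opt}}$ is at most their minimum. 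This half is routine.

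\emph{Lower bound.} I would use the pointwise envelope property $\underline{\mathcal{L}}'_k(\epsilon_k)\le\mathcal{L}_k(\epsilon_k)$ for all $\epsilon_k$ in the budget interval $[0,\epsilon]$, which the construction in Stage~4 (horizontal shift plus interpolation) is designed to provide. Multiply by $\beta_k\ge 0$, which holds by nonnegative least squares, sum over $k$, and add $\beta_0$. This gives $\underline{\mathcal{L}}'(\boldsymbol\epsilon)\triangleq\beta_0+\sum_k\beta_k\underline{\mathcal{L}}'_k(\epsilon_k)\le\mathcal{L}(\boldsymbol\epsilon)$ for every $\boldsymbol\epsilon\in\mathcal{E}$. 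Let $\boldsymbol\epsilon^{\text{opt}}$ be a minimizer of $\mathcal{L}$ over $\mathcal{E}$. Since $\underline{\boldsymbol\epsilon}^\star$ globally minimizes $\underline{\mathcal{L}}'$ over $\mathcal{E}$,
\[
\mathcal{L}_{\text{lower}}=\underline{\mathcal{L}}'(\underline{\boldsymbol\epsilon}^\star)\le\underline{\mathcal{L}}'(\boldsymbol\epsilon^{\text{opt}})\le\mathcal{L}(\boldsymbol\epsilon^{\text{opt}})=\mathcal{L}_{\text{opt}}.
\]

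\emph{Main obstacle.} The step that needs justification is that $\underline{\boldsymbol\epsilon}^\star$ is a \emph{global} minimizer of the surrogate problem. Without that, the first inequality in the chain above fails. I would argue it as follows. Each $\underline{\mathcal{L}}'_k$ is a piecewise-linear interpolation of the GCM values from Eq.~(\ref{eq:gcm-lp}), and those values satisfy the discrete convexity constraint in Eq.~(\ref{eq:disc-conv}). Translating horizontally and interpolating linearly both preserve convexity. Hence the surrogate objective is a nonnegative combination of convex functions, and $\mathcal{E}$ is convex. The surrogate problem is therefore convex, so the KKT/Lagrangian solution from Stage~4 is globally optimal (Slater's condition holds, e.g.\ at $\boldsymbol\epsilon=\mathbf{0}$ when $\epsilon>0$).

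Two points remain to check. First, the boundary corrections must not break convexity; if they might, I would fall back on defining $\mathcal{L}_{\text{lower}}$ with the exact global minimum of the convex surrogate. Second, the pointwise minorant property must hold between grid points. That second point is an assumption inherited from the shift construction; I would state it explicitly (the paper asserts it ``in practice''), for example as a Lipschitz or monotonicity condition on $\mathcal{L}_k$ at the grid resolution $\Delta\epsilon_k$. Under such a condition the left shift by $\Delta\epsilon_k$ provably underestimates $\mathcal{L}_k$ on each cell.
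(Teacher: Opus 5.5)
Your proposal is correct and matches the paper's proof step for step. The upper bound comes from feasibility of $\underline{\boldsymbol\epsilon}^\star,\overline{\boldsymbol\epsilon}^\star\in\mathcal{E}$, and the lower bound from the chain $\mathcal{L}_{\text{lower}}=\underline{\mathcal{L}}(\underline{\boldsymbol\epsilon}^\star)\le\underline{\mathcal{L}}(\boldsymbol\epsilon^\star)\le\mathcal{L}(\boldsymbol\epsilon^\star)=\mathcal{L}_{\text{opt}}$, using the pointwise envelope bound and $\beta_k\ge 0$. The one difference is that the paper defines $\underline{\boldsymbol\epsilon}^\star$ as a global minimizer of the envelope objective over $\mathcal{E}$ and takes the pointwise minorant as given from Stage~4 (justified in Proposition~\ref{prop:shifted_PL_valid} when $\mathcal{L}_k$ is non-increasing), so your convexity/KKT argument is extra justification that the computed allocation attains that minimum, not a step the proof itself uses.
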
}
\begin{proof} 
Let the feasible set of privacy-budget allocations be
\begin{equation}
\mathcal{E}\triangleq \Bigl\{\boldsymbol{\epsilon}\in\mathbb{R}_+^K:\ \text{(the Stage~4 budget constraint holds)}\Bigr\}.
\end{equation}
Recall that $\underline{\mathcal{L}}'_k(\cdot)$ and $\overline{\mathcal{L}}'_k(\cdot)$ are the (shifted, interpolated)
convex minorant and majorant constructed in \textbf{Stage~4}, so that for all $k$ and all $\epsilon_k$,
\begin{equation}
\label{eq:L-sandwich}
\underline{\mathcal{L}}'_k(\epsilon_k)\ \le\ \mathcal{L}_k(\epsilon_k)\ \le\ \overline{\mathcal{L}}'_k(\epsilon_k).
\end{equation}
Define the envelope objectives
\begin{align}
\label{eq:F-lower}
\underline{\mathcal{L}}(\boldsymbol{\epsilon})
&\triangleq \beta_0 + \textstyle  \sum_{k=1}^K \beta_k\, \underline{\mathcal{L}}'_k(\epsilon_k), \\
\label{eq:F-upper}
\overline{\mathcal{L}}(\boldsymbol{\epsilon})
&\triangleq \beta_0 + \textstyle  \sum_{k=1}^K \beta_k\, \overline{\mathcal{L}}'_k(\epsilon_k).
\end{align}
By \eqref{eq:L-sandwich} and $\beta_k\ge 0$, for any $\boldsymbol{\epsilon}\in\mathcal{E}$ we have the pointwise bounds
\begin{equation}
\label{eq:F-sandwich}
\underline{\mathcal{L}}(\boldsymbol{\epsilon}) \ \le\ \mathcal{L}(\boldsymbol{\epsilon}) \ \le\ \overline{\mathcal{L}}(\boldsymbol{\epsilon}).
\end{equation}
Let $\boldsymbol{\epsilon}^\star \in \arg\min_{\boldsymbol{\epsilon}\in\mathcal{E}} \mathcal{L}(\boldsymbol{\epsilon})$, $
\underline{\boldsymbol{\epsilon}}^\star \in \arg\min_{\boldsymbol{\epsilon}\in\mathcal{E}} \underline{\mathcal{L}}(\boldsymbol{\epsilon})$, and $
\overline{\boldsymbol{\epsilon}}^\star \in \arg\min_{\boldsymbol{\epsilon}\in\mathcal{E}} \overline{\mathcal{L}}(\boldsymbol{\epsilon})$ denote minimizers over the \emph{same} feasible set $\mathcal{E}$, and define the optimal objective values
\begin{equation}
\mathcal{L}_{\mathrm{opt}} \triangleq \mathcal{L}(\boldsymbol{\epsilon}^\star),\qquad
\mathcal{L}_{\mathrm{lower}} \triangleq \underline{\mathcal{L}}(\underline{\boldsymbol{\epsilon}}^\star).
\end{equation}
\smallskip
\noindent (1) $\mathcal{L}_{\mathrm{lower}}\le \mathcal{L}_{\mathrm{opt}}$: 
By optimality of $\underline{\boldsymbol{\epsilon}}^\star$ for $\underline{\mathcal{L}}$,
$\underline{\mathcal{L}}(\underline{\boldsymbol{\epsilon}}^\star)\le \underline{\mathcal{L}}(\boldsymbol{\epsilon}^\star)$.
Combining with \eqref{eq:F-sandwich} evaluated at $\boldsymbol{\epsilon}^\star$ yields
\begin{eqnarray}
&& \mathcal{L}_{\mathrm{lower}}
= \underline{\mathcal{L}}(\underline{\boldsymbol{\epsilon}}^\star)
\le \underline{\mathcal{L}}(\boldsymbol{\epsilon}^\star)
\le \mathcal{L}(\boldsymbol{\epsilon}^\star)
= \mathcal{L}_{\mathrm{opt}} \\ 
&\Rightarrow& \mathcal{L}_{\mathrm{lower}}\le \mathcal{L}_{\mathrm{opt}}
\end{eqnarray}

\smallskip
\noindent (2) $\mathcal{L}_{\text{opt}}~\le~\min\{\mathcal{L}(\underline{\boldsymbol\epsilon}^\star),\mathcal{L}(\overline{\boldsymbol\epsilon}^\star)\}$: Since $\underline{\boldsymbol{\epsilon}}^\star, \overline{\boldsymbol{\epsilon}}^\star\in\mathcal{E}$, by optimality of
$\boldsymbol{\epsilon}^\star\in\arg\min_{\boldsymbol{\epsilon}\in\mathcal{E}} \mathcal{L}(\boldsymbol{\epsilon})$ we have
\begin{eqnarray}
\mathcal{L}_{\mathrm{opt}} = \mathcal{L}(\boldsymbol{\epsilon}^\star) \le \mathcal{L}(\underline{\boldsymbol\epsilon}^\star), \quad 
\mathcal{L}_{\mathrm{opt}} = \mathcal{L}(\boldsymbol{\epsilon}^\star) \le \mathcal{L}(\overline{\boldsymbol\epsilon}^\star).
\end{eqnarray}
Therefore, $\mathcal{L}_{\text{opt}}~\le~\min\{\mathcal{L}(\underline{\boldsymbol\epsilon}^\star),\mathcal{L}(\overline{\boldsymbol\epsilon}^\star)\}$.

\smallskip
\noindent\textbf{Conclusion.}
Combining the two parts gives
\begin{equation}
\mathcal{L}_{\mathrm{lower}} \ \le\ \mathcal{L}_{\mathrm{opt}} \ \le\ \min\{\mathcal{L},\,\mathcal{L}_{\mathrm{upper}}\},
\end{equation}
which proves the proposition.
\DEL{For each $\epsilon$:
\begin{equation}
L_{\text{lower}}(\epsilon) < L(\epsilon) < L_{\text{upper}}(\epsilon)
\end{equation}
let $d(\epsilon)$ = $L(\epsilon)$ - $L_{\text{lower}}(\epsilon)$, $d(\epsilon)$ > 0, the min value of $L_{\text{lower}}(\epsilon)$ satisfy:
\begin{align}
\min(L_{\text{lower}}(\epsilon)) &= L_{\text{lower}}(\epsilon^*) 
= L(\epsilon^*) - d(\epsilon^*) \notag \\
&< L(\epsilon^*) < \min(L(\epsilon)) 
\end{align}
Similarly, we can get min($L(\epsilon)$)<min($L_{\text{upper}}(\epsilon)$). Loss is an extremum of $L(\epsilon)$, so min($L(\epsilon)$)$\leq$ Loss, thus
\begin{equation}
\min(L(\epsilon))\leq min(Loss,Loss_{\text{upper}})
\end{equation}
\begin{equation}
\mathcal{L}_{\text{opt}} \in \bigl(\mathcal{L}_{\text{lower}},  \min\{\mathcal{L}, \mathcal{L}_{\text{upper}}\}\bigr).
\end{equation}

Since $\underline{\mathcal{L}}'_k \le \mathcal{L}'_k \le \overline{\mathcal{L}}'_k$
pointwise, the optimal objective value of the original problem, $\mathcal{L}_{\mathrm{opt}}$,
satisfies the sandwich bound
\begin{equation}
\mathcal{L}_{\mathrm{lb}} \le \mathcal{L}_{\mathrm{opt}} \le
\min\{\mathcal{L}, \mathcal{L}_{\mathrm{ub}}\}.
\label{eq:loss-sandwich}
\end{equation}}
\end{proof}

\DEL{
\subsection{Radial optimality under multi-dimensional $\ell_2$-mDP}
We first characterize an idealized regime in which DPN mechanisms are optimal under multi-dimensional $\ell_2$-mDP, and then use it to motivate our empirical \emph{radial-fit score}.

{\rd 

\begin{definition}[Discrete $\ell_2$-Laplace noise]
\label{def:discrete_l2_laplace}
Let $d\in\mathbb{N}$ and let $\|\cdot\|_2$ denote the Euclidean norm on $\mathbb{Z}^d$.
For a parameter $\epsilon>0$, the \emph{discrete $\ell_2$-Laplace} distribution on $\mathbb{Z}^d$
is the pmf
\begin{equation}
\label{eq:discrete_l2_laplace_pmf}
g_\epsilon(\mathbf{v})
\ \triangleq\
\frac{\exp(-\epsilon\|\mathbf{v}\|_2)}{Z_\epsilon},
\qquad \mathbf{v}\in\mathbb{Z}^d,
\end{equation}
where the normalizing constant is
\begin{equation}
\label{eq:discrete_l2_laplace_Z}
Z_\epsilon \ \triangleq\ \sum_{\mathbf{u}\in\mathbb{Z}^d}\exp(-\epsilon\|\mathbf{u}\|_2).
\end{equation}
Given $\mathbf{V}\sim g_\epsilon$, the associated additive-noise mechanism is
\begin{equation}
\label{eq:additive_mech_discrete}
\mathcal{M}(\mathbf{x})\ \triangleq\ \mathbf{x}+\mathbf{V},\qquad \mathbf{x}\in\mathbb{Z}^d.
\end{equation}
\end{definition}

\begin{proposition}[Discrete $\ell_2$-Laplace satisfies $\epsilon$-mDP]
\label{prop:discrete_l2_laplace_mdp}
Let $\mathcal{X}=\mathcal{Y}=\mathbb{Z}^d$ with metric $d(\mathbf{x},\mathbf{x}')=\|\mathbf{x}-\mathbf{x}'\|_2$.
Let $\mathcal{M}(\mathbf{x})=\mathbf{x}+\mathbf{V}$ where $\mathbf{V}\sim g_\epsilon$ is the discrete $\ell_2$-Laplace noise
in Definition~\ref{def:discrete_l2_laplace}.
Then, for all $\mathbf{x},\mathbf{x}',\mathbf{y}\in\mathbb{Z}^d$,
\begin{equation}
\label{eq:abslog_mdp_prop}
\bigl|\ln \Pr[\mathcal{M}(\mathbf{x})=\mathbf{y}]-\ln \Pr[\mathcal{M}(\mathbf{x}')=\mathbf{y}]\bigr|
\ \le\ \epsilon\,\|\mathbf{x}-\mathbf{x}'\|_2.
\end{equation}
\end{proposition}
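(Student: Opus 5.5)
The argument is a direct computation of the log-likelihood ratio, closed off by the triangle inequality for the Euclidean norm. First I will check that the mechanism is well defined, meaning $Z_\epsilon<\infty$. Group the lattice points by $\ell_\infty$-shells: the number of $\mathbf{u}\in\mathbb{Z}^d$ with $\|\mathbf{u}\|_\infty=r$ is $O(r^{d-1})$. On that shell, $\|\mathbf{u}\|_2\ge\|\mathbf{u}\|_\infty=r$. Hence
\[
Z_\epsilon\le\sum_{r\ge0}C_d\,(r+1)^{d-1}e^{-\epsilon r}<\infty
\]
for every $\epsilon>0$. So $g_\epsilon$ is a genuine pmf with strictly positive mass everywhere on $\mathbb{Z}^d$, and all logarithms below are finite.

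Next, fix $\mathbf{x},\mathbf{x}',\mathbf{y}\in\mathbb{Z}^d$. Since the noise is additive and input-independent, we have $\Pr[\mathcal{M}(\mathbf{x})=\mathbf{y}]=\Pr[\mathbf{V}=\mathbf{y}-\mathbf{x}]=g_\epsilon(\mathbf{y}-\mathbf{x})$, and likewise for $\mathbf{x}'$. The normalizer $Z_\epsilon$ is the same for both inputs, so it cancels in the difference of logarithms, leaving
\[
\ln\Pr[\mathcal{M}(\mathbf{x})=\mathbf{y}]-\ln\Pr[\mathcal{M}(\mathbf{x}')=\mathbf{y}]
=\epsilon\bigl(\|\mathbf{y}-\mathbf{x}'\|_2-\|\mathbf{y}-\mathbf{x}\|_2\bigr).
\]

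Finally, I apply the reverse triangle inequality $\bigl|\|\mathbf{a}\|_2-\|\mathbf{b}\|_2\bigr|\le\|\mathbf{a}-\mathbf{b}\|_2$ with $\mathbf{a}=\mathbf{y}-\mathbf{x}'$ and $\mathbf{b}=\mathbf{y}-\mathbf{x}$. Here $\mathbf{a}-\mathbf{b}=\mathbf{x}-\mathbf{x}'$, so the absolute value of the expression above is at most $\epsilon\|\mathbf{x}-\mathbf{x}'\|_2$, which is exactly \eqref{eq:abslog_mdp_prop}.

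The only step needing care is the summability of $Z_\epsilon$, and the shell-counting bound above handles it. The rest is a two-line computation. Note that, unlike the exponential mechanism of Definition~\ref{def:ExpMech}, there is no factor $\tfrac12$ here. This works because translation invariance on all of $\mathbb{Z}^d$ makes the normalizer independent of $\mathbf{x}$, so it cancels exactly.
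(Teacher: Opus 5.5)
Your proof is correct and follows the paper's argument essentially verbatim: write both probabilities as $g_\epsilon(\mathbf{y}-\mathbf{x})$ and $g_\epsilon(\mathbf{y}-\mathbf{x}')$, cancel the common normalizer, and bound $\epsilon\bigl(\|\mathbf{y}-\mathbf{x}'\|_2-\|\mathbf{y}-\mathbf{x}\|_2\bigr)$ using the reverse triangle inequality. Your shell-counting check that $Z_\epsilon<\infty$, which makes the pmf well defined and strictly positive, is a small rigor addition that the paper takes for granted.
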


\begin{proof}
For any $\mathbf{x},\mathbf{y}\in\mathbb{Z}^d$, by additivity,
\begin{equation}
\Pr[\mathcal{M}(\mathbf{x})=\mathbf{y}]
=\Pr[\mathbf{V}=\mathbf{y}-\mathbf{x}]
=g_\lambda(\mathbf{y}-\mathbf{x}).
\end{equation}
Hence for any $\mathbf{x},\mathbf{x}',\mathbf{y}\in\mathbb{Z}^d$,
\begin{eqnarray}
&& \ln \Pr[\mathcal{M}(\mathbf{x})=\mathbf{y}]-\ln \Pr[\mathcal{M}(\mathbf{x}')=\mathbf{y}] \\
&=& \ln g_\epsilon(\mathbf{y}-\mathbf{x})-\ln g_\epsilon(\mathbf{y}-\mathbf{x}')\\
&=& \Big(-\epsilon\|\mathbf{y}-\mathbf{x}\|_2-\ln Z_\epsilon\Big)
-\Big(-\epsilon\|\mathbf{y}-\mathbf{x}'\|_2-\ln Z_\epsilon\Big)\\
&=& \epsilon\big(\|\mathbf{y}-\mathbf{x}'\|_2-\|\mathbf{y}-\mathbf{x}\|_2\big).
\end{eqnarray}
Taking absolute values and using the reverse triangle inequality,
\begin{equation}
\bigl|\|\mathbf{y}-\mathbf{x}'\|_2-\|\mathbf{y}-\mathbf{x}\|_2\bigr|
\ \le\ \|\mathbf{x}-\mathbf{x}'\|_2,
\end{equation}
we obtain
\begin{equation}
\bigl|\ln \Pr[\mathcal{M}(\mathbf{x})=\mathbf{y}]-\ln \Pr[\mathcal{M}(\mathbf{x}')=\mathbf{y}]\bigr|
\ \le\ \epsilon\,\|\mathbf{x}-\mathbf{x}'\|_2.
\end{equation}
\end{proof}

\begin{theorem}[Radial optimality]
\label{thm:l2_radial_opt}
Consider an \emph{oblivious additive-noise} mechanism
$\mathcal{M}(\mathbf{x})=\mathbf{x}+V$, where $V$ is input-independent with pmf $g:\mathcal{X}\to[0,1]$.
Assume $g$ is \emph{radial} w.r.t.\ $\|\cdot\|_2$, i.e.,
\begin{equation}
g(v)=h(\|v\|_2)\quad \text{for some } h:\mathcal{R}\to[0,\infty),
\quad
\mathcal{R}\triangleq\{\|v\|_2:\ v\in\mathcal{Y}\},
\end{equation}
and $h$ is nonincreasing on $\mathcal{R}$. Assume $\mathcal{Y}$ is discrete and $h$ satisfies the (discrete) radial Lipschitz constraint
\begin{equation}
\label{eq:radial_lip_discrete}
\bigl|\log h(r)-\log h(r')\bigr|\le \epsilon |r-r'|, 
~\forall r,r'\in\mathcal{R}\text{ with }h(r),h(r')>0.
\end{equation}
Then among all such $g$, the second moment $\mathbb{E}\|V\|_2^2$ is minimized by the
\emph{discrete $\ell_2$-Laplace} pmf
\begin{equation}
\label{eq:disc_knorm}
g^\star(v)=\frac{1}{Z}\exp\bigl(-\epsilon\|v\|_2\bigr),
\qquad
Z\triangleq \sum_{u\in\mathbb{Z}^d}\exp\bigl(-\epsilon\|u\|_2\bigr),
\end{equation}
and, in particular,
\begin{equation}
\mathbb{E}\|V\|_2^2 \ \ge\ \mathbb{E}_{g^\star}\|V\|_2^2
=\frac{\sum_{v\in\mathbb{Z}^d}\|v\|_2^2\,e^{-\epsilon\|v\|_2}}
        {\sum_{v\in\mathbb{Z}^d}e^{-\epsilon\|v\|_2}}.
\end{equation}
\end{theorem}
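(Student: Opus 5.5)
The plan is a likelihood-ratio argument. I would compare an arbitrary admissible radial pmf $g(v)=h(\|v\|_2)$ with the candidate $g^\star(v)=h^\star(\|v\|_2)$, where $h^\star(r)=e^{-\epsilon r}/Z$. The argument has two parts: show that the ratio $\phi(r)\triangleq h(r)/h^\star(r)$ is nondecreasing in the radius, and then conclude with a covariance (Chebyshev association) inequality. Throughout, I take the noise support to be $\mathcal{Y}=\mathbb{Z}^d$, so that $g$ and $g^\star$ live on the same lattice and $\mathcal{R}=\{\|v\|_2: v\in\mathbb{Z}^d\}$.

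\emph{Step 1 (support).} First I would argue that $h(r)>0$ for every $r\in\mathcal{R}$. The mechanism is meant to satisfy mDP, so by Definition~\ref{def:mDP_lipschitz} every output has finite log-probability whenever another input gives it positive probability. Equivalently, for an additive mechanism, $g(\mathbf{y}-\mathbf{x})>0$ for some $\mathbf{x}$ forces $g(\mathbf{y}-\mathbf{x}')>0$ for all $\mathbf{x}'$. If this does not follow cleanly from the stated hypotheses, I would add it as a standing assumption, since the constraint \eqref{eq:radial_lip_discrete} is otherwise vacuous off the support. I expect this step to be the main technical obstacle: the assumption needs care, and convergence issues (finiteness of $\mathbb{E}\|V\|_2^2$ and of $Z$) also need a brief check. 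For $g^\star$ both quantities are finite because lattice-point counts grow only polynomially in $r$. If $\mathbb{E}_g\|V\|_2^2=\infty$ the bound is trivial.

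\emph{Step 2 (monotone ratio).} For $r>r'$ in $\mathcal{R}$, I would combine monotonicity of $h$ with \eqref{eq:radial_lip_discrete}. This gives $\log h(r)-\log h(r')\ge -\epsilon(r-r')$. Hence
\begin{equation*}
\frac{\phi(r)}{\phi(r')}=\frac{h(r)}{h(r')}\,e^{\epsilon(r-r')}\ge 1,
\end{equation*}
so $\phi$ is nondecreasing on $\mathcal{R}$. In other words, among admissible radial profiles, the Laplace profile decays as fast as the constraint allows.

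\emph{Step 3 (association inequality).} Let $R=\|V\|_2$ with $V\sim g^\star$. Since both pmfs sum to one, $\mathbb{E}_{g^\star}[\phi(R)]=\sum_v g(v)=1$. Then
\begin{equation*}
\mathbb{E}_g\|V\|_2^2=\mathbb{E}_{g^\star}\bigl[\phi(R)R^2\bigr]\ge \mathbb{E}_{g^\star}[\phi(R)]\;\mathbb{E}_{g^\star}[R^2]=\mathbb{E}_{g^\star}\|V\|_2^2 .
\end{equation*}
The inequality is Chebyshev's association inequality, because $\phi(R)$ and $R^2$ are both nondecreasing functions of the same scalar variable $R$. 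I would prove it in one line by symmetrizing: $\tfrac12\mathbb{E}\bigl[(\phi(R)-\phi(R'))(R^2-R'^2)\bigr]\ge 0$ for an independent copy $R'$. The explicit formula for $\mathbb{E}_{g^\star}\|V\|_2^2$ then follows by substituting the pmf from Definition~\ref{def:discrete_l2_laplace}. Finally, Proposition~\ref{prop:discrete_l2_laplace_mdp} confirms that $g^\star$ itself is feasible (radial, nonincreasing, and satisfying \eqref{eq:radial_lip_discrete} with equality), so it attains the minimum.
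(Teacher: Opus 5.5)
Your proposal is correct and is essentially the paper's argument. Your ratio $\phi(r)=h(r)/h^\star(r)$ is, up to the constant $Z$, the paper's tilt $q(r)=e^{\epsilon r}h(r)$; the paper likewise shows this tilt is nondecreasing and then applies the discrete Chebyshev sum inequality under the Laplace radial weights $\mu(r)\propto N(r)e^{-\epsilon r}$, which is exactly your association step.

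The full-support assumption you flag in Step~1 is genuinely needed, and the paper uses it only implicitly. Without it, a point mass at the origin satisfies the stated hypotheses vacuously and has zero second moment, so adding full support as a standing assumption is the right call.
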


\begin{proof}
Define the random variable  $R\triangleq \|V\|_2$ and define the \emph{shell counts}
\begin{equation}
N(r)\ \triangleq\ |\{v\in\mathbb{Z}^d:\ \|v\|_2=r\}|\qquad (r\in\mathcal{R}).
\end{equation}
By radiality, the induced pmf of $R$ is
\begin{equation}
\label{eq:pr_discrete}
\Pr[R=r]=N(r)\,h(r),\qquad r\in\mathcal{R},
\end{equation}
and therefore
\begin{equation}
\label{eq:ER2_discrete}
\mathbb{E}[R^2]=\sum_{r\in\mathcal{R}} r^2\,N(r)\,h(r).
\end{equation}

Define the \emph{tilt} function $q(r)\triangleq e^{\epsilon r}h(r)$ on $\mathcal{R}$.
Since $h$ is nonincreasing and \eqref{eq:radial_lip_discrete} implies
$\log h(r)-\log h(r')\ge -\epsilon(r-r')$ for $r'>r$, we have
\begin{equation}
h(r)\le e^{\epsilon(r'-r)}h(r')\quad \Rightarrow\quad e^{\epsilon r}h(r)\le e^{\epsilon r'}h(r'),
\end{equation}
i.e., $q$ is nondecreasing on $\mathcal{R}$. Now define a (finite) base measure on $\mathcal{R}$ by
\begin{equation}
\mu(r)\ \triangleq\ N(r)e^{-\epsilon r},\qquad r\in\mathcal{R}.
\end{equation}
Then \eqref{eq:pr_discrete} can be rewritten as
\begin{equation}
\Pr[R=r]=\mu(r)\,q(r),\qquad r\in\mathcal{R},
\end{equation}
up to normalization. Concretely,
\begin{equation}
1=\sum_{r\in\mathcal{R}}\Pr[R=r]=\sum_{r\in\mathcal{R}}\mu(r)q(r),
\end{equation}
and
\begin{equation}
\label{eq:ratio_form_discrete}
\mathbb{E}[R^2]
=\frac{\sum_{r\in\mathcal{R}} r^2\,\mu(r)\,q(r)}{\sum_{r\in\mathcal{R}} \mu(r)\,q(r)}.
\end{equation}

Consider the probability weights $\bar\mu(r)\triangleq \mu(r)/\sum_{s\in\mathcal{R}}\mu(s)$.
Both sequences $r\mapsto r^2$ and $r\mapsto q(r)$ are nondecreasing on $\mathcal{R}$.
By the discrete Chebyshev sum inequality,
\begin{equation}
\textstyle \sum_{r\in\mathcal{R}} r^2 q(r)\bar\mu(r)
\ \ge\
\Bigl(\sum_{r\in\mathcal{R}} r^2\bar\mu(r)\Bigr)\Bigl(\sum_{r\in\mathcal{R}} q(r)\bar\mu(r)\Bigr).
\end{equation}
Multiplying through and plugging into \eqref{eq:ratio_form_discrete} yields
\begin{equation}
\mathbb{E}[R^2]
\ \ge\
\frac{\sum_{r\in\mathcal{R}} r^2\,\mu(r)}{\sum_{r\in\mathcal{R}} \mu(r)}
=
\frac{\sum_{r\in\mathcal{R}} r^2\,N(r)e^{-\epsilon r}}{\sum_{r\in\mathcal{R}} N(r)e^{-\epsilon r}}.
\end{equation}
Equality in Chebyshev's inequality holds iff $q$ is constant $\bar\mu$-a.e., i.e.,
$e^{\epsilon r}h(r)\equiv C$, hence $h(r)=C e^{-\epsilon r}$ for all $r\in\mathcal{R}$ with $\mu(r)>0$.
This implies
\begin{equation}
g^\star(v)=h(\|v\|_2)=C e^{-\epsilon\|v\|_2},
\end{equation}
and normalizing over $v\in\mathbb{Z}^d$ gives \eqref{eq:disc_knorm}.
Finally, rewriting the bound in terms of $v$ rather than $r$ gives
\begin{equation}
\mathbb{E}\|V\|_2^2 \ge
\frac{\sum_{v\in\mathbb{Z}^d}\|v\|_2^2\,e^{-\epsilon\|v\|_2}}
     {\sum_{v\in\mathbb{Z}^d}e^{-\epsilon\|v\|_2}},
\end{equation}
completing the proof.
\end{proof}
}

Theorem~\ref{thm:l2_radial_opt} shows that among radial additive-noise mechanisms satisfying $\ell_2$-mDP, the $\ell_2$-radial Laplace distribution $g^\star(v)\propto e^{-\epsilon\|v\|_2}$ minimizes the canonical radial proxy $\mathbb{E}\|V\|_2^2$. This suggests that when the \emph{task-induced} loss is well approximated by a nondecreasing function of the metric radius, DPN mechanisms whose probabilities depend only on $d(\mathbf{x},\mathbf{y})$ should already be near-optimal. In contrast, our radial-fit score does not optimize $\mathbb{E}\|V\|_2^2$; it measures how closely $c_k(\mathbf{y}_k|\mathbf{x}_k)$ follows a radial--monotone model by fitting the best nondecreasing mapping from $d_{\mathcal{G}_k}(\mathbf{x}_k,\mathbf{y}_k)$ to $c_k(\mathbf{y}_k|\mathbf{x}_k)$ (via isotonic regression; see \S\ref{subsec:stage3}). Thus, large $S_k^{\mathrm{rad}}$ indicates a DPN-favorable regime, while small $S_k^{\mathrm{rad}}$ signals non-radial structure where OPT-based channel design can yield larger gains.}

\section{Detailed Algorithm Descriptions}

\DEL{
\subsection{Attribute Grouping}
Consider a dataset with $m$ records and $n$ attributes. Let $\mathcal{A}_1,\dots,\mathcal{A}_n$ denote the attributes. Our first step is to quantify pairwise statistical dependencies among attributes in order to obtain a principled grouping structure. We compute an $n\times n$ independence (or mutual-information–based) matrix
\begin{equation}
I_{ij} = \texttt{independence\_analysis}(\mathcal{A}_i, \mathcal{A}_j), ~1 \le i,j \le n.
\end{equation}
To quantify pairwise statistical dependence among attributes, we use a \emph{mutual-information (MI)} measure that captures potential \emph{nonlinear} associations.
For each pair of attributes $(\mathcal{A}_i,\mathcal{A}_j)$, let $\{(x^{(s)}_i,x^{(s)}_j)\}_{s=1}^{N}$ denote their paired samples in the dataset, where $x^{(s)}_i$ and $x^{(s)}_j$ are the realizations of $\mathcal{A}_i$ and $\mathcal{A}_j$, respectively.
\begin{itemize}
    \item \textbf{Discretization.}
We first discretize each continuous attribute into $B$ bins using an adaptive rule $B = \min\bigl(20, \bigl\lfloor \sqrt{N} \bigr\rceil \bigr)$, where $\lfloor \cdot \rceil$ rounds to the nearest integer.
Let $\tilde{x}^{(s)}_i \in \{1,\dots,B\}$ and $\tilde{x}^{(s)}_j \in \{1,\dots,B\}$ be the discretized bin indices of $x^{(s)}_i$ and $x^{(s)}_j$.
\item \textbf{Empirical distributions.}
We estimate the empirical marginal and joint distributions by histogram counts:
\begin{equation}
\hat{p}_{X_i}(a) = \frac{1}{N}\sum_{s=1}^{N}\mathbf{1}\{\tilde{x}^{(s)}_i=a\},\quad
\hat{p}_{X_j}(b) = \frac{1}{N}\sum_{s=1}^{N}\mathbf{1}\{\tilde{x}^{(s)}_j=b\},
\end{equation}
\begin{equation}
\hat{p}_{XY}(a,b) = \frac{1}{N}\sum_{s=1}^{N}\mathbf{1}\{\tilde{x}^{(s)}_i=a,\tilde{x}^{(s)}_j=b\},
\qquad a,b\in\{1,\dots,B\}.
\end{equation}
\item \textbf{Mutual information and entropies.}
Using the empirical probabilities, we compute
\begin{equation}
\mathrm{MI}(X;Y)
=
\sum_{a=1}^{B}\sum_{b=1}^{B}
\hat{p}_{XY}(a,b)
\log\frac{\hat{p}_{XY}(a,b)}{\hat{p}_{X_i}(a)\hat{p}_{X_j}(b)},
\end{equation}
with the convention that terms with $\hat{p}_{XY}(a,b)=0$ contribute $0$.
The (Shannon) entropies are
\begin{equation}
H(X) = -\sum_{a=1}^{B}\hat{p}_{X_i}(a)\log \hat{p}_{X_i}(a),
\qquad
H(Y) = -\sum_{b=1}^{B}\hat{p}_{X_j}(b)\log \hat{p}_{X_j}(b).
\end{equation}
\item \textbf{Normalized mutual information.}
To obtain a scale-free dependence score in $[0,1]$, we use the normalized MI
\begin{equation}
\mathrm{NMI}(X;Y)
=
\frac{2\mathrm{MI}(X;Y)}{H(X)+H(Y)}.
\end{equation}
Finally, we define the pairwise dependence (or ``independence-analysis'') matrix entry as
\begin{equation}
I_{ij} = \mathrm{NMI}(\mathcal{A}_i,\mathcal{A}_j),
\qquad 1\le i,j\le n.
\end{equation}
\end{itemize}
In our implementation, we interpret $\mathrm{NMI}<0.1$ as nearly independent,
$[0.1,0.3)$ as weak dependence, $[0.3,0.5)$ as moderate dependence, and $\ge 0.5$ as strong dependence.
When the normalized mutual information satisfies $\mathrm{NMI}(\mathcal{A}_i,\mathcal{A}_j) > 0.1$, 
we regard the two attributes as sufficiently dependent and assign them to the same group.}

\subsection{Lagrange multiplier and KKT conditions}
\label{subsec:langrange}
We consider the upper-level $\ell_2$-composition budget allocation problem
\begin{equation}
\label{eq:app_alloc_problem}
\textstyle  \min_{\boldsymbol{\epsilon}\in\mathbb{R}_+^K}\ \sum_{k=1}^K \mathcal{L}_k(\epsilon_k)
\quad \text{s.t.}\quad
\sum_{k=1}^K \epsilon_k^2 \le \epsilon^2 .
\end{equation}
Introducing a Lagrange multiplier $\lambda\ge 0$ for the quadratic constraint yields the Lagrangian
\begin{equation}
\label{eq:app_lagrangian}
\textstyle  \mathcal{J}(\boldsymbol{\epsilon},\lambda)
=
\sum_{k=1}^K\Big(\mathcal{L}_k(\epsilon_k)+\lambda\,\epsilon_k^2\Big)
-\lambda\,\epsilon^2 .
\end{equation}
A pair $(\boldsymbol{\epsilon}^\star,\lambda^\star)$ is KKT-stationary if it satisfies:
\begin{align}
\label{eq:app_kkt_primal}
&\textbf{(Primal feasibility)} \quad \sum_{k=1}^K (\epsilon_k^\star)^2 \le \epsilon^2,\quad \epsilon_k^\star\ge 0\ \ \forall k,\\
\label{eq:app_kkt_dual}
&\textbf{(Dual feasibility)} \quad \lambda^\star\ge 0,\\
\label{eq:app_kkt_comp}
&\textbf{(Complementary slackness)} \quad \lambda^\star\Big(\sum_{k=1}^K (\epsilon_k^\star)^2-\epsilon^2\Big)=0,\\
\label{eq:app_kkt_stat}
&\textbf{(Stationarity)} \quad 0\in \partial \mathcal{L}_k(\epsilon_k^\star)+2\lambda^\star \epsilon_k^\star + N_{\mathbb{R}_+}(\epsilon_k^\star),\quad \forall k,
\end{align}
where $\partial \mathcal{L}_k$ denotes the (possibly set-valued) subgradient and
$N_{\mathbb{R}_+}(\cdot)$ is the normal cone of $\mathbb{R}_+$.
In the differentiable interior case ($\epsilon_k^\star>0$), \eqref{eq:app_kkt_stat} reduces to
$\mathcal{L}_k'(\epsilon_k^\star)+2\lambda^\star\epsilon_k^\star=0$.

For any fixed $\lambda\ge 0$, the minimization over $\boldsymbol{\epsilon}$ separates across $k$:
\begin{equation}
\label{eq:app_eps_map}
\epsilon_k(\lambda)\in
\arg\min_{\epsilon_k\ge 0}\Big\{\mathcal{L}_k(\epsilon_k)+\lambda\,\epsilon_k^2\Big\},
\qquad k=1,\dots,K.
\end{equation}
Define the induced budget-usage function
\begin{equation}
\label{eq:app_s_lambda}
\textstyle s(\lambda)\triangleq \sum_{k=1}^K \epsilon_k(\lambda)^2.
\end{equation}

\begin{proposition}[Monotonicity of $s(\lambda)$]
\label{prop:s_monotone}
The function $s(\lambda)$ is nonincreasing in $\lambda$.
\end{proposition}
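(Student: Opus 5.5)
The argument is the standard exchange (monotone comparative statics) argument for penalized problems, applied separately to each group. Since $s(\lambda)=\sum_k \epsilon_k(\lambda)^2$ and each $\epsilon_k(\lambda)$ is defined by its own scalar problem in Eq.~\eqref{eq:app_eps_map}, it suffices to show that for each $k$ the map $\lambda\mapsto\epsilon_k(\lambda)^2$ is nonincreasing. Summing over $k$ then gives the claim.

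Fix $k$ and two multipliers $0\le\lambda_1<\lambda_2$. Write $a=\epsilon_k(\lambda_1)$ and $b=\epsilon_k(\lambda_2)$ for chosen minimizers. Optimality of $a$ at $\lambda_1$, tested against the feasible point $b$, gives
\[
\mathcal{L}_k(a)+\lambda_1 a^2\le \mathcal{L}_k(b)+\lambda_1 b^2 .
\]
Optimality of $b$ at $\lambda_2$, tested against $a$, gives
\[
\mathcal{L}_k(b)+\lambda_2 b^2\le \mathcal{L}_k(a)+\lambda_2 a^2 .
\]
Adding the two inequalities cancels the $\mathcal{L}_k$ terms and leaves $(\lambda_2-\lambda_1)(b^2-a^2)\le 0$. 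Since $\lambda_2>\lambda_1$, this yields $b^2\le a^2$, that is, $\epsilon_k(\lambda_2)^2\le\epsilon_k(\lambda_1)^2$. This argument uses no convexity or differentiability of $\mathcal{L}_k$, so it also covers the non-convex empirical loss--budget curves. It only requires the minimizers to exist.

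The main obstacle is well-definedness. If the $\arg\min$ in Eq.~\eqref{eq:app_eps_map} is not a singleton, $s(\lambda)$ depends on which minimizer is selected. The exchange inequality still holds for \emph{any} selection at $\lambda_1$ and any selection at $\lambda_2$. So every selection of $s$ is nonindexed-order-preserving in the reverse direction, i.e., nonincreasing, and in set-valued form $\sup s(\lambda_2)\le\inf s(\lambda_1)$.

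Existence of minimizers is handled as follows. The budget range is a compact interval: $\mathcal{L}_k$ is evaluated on $[0,\epsilon]$ because of the constraint, and the grid/envelope construction is continuous and piecewise linear. For $\lambda>0$ the penalty term is coercive. In either case a continuous objective on a compact set attains its minimum. At $\lambda=0$ I would restrict to the compact budget interval $[0,\epsilon]$ so that a minimizer still exists.
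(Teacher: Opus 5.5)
Your proof is correct and is essentially the paper's argument: you test each minimizer against the other, add the two inequalities so the $\mathcal{L}_k$ terms cancel, conclude $\epsilon_k(\lambda_2)^2\le\epsilon_k(\lambda_1)^2$ for each group, and sum over $k$, with no convexity needed. Your remarks on selection-independence and existence of minimizers go a little beyond the paper. One caution: use the same feasible interval (e.g., $[0,\epsilon]$) for every $\lambda$, because if only the $\lambda_1=0$ problem is restricted, the minimizer at $\lambda_2>0$ need not be feasible at $\lambda_1$ and the exchange step fails.
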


\begin{proof}
Fix $\lambda_1<\lambda_2$. For each $k$, let $e_1\in\epsilon_k(\lambda_1)$ and $e_2\in\epsilon_k(\lambda_2)$.
By optimality of $e_1$ at $\lambda_1$ and $e_2$ at $\lambda_2$,
\begin{align*}
\mathcal{L}_k(e_1)+\lambda_1 e_1^2 &\le \mathcal{L}_k(e_2)+\lambda_1 e_2^2,\\
\mathcal{L}_k(e_2)+\lambda_2 e_2^2 &\le \mathcal{L}_k(e_1)+\lambda_2 e_1^2.
\end{align*}
Adding the two inequalities cancels the $\mathcal{L}_k(\cdot)$ terms and yields
$(\lambda_1-\lambda_2)(e_1^2-e_2^2)\le 0$.
Since $\lambda_1-\lambda_2<0$, we obtain $e_2^2\le e_1^2$.
Applying this argument for each $k$ and summing gives
$\sum_k e_2^2\le \sum_k e_1^2$, i.e., $s(\lambda_2)\le s(\lambda_1)$.
\end{proof}

\noindent \textbf{Solving for $\lambda^\star$ by bisection.}  
If $s(0)\le \epsilon^2$, then the constraint in \eqref{eq:app_alloc_problem} is inactive and we may take
$\lambda^\star=0$ with $\epsilon_k^\star=\epsilon_k(0)$.
Otherwise, by Proposition~\ref{prop:s_monotone}, the KKT solution must satisfy the scalar equation
\begin{equation}
\label{eq:app_root}
s(\lambda^\star)=\epsilon^2,
\end{equation}
which can be solved efficiently by bisection.
A simple bracketing strategy is to start with $\lambda_{\min}=0$ and increase $\lambda_{\max}$
geometrically until $s(\lambda_{\max})\le \epsilon^2$; then bisection on $[\lambda_{\min},\lambda_{\max}]$
returns $\lambda^\star$ to a desired tolerance, and the corresponding allocation is
\begin{equation}
\label{eq:app_final_alloc}
\epsilon_k^\star=\epsilon_k(\lambda^\star),\qquad k=1,\dots,K.
\end{equation}
This procedure requires only repeated evaluations of the $K$ one-dimensional problems in \eqref{eq:app_eps_map},
and does not require convexity of $\mathcal{L}_k(\cdot)$ (only existence of minimizers on the searched interval).

\DEL{
\subsection{Langrange Multiplier and KKT Conditions}
\label{subsec:langrange}
We introduce a multiplier $\lambda\ge0$ for the inequality constraint. The Lagrangian is
\begin{align}
\mathcal{L}(\mathbf \epsilon,\lambda)=\sum_{k=1}^K\big(\mathcal{L}_k(\epsilon_k)+\lambda \epsilon_k^2\big)-\lambda \epsilon^2.
\end{align}

The KKT system for a pair $(\mathbf \epsilon^\star,\lambda^\star)$ is
\begin{itemize}
    \item Primal feasibility: $\sum_{k=1}^K (\epsilon_k^\star)^2\le \epsilon^2$ and $\epsilon_k^\star\ge0$; 
    \item Dual feasibility: $\lambda^\star\ge0$;
    \item Stationarity: {\rd $\partial\mathcal{L}_k(\epsilon_k^\star)+2\lambda^\star \epsilon_k^\star\ni 0$} or ($\mathcal{L}_k'(\epsilon_k^\star)+2\lambda^\star \epsilon_k^\star=0$); 
    \item Complementary slackness: $ \lambda^\star\left(\sum_{k=1}^K (\epsilon_k^\star)^2-\epsilon^2\right)=0$. 
\end{itemize}
\CQ{The goal of the following step }
Because $\mathbf \epsilon$ separates across coordinates, define the \emph{per–coordinate minimizer map}
\begin{equation}
{\rd \epsilon_k^\star(\lambda)\in\arg\min_{\epsilon_k\ge0}\big(\mathcal{L}_k(\epsilon_k)+\lambda \epsilon_k^2\big),~ \lambda\ge0,}
\end{equation}
and the \emph{budget usage function}
\begin{equation}
\textstyle s(\lambda)\triangleq\sum_{k=1}^K \big(\epsilon_k^\star(\lambda)\big)^2.  
\end{equation}
\begin{proposition} 
\label{prop:nonincreasingS}
$s(\lambda)$ is nonincreasing.
\end{proposition}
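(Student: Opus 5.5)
The plan is a standard monotone-comparative-statics (exchange) argument applied coordinatewise. I will use only the defining property of the per-coordinate minimizer map, namely that $\epsilon_k^\star(\lambda)$ minimizes $\mathcal{L}_k(\epsilon_k)+\lambda\epsilon_k^2$ over $\epsilon_k\ge 0$. I will not use any convexity or differentiability of $\mathcal{L}_k$, which the paper explicitly does not assume.

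\textbf{Step 1: compare two multipliers for a single group.} Fix $\lambda_1<\lambda_2$ and a group index $k$. Write $e_1=\epsilon_k^\star(\lambda_1)$ and $e_2=\epsilon_k^\star(\lambda_2)$; any selection from the argmin sets works. Optimality of $e_1$ at $\lambda_1$ against the competitor $e_2$ gives
$$\mathcal{L}_k(e_1)+\lambda_1e_1^2\le \mathcal{L}_k(e_2)+\lambda_1e_2^2.$$
Optimality of $e_2$ at $\lambda_2$ against the competitor $e_1$ gives the symmetric inequality. Both values are finite, because the loss--budget functions are real-valued on the searched interval.

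\textbf{Step 2: add the two inequalities.} The $\mathcal{L}_k$ terms cancel, leaving $(\lambda_1-\lambda_2)(e_1^2-e_2^2)\le 0$. Since $\lambda_1-\lambda_2<0$, this yields $e_2^2\le e_1^2$. So the squared per-group budget is nonincreasing in $\lambda$.

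\textbf{Step 3: sum over groups.} Summing Step 2 over $k=1,\dots,K$ gives $s(\lambda_2)\le s(\lambda_1)$, which is the claim.

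The only delicate point is that $\epsilon_k^\star(\lambda)$ may be set-valued. I handle this by noting that the inequality in Step 2 holds for every pair of selections. Hence $s$ is nonincreasing for any fixed selection rule, and even in the set sense: every element of $s(\lambda_2)$ is at most every element of $s(\lambda_1)$.

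The remaining requirement is existence of minimizers, which I take from the standing assumption. For example, $\mathcal{L}_k$ is continuous (piecewise linear) on a compact budget interval, so minimizers exist.
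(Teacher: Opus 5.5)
Your proposal is correct and matches the paper's proof essentially line for line. Fix $\lambda_1<\lambda_2$, take arbitrary selections $e_1\in\epsilon_k(\lambda_1)$ and $e_2\in\epsilon_k(\lambda_2)$, and add the two optimality inequalities so the $\mathcal{L}_k$ terms cancel to give $(\lambda_1-\lambda_2)(e_1^2-e_2^2)\le 0$; summing over $k$ then finishes the argument, with existence of minimizers (not convexity) as the only assumption. Your observation that this holds for every pair of selections, so the monotonicity also holds in the set-valued sense, is a valid small sharpening of the paper's treatment.
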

\begin{proof}
Fix $\lambda \ge 0$. For each coordinate $i$, consider the one–dimensional problem
\begin{equation}
e_i(\lambda)\in \arg\min_{e\in [a_i,b_i]} F_i(e,\lambda)
\quad\text{with}\quad
F_i(e,\lambda):=\mathcal{L}_k(e)+\lambda e^2 .
\end{equation}
Because $F_i(\cdot,\lambda)$ is continuous on the compact interval $[a_i,b_i]$, a minimizer $e_i(\lambda)$ exists (not necessarily unique).

Now let $0\le \lambda_1<\lambda_2$. For each $i$, define
\begin{equation}
e_1 \in \arg\min_{e} F_i(e,\lambda_1),
\qquad
e_2 \in \arg\min_{e} F_i(e,\lambda_2).
\end{equation}
We want to show that $e_2^2 \le e_1^2$ (and, if $e_1,e_2\ge 0$, then $e_2\le e_1$).

By the optimality of $e_1$ and $e_2$, we have
\begin{align}
\mathcal{L}_k(e_1)+\lambda_1 e_1^2 &\le \mathcal{L}_k(e_2)+\lambda_1 e_2^2, \tag{A}\\
\mathcal{L}_k(e_2)+\lambda_2 e_2^2 &\le \mathcal{L}_k(e_1)+\lambda_2 e_1^2. \tag{B}
\end{align}
Adding (A) and (B) eliminates the $\mathcal{L}_k(\cdot)$ terms:
\begin{equation}
\lambda_1 e_1^2+\lambda_2 e_2^2 \le \lambda_1 e_2^2+\lambda_2 e_1^2,
\end{equation}
which can be rearranged as
\begin{equation}
(\lambda_1-\lambda_2)(e_1^2-e_2^2)\le 0.
\end{equation}
Since $\lambda_1-\lambda_2<0$, it follows that $e_2^2\le e_1^2$. In particular, if $e_1,e_2\ge 0$, then $e_2\le e_1$.

Applying the inequality coordinatewise and summing over $i$ yields
\begin{equation}
S(\lambda_2)=\sum_i e_2(i)^2 \le \sum_i e_1(i)^2 = S(\lambda_1),
\end{equation}
so $s(\lambda)$ is nonincreasing in $\lambda$.

Finally, note that no convexity of $\mathcal{L}_k$ is required; the existence of minimizers on $[a_i,b_i]$ suffices for the argument.
\end{proof}

\medskip
\noindent\textbf{Search $\lambda^\star$ via the monotonicity of $s(\lambda)$.}
As
\begin{equation}
\lambda_1<\lambda_2 \Longrightarrow \big|\epsilon_k^\star(\lambda_2)\big| \le \big|\epsilon_k^\star(\lambda_1)\big|
\quad\Rightarrow\quad
s(\lambda)\ \text{is nonincreasing in }\lambda.
\end{equation}
Hence above reduces to solving the scalar equation
\begin{equation}
s(\lambda)=\epsilon^2,
\end{equation}
which can be done by bisection. The primal optimizer is then
$\epsilon_k^\star=\epsilon_k^\star(\lambda^\star)$ and the optimal value is $\mathcal{L}^\star=\sum_k \mathcal{L}_k\big(\epsilon_k^\star(\lambda^\star)\big)$.

\medskip
\noindent\textbf{Overall} \CQ{We might remove the following red content since it is repeated content: 

(i) Provide an oracle to evaluate $\mathcal{L}_k(\epsilon_k)$ (EM, BENDERS, PAnDA,...).

(ii) For a fixed $\lambda$, compute each $\epsilon_k^\star(\lambda)=\arg\min_{x\ge0}\big(\mathcal{L}_k(\epsilon_k)+\lambda \epsilon_k^2\big)$ by a 1D solver.

(iii) Form $s(\lambda)=\sum_k (\epsilon_k^\star(\lambda))^2$.

(iv) Use bisection on $\lambda\in[0,\lambda_{\max}]$ to match $s(\lambda)$ to $\epsilon^2$ within tolerance.

A simple initialization: start with $\lambda=0$; if $S(0)\le\epsilon^2$, we are done ($\lambda^\star=0$). Otherwise, increase $\lambda$ (in our case, *10 in every loop) until $s(\lambda)<\epsilon^2$(in our case, <= max(1e-16, 1e-6*target)), then bisect.}}

\subsection{Horizontal Shift of Convex Surrogate}
\label{subsec:horzonshift}
Recall that the LPs in \eqref{eq:gcm-lp}--\eqref{eq:lcm-lp} produce \emph{discrete} envelopes on the sampled budgets
$\{\epsilon_k^{(i)}\}_{i=1}^{N_k}$:
\begin{equation}
\label{eq:disc_env_grid}
\underline{\mathcal{L}}_k(\epsilon_k^{(i)}) \ \le\  \mathcal{L}_k(\epsilon_k^{(i)}) \ \le\
\overline{\mathcal{L}}_k(\epsilon_k^{(i)}), ~i=1,\dots,N_k .
\end{equation}
To use these envelopes in the continuous upper-level optimization over $\epsilon_k\in[\epsilon_k^{\min},\epsilon_k^{\max}]$, we extend them to all budgets via \emph{piecewise-linear (PL)} interpolation.
However, directly interpolating $\{(\epsilon_k^{(i)},\underline{\mathcal{L}}_k(\epsilon_k^{(i)}))\}$ and
$\{(\epsilon_k^{(i)},\overline{\mathcal{L}}_k(\epsilon_k^{(i)}))\}$ may violate the intended ordering between samples when
$\mathcal{L}_k(\cdot)$ varies nonlinearly inside each interval. 

We therefore apply a conservative one-step \emph{horizontal shift} prior to interpolation: the upper-envelope samples are shifted
right by one grid step $\Delta\epsilon_k$, while the lower-envelope samples are shifted left by $\Delta\epsilon_k$.
This creates a safety margin within each grid cell. After shifting, we perform a minimal boundary correction so that the shifted
sequences are again defined on $[\epsilon_k^{\min},\epsilon_k^{\max}]$ (discard the out-of-range endpoint and insert one boundary
point by extrapolation/alignment, as described above).

Formally, consider an aligned uniform grid $\epsilon_0<\cdots<\epsilon_N$ (step $\Delta\epsilon_k$) and discrete samples
$\{(\epsilon_i,\overline{\ell}_i)\}_{i=0}^N$ and $\{(\epsilon_i,\underline{\ell}_i)\}_{i=0}^N$ (dropping subscript $k$ for clarity).
After shifting and boundary correction, we obtain two sequences on the original grid,
$\{(\epsilon_i,\overline{\ell}'_i)\}_{i=0}^N$ and $\{(\epsilon_i,\underline{\ell}'_i)\}_{i=0}^N$.
We then define the continuous surrogates by PL interpolation:
\begin{equation}
\label{eq:shifted_PL_env}
\overline{\mathcal{L}}'_k(\epsilon)
= \mathrm{PL}\!\left(\{(\epsilon_i,\,\overline{\ell}'_i)\}_{i=0}^{N}\right),
~\underline{\mathcal{L}}'_k(\epsilon)
= \mathrm{PL}\!\left(\{(\epsilon_i,\,\underline{\ell}'_i)\}_{i=0}^{N}\right),
\end{equation}
where $\mathrm{PL}(\cdot)$ connects adjacent sample points by straight line segments.
The following proposition formalizes when the shift--correct--interpolate procedure yields globally valid bounds and preserves
convexity of the upper surrogate.

\begin{proposition}[Validity and convexity of shifted PL envelopes]
\label{prop:shifted_PL_valid}
Fix a uniform grid $\epsilon_0<\epsilon_1<\cdots<\epsilon_N$ with step $\Delta>0$.
Let $\mathcal{L}:[\epsilon_0,\epsilon_N]\to\mathbb{R}$ be \emph{non-increasing}.
Assume discrete envelopes $\{\underline{\ell}_i\}_{i=0}^N$ and $\{\overline{\ell}_i\}_{i=0}^N$ satisfy
\begin{equation}
\label{eq:prop_grid_bounds}
\underline{\ell}_i \ \le\ \mathcal{L}(\epsilon_i)\ \le\ \overline{\ell}_i,
\qquad i=0,\dots,N.
\end{equation}
Define shifted, boundary-corrected samples by
\begin{eqnarray}
\label{eq:prop_shift_def}
\overline{\ell}'_0 \ \ge\ \mathcal{L}(\epsilon_0), \quad
\overline{\ell}'_i \triangleq \overline{\ell}_{i-1}~(i=1,\dots,N),\\ \nonumber 
\underline{\ell}'_i \triangleq \underline{\ell}_{i+1}~(i=0,\dots,N-1), \quad
\underline{\ell}'_N \ \le\ \mathcal{L}(\epsilon_N),
\end{eqnarray}
and let $\overline{\mathcal{L}}'=\mathrm{PL}(\{(\epsilon_i,\overline{\ell}'_i)\}_{i=0}^N)$ and
$\underline{\mathcal{L}}'=\mathrm{PL}(\{(\epsilon_i,\underline{\ell}'_i)\}_{i=0}^N)$.
Then
\begin{equation}
\label{eq:prop_global_bounds}
\underline{\mathcal{L}}'(\epsilon)\ \le\ \mathcal{L}(\epsilon)\ \le\ \overline{\mathcal{L}}'(\epsilon),
\qquad \forall \epsilon\in[\epsilon_0,\epsilon_N].
\end{equation}
Moreover, if the shifted upper samples are discretely convex, i.e., the segment slopes
\begin{equation}
\label{eq:prop_slopes}
s'_i \triangleq \frac{\overline{\ell}'_{i+1}-\overline{\ell}'_i}{\epsilon_{i+1}-\epsilon_i},
\qquad i=0,\dots,N-1,
\end{equation}
satisfy $s'_0\le s'_1\le\cdots\le s'_{N-1}$, then $\overline{\mathcal{L}}'$ is convex on $[\epsilon_0,\epsilon_N]$.
\end{proposition}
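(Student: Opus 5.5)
The proof will work one grid cell at a time. Fix $\epsilon\in[\epsilon_i,\epsilon_{i+1}]$ for some $i\in\{0,\dots,N-1\}$.

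On this cell, piecewise-linear interpolation is a convex combination of its two endpoints. It therefore lies between the smaller and the larger of the two endpoint values:
\[
\min\{\underline{\ell}'_i,\underline{\ell}'_{i+1}\}\le\underline{\mathcal{L}}'(\epsilon)\le\max\{\underline{\ell}'_i,\underline{\ell}'_{i+1}\},
\]
and similarly for $\overline{\mathcal{L}}'$. So it suffices to show two facts:
\[
\max\{\underline{\ell}'_i,\underline{\ell}'_{i+1}\}\le\mathcal{L}(\epsilon)
\quad\text{and}\quad
\min\{\overline{\ell}'_i,\overline{\ell}'_{i+1}\}\ge\mathcal{L}(\epsilon).
\]

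\textbf{Lower bound.} For any sample index $j$, the shift definition \eqref{eq:prop_shift_def} and the grid bound \eqref{eq:prop_grid_bounds} give $\underline{\ell}'_j\le\mathcal{L}(\epsilon_{j+1})$ when $j\le N-1$. The boundary correction gives $\underline{\ell}'_N\le\mathcal{L}(\epsilon_N)$ directly.
\begin{itemize}
\item For $j=i$: $\underline{\ell}'_i\le\mathcal{L}(\epsilon_{i+1})$.
\item For $j=i+1$: $\underline{\ell}'_{i+1}$ is bounded by $\mathcal{L}(\epsilon_{i+2})$ if $i+1\le N-1$, or by $\mathcal{L}(\epsilon_N)=\mathcal{L}(\epsilon_{i+1})$ if $i+1=N$.
\end{itemize}
Since $\mathcal{L}$ is non-increasing, $\mathcal{L}(\epsilon_{i+2})\le\mathcal{L}(\epsilon_{i+1})$. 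Also $\epsilon\le\epsilon_{i+1}$, so $\mathcal{L}(\epsilon_{i+1})\le\mathcal{L}(\epsilon)$. Hence both endpoint values are at most $\mathcal{L}(\epsilon)$, and $\underline{\mathcal{L}}'(\epsilon)\le\mathcal{L}(\epsilon)$.

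\textbf{Upper bound.} This is symmetric. For $j\ge1$ we have $\overline{\ell}'_j=\overline{\ell}_{j-1}\ge\mathcal{L}(\epsilon_{j-1})$, and the boundary correction gives $\overline{\ell}'_0\ge\mathcal{L}(\epsilon_0)$.
\begin{itemize}
\item For $j=i$: $\overline{\ell}'_i\ge\mathcal{L}(\epsilon_{i-1})$, or $\overline{\ell}'_0\ge\mathcal{L}(\epsilon_0)$ when $i=0$.
\item For $j=i+1$: $\overline{\ell}'_{i+1}\ge\mathcal{L}(\epsilon_i)$.
\end{itemize}
By monotonicity, each of these lower bounds is at least $\mathcal{L}(\epsilon_i)$, which is at least $\mathcal{L}(\epsilon)$ because $\epsilon\ge\epsilon_i$. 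So $\overline{\mathcal{L}}'(\epsilon)\ge\mathcal{L}(\epsilon)$, which completes \eqref{eq:prop_global_bounds}.

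The key idea is that a one-step shift compensates for the unknown variation of $\mathcal{L}$ inside a cell. Monotonicity is exactly what makes one step enough, and no continuity of $\mathcal{L}$ is needed. The only care required is the index bookkeeping at the boundary cells $i=0$ and $i=N-1$, and that is handled precisely by the two boundary-correction conditions.

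\textbf{Convexity.} A piecewise-linear function with nondecreasing consecutive slopes is convex. To see this, write $\overline{\mathcal{L}}'(\epsilon)=\overline{\ell}'_0+\int_{\epsilon_0}^{\epsilon}g(t)\,dt$, where $g$ is the step function equal to $s'_i$ on $[\epsilon_i,\epsilon_{i+1})$. Since $g$ is nondecreasing, its integral is convex. Equivalently, $\overline{\mathcal{L}}'(\epsilon)=\max_i\bigl(\overline{\ell}'_i+s'_i(\epsilon-\epsilon_i)\bigr)$ on $[\epsilon_0,\epsilon_N]$, a maximum of affine functions. Either route is routine.
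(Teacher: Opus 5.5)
Your proof is correct and follows the paper's route. You fix a cell $[\epsilon_i,\epsilon_{i+1}]$, treat the piecewise-linear interpolant as a convex combination of the two shifted endpoint values, and bound each endpoint against $\mathcal{L}(\epsilon)$ using the one-step shift, the grid bounds, and monotonicity. The boundary corrections handle $i=0$ for the upper envelope and $i=N-1$ for the lower one, and convexity follows from nondecreasing slopes, which you justify a bit more explicitly than the paper's one-line remark on derivatives.
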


\begin{proof}
Fix any $\epsilon\in[\epsilon_0,\epsilon_N]$ and choose the unique $i\in\{0,\dots,N-1\}$ such that
$\epsilon\in[\epsilon_i,\epsilon_{i+1}]$.
Let
\begin{equation}
\theta \triangleq \frac{\epsilon_{i+1}-\epsilon}{\Delta}\in[0,1],
\qquad 1-\theta=\frac{\epsilon-\epsilon_i}{\Delta}.
\end{equation}
By PL interpolation,
\begin{equation}
\label{eq:prop_PL_local}
\overline{\mathcal{L}}'(\epsilon)=\theta\,\overline{\ell}'_i+(1-\theta)\,\overline{\ell}'_{i+1},
\quad
\underline{\mathcal{L}}'(\epsilon)=\theta\,\underline{\ell}'_i+(1-\theta)\,\underline{\ell}'_{i+1}.
\end{equation}

\smallskip
\noindent \textbf{(1) Upper bound.} If $i=0$, then $\overline{\ell}'_0\ge \mathcal{L}(\epsilon_0)$ 
by \eqref{eq:prop_shift_def}, and $\overline{\ell}'_1=\overline{\ell}_0\ge \mathcal{L}(\epsilon_0)\ge \mathcal{L}(\epsilon)$ since $\mathcal{L}$ is non-increasing. Thus both endpoints satisfy $\overline{\ell}'_0,\overline{\ell}'_1\ge \mathcal{L}(\epsilon)$, and \eqref{eq:prop_PL_local} implies $\overline{\mathcal{L}}'(\epsilon)\ge \mathcal{L}(\epsilon)$. If $i\ge 1$, then $\overline{\ell}'_i=\overline{\ell}_{i-1}\ge \mathcal{L}(\epsilon_{i-1})\ge \mathcal{L}(\epsilon)$ and
$\overline{\ell}'_{i+1}=\overline{\ell}_{i}\ge \mathcal{L}(\epsilon_{i})\ge \mathcal{L}(\epsilon)$ (again by monotonicity), so \eqref{eq:prop_PL_local} yields $\overline{\mathcal{L}}'(\epsilon)\ge \mathcal{L}(\epsilon)$.

\smallskip
\noindent \textbf{(2) Lower bound.} If $i\le N-2$, then $\underline{\ell}'_i=\underline{\ell}_{i+1}\le \mathcal{L}(\epsilon_{i+1})\le \mathcal{L}(\epsilon)$ 
and 
$\underline{\ell}'_{i+1}=\underline{\ell}_{i+2}\le \mathcal{L}(\epsilon_{i+2})\le \mathcal{L}(\epsilon)$ 
since $\epsilon\le \epsilon_{i+1}\le \epsilon_{i+2}$ and $\mathcal{L}$ is non-increasing; hence \eqref{eq:prop_PL_local} implies
$\underline{\mathcal{L}}'(\epsilon)\le \mathcal{L}(\epsilon)$.
If $i=N-1$, then $\underline{\ell}'_{N-1}=\underline{\ell}_N\le \mathcal{L}(\epsilon_N)\le \mathcal{L}(\epsilon)$ and
$\underline{\ell}'_N\le \mathcal{L}(\epsilon_N)\le \mathcal{L}(\epsilon)$ by \eqref{eq:prop_shift_def}, so again
$\underline{\mathcal{L}}'(\epsilon)\le \mathcal{L}(\epsilon)$.

Combining the two parts proves \eqref{eq:prop_global_bounds}.

\smallskip
\noindent \textbf{(3) Convexity.}
On each interval $(\epsilon_i,\epsilon_{i+1})$, $\overline{\mathcal{L}}'$ is affine with derivative $s'_i$ in \eqref{eq:prop_slopes}.
If $s'_0\le \cdots \le s'_{N-1}$, these derivatives are nondecreasing across intervals, hence $\overline{\mathcal{L}}'$ is convex on
$[\epsilon_0,\epsilon_N]$.
\end{proof}

\section{Discussions: Extending DepHDP to $\ell_1$ and $\ell_\infty$ Record Metrics.}
\label{sec:discuss:norms}

Our paper primarily focuses on the $\ell_2$ record metric and the resulting $\ell_2$-type budget coupling, but the DepHDP pipeline is largely \emph{norm-agnostic} and can be extended to other $p$ values, particularly $p\in\{1,\infty\}$, with only localized modifications.

\smallskip
\noindent\textbf{(1) Group mechanisms remain LP-based for all $p$.}
For a fixed attribute group $\mathcal{G}_k$, Stage~3 designs an $(\epsilon_k,d_{\mathcal{G}_k,p})$-mDP mechanism by solving an LP over the channel $\mathbf{Z}_k$.
This step continues to be an LP for $p=1$ and $p=\infty$ because the privacy constraints are still linear in the decision variables:
\begin{equation}
\forall \mathbf{x}_k,\mathbf{x}'_k,\mathbf{y}_k:~
z_k(\mathbf{y}_k|\mathbf{x}_k)\ \le\
\exp\!\bigl(\epsilon_k\, d_{\mathcal{G}_k,p}(\mathbf{x}_k,\mathbf{x}'_k)\bigr)\,
z_k(\mathbf{y}_k|\mathbf{x}'_k).
\end{equation}
Only the \emph{constants} $d_{\mathcal{G}_k,p}(\mathbf{x}_k,\mathbf{x}'_k)$ change with $p$; the optimization template is unchanged.

\smallskip
\noindent\textbf{(2) The main change is Stage~4: the budget-coupling norm.}
Under the product-form composition used in DepHDP, the global $(\epsilon,d_p)$-mDP guarantee is enforced by an $\ell_p$-type coupling on the per-group budgets, i.e.,
\begin{eqnarray}
\|\boldsymbol{\epsilon}\|_q \le \epsilon
\quad\Longleftrightarrow\quad
\sum_{k=1}^K \epsilon_k^q \le \epsilon^q\ (q<\infty), \\ 
\qquad\text{or}\qquad
\max_{k\in[K]} \epsilon_k \le \epsilon\ (q=\infty).
\end{eqnarray}
where $\frac{1}{p}+\frac{1}{q}=1$. DepHDP’s upper-level objective remains separable, $\sum_k \beta_k \mathcal{L}_k(\epsilon_k)$, so Stage~4 continues to be a (piecewise-linear) convex program after the same envelope construction; only the feasible set changes.

\smallskip
\noindent\textbf{Case $p=1$ and $q=\infty$.}
For the additive metric $d_1$, the budget constraint becomes $\max_k \epsilon_k\le \epsilon$, or $0\le \epsilon_k\le\epsilon$ for each group. Hence Stage~4 decouples into $K$ independent one-dimensional problems:
\begin{equation}
\epsilon_k^\star \in
\arg\min_{0\le \epsilon_k\le \epsilon}
\beta_k\widetilde{\mathcal{L}}_k(\epsilon_k).
\end{equation}
If the surrogate losses decrease with $\epsilon_k$, the optimum assigns $\epsilon_k^\star=\epsilon$ to each group, which remains valid because the global metric is additive.

\smallskip
\noindent\textbf{Case $p=\infty$ and $q=1$.}
For the max metric $d_\infty$, the budget constraint becomes
\begin{equation}
\textstyle \sum_{k=1}^K \epsilon_k \le \epsilon .
\end{equation}
This is a linear coupling, so Stage~4 remains a convex allocation problem. This case is more restrictive because the total budget must be shared across groups. Moreover, if group differences are binary, then $d_\infty(x,x')\in\{0,1\}$ for distinct records, so $(\epsilon,d_\infty)$-mDP approaches a standard $\epsilon$-LDP-style worst-case guarantee.

\smallskip
\noindent\textbf{Summary.}
DepHDP extends naturally to $p\in{1,\infty}$. Stages~1--3 keep the same LP-based structure, while Stage~4 only replaces the $\ell_2$ coupling with the corresponding dual-norm budget constraint. Thus, these extensions require a budget-allocation modification rather than a framework redesign.

\DEL{
\begin{figure}[t]
\centering
\begin{minipage}{0.45\textwidth}
\centering
  \subfigure{
\includegraphics[width=0.7\textwidth, height=0.27\textheight]{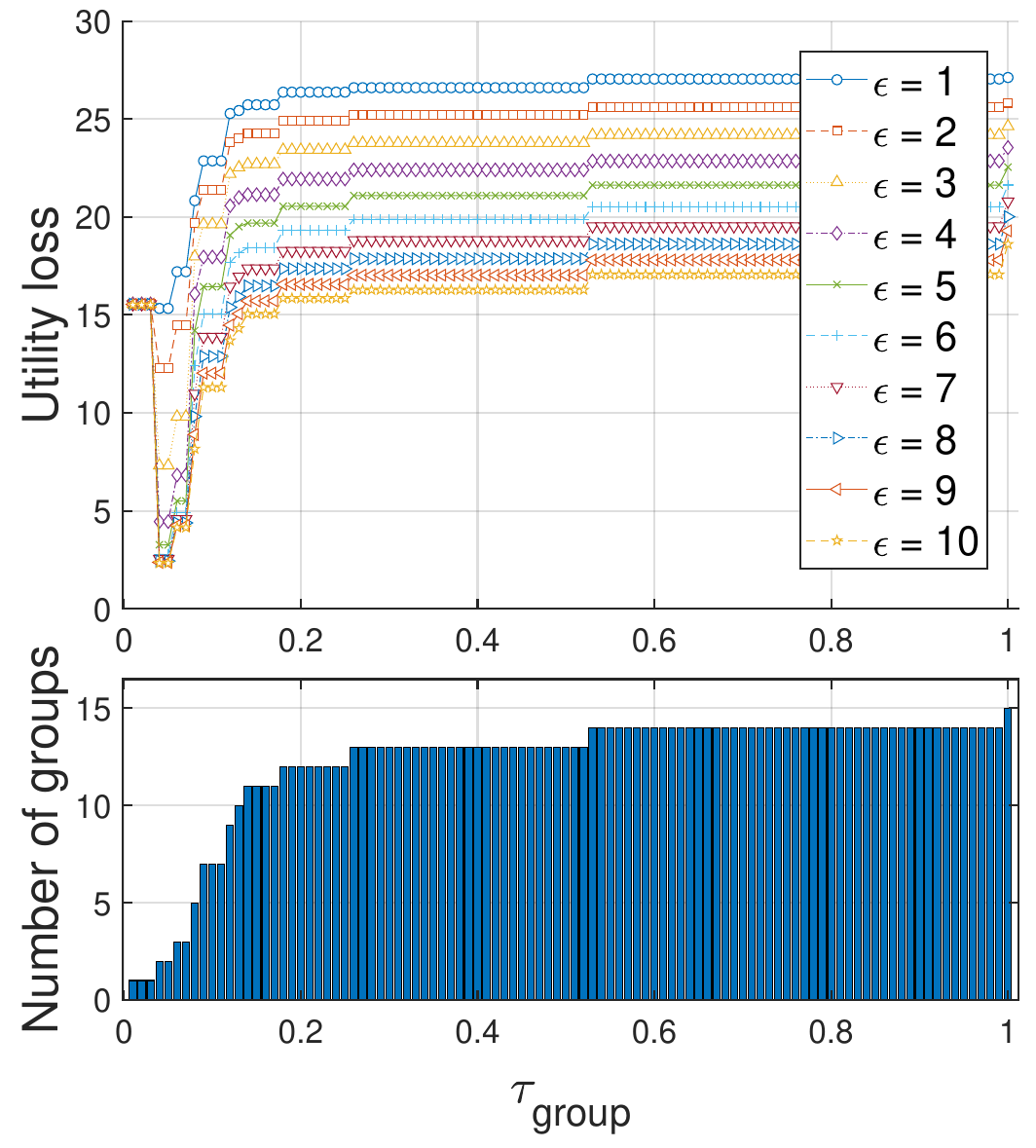}}
\label{}
\end{minipage}
\caption{Impact of attribute grouping threshold ($\tau_\mathrm{group}$) on utility loss in Case II: Frequency Estimation: Adult.}
\label{fig:groupthresholdcase2}
\end{figure}}
\section{Full Additional Experiments}
\label{sec:fulladdexp}
\subsection{DPN Adequacy Score Validation}
\label{subsec:DPNscore}

To evaluate DPN adequacy scores for mechanism selection, we use controlled synthetic loss--distance instances to vary loss geometry and examine its relationship with the OPT--DPN utility gap. Real-world benchmarks rarely provide a large and diverse collection of per-attribute utility-loss matrices. Moreover, in multi-attribute settings, the observed gap can be confounded by heterogeneous attribute sensitivities: some attributes are inherently more critical to downstream utility than others. Synthetic generation allows us to construct over a hundred instances with varied loss matrices and corresponding metrics while controlling these sources of variation. This provides a more controlled test of our hypothesis that higher adequacy scores correspond to smaller utility-loss gaps between DPN and OPT. The experiment therefore evaluates the scores as empirical mechanism-selection indicators, rather than establishing their effectiveness for every real-world task. \looseness=-1

\begin{figure}[t]
\centering
\begin{minipage}{0.50\textwidth}
\centering
  \subfigure[Correlation between MSE-likeness score and OPT-DPN utility gap]{
\includegraphics[width=0.46\textwidth, height=0.13\textheight]{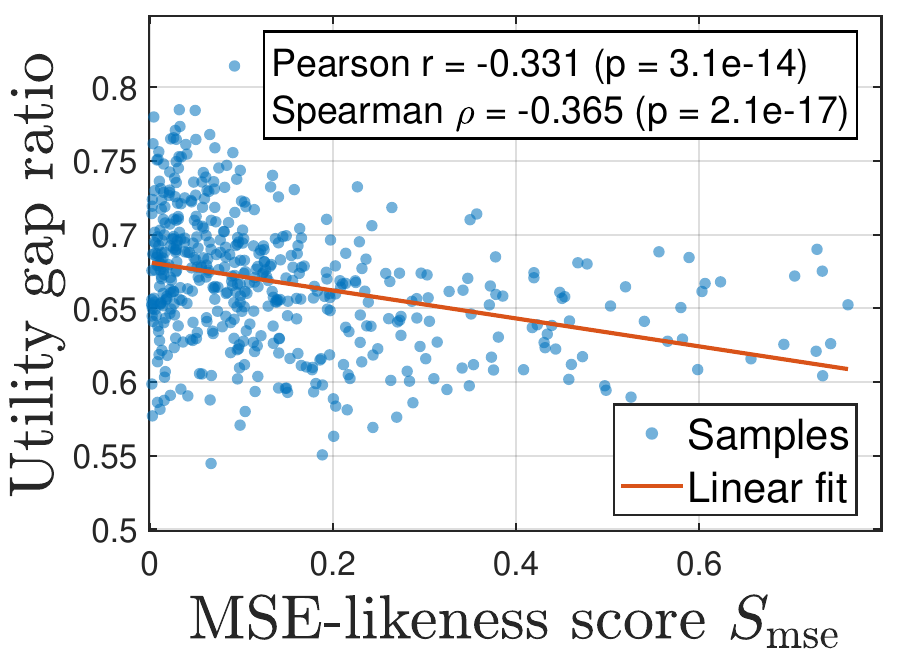}}
\hspace{0.05in}
  \subfigure[Correlation between Radial-fit score and OPT-DPN utility gap]{
\includegraphics[width=0.46\textwidth, height=0.13\textheight]{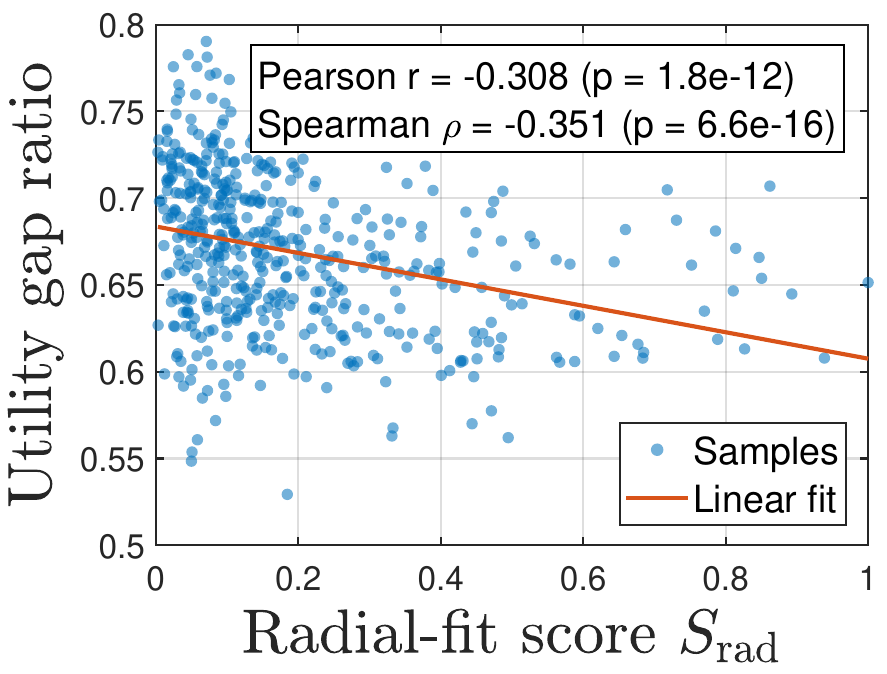}}
\label{}
\end{minipage}
\caption{DPN adequacy score validation.}
\label{fig:DPNadequacy}
\end{figure}

Fig.~\ref{fig:DPNadequacy} reports the results. Each point represents one synthetic instance (a sampled utility-loss matrix paired with its metric), with the \emph{x}-axis showing the adequacy score and the \emph{y}-axis showing the utility gap between OPT and DPN (larger values indicate larger gains from OPT). We observe a clear negative relationship: when the score is high, the loss is largely explained by noise magnitude/radial structure, and OPT provides little additional benefit---indicating that DPN is already near-optimal; when the score is low, OPT can substantially outperform DPN, suggesting non-radial or uneven loss patterns that require optimization-based tailoring. Concretely, (a) the MSE-likeness score is significantly negatively correlated with the OPT--DPN gap (Pearson $r=-0.331$, $p=3.1\times 10^{-14}$; Spearman $\rho=-0.365$, $p=2.1\times 10^{-17}$), and (b) the radial-fit score shows a similar negative correlation (Pearson $r=-0.308$, $p=1.8\times 10^{-12}$; Spearman $\rho=-0.351$, $p=6.6\times 10^{-16}$). The fitted regression lines in both panels summarize the overall decreasing trend. \looseness = -1

\subsection{Validity of Linear Surrogates via Quadratic-Fit Diagnostics}
\label{sec:quad_diagnostic}
DepHDP uses simple surrogate models to approximate record-level and group-wise utility loss as a function of multi-attribute perturbation magnitudes. To examine whether higher-order terms are needed, we fit the quadratic model
\begin{equation}
\textstyle \widehat{\mathcal{L}}(\mathbf{x})
= c + \sum_i a_i x_i
+ \sum_{i<j} b_{ij} x_i x_j
+ \sum_i d_i x_i^2,
\end{equation}
and use the magnitudes of $b_{ij}$ and $d_i$ to diagnose nonlinearity beyond the
linear surrogate.

\begin{table*}[t]
\centering
\caption{Summary of fitted quadratic surrogate coefficients across Cases I--III.
}
\label{tab:quad-summary-all}
\setlength{\tabcolsep}{4pt}
\begin{tabular}{lcccccccc}
\toprule
\textbf{Case} & $\#x_i$ & $c$ & $\min a_i$ & $\max a_i$ &
$\mathrm{med}\,|b_{ij}|$ & $\max |b_{ij}|$ &
$\mathrm{med}\,|d_i|$ & $\max |d_i|$ \\
\midrule
I   & 6 & $7.22\times 10^{-12}$ & $1.00$ & $1.00$ &
$7.72\times 10^{-18}$ & $4.65\times 10^{-14}$ &
$6.04\times 10^{-19}$ & $4.33\times 10^{-14}$ \\
II  & 4 & $-5.84\times 10^{-1}$ & $9.86\times 10^{-1}$ & $1.08$ &
$4.52\times 10^{-5}$ & $5.65\times 10^{-4}$ &
$4.14\times 10^{-6}$ & $1.65\times 10^{-5}$ \\
III & 4 & $3.96\times 10^{-2}$ & $0.98$ & $1.01$ &
$1.66\times 10^{-3}$ & $2.92\times 10^{-3}$ &
$3.55\times 10^{-3}$ & $7.42\times 10^{-3}$ \\
\bottomrule
\end{tabular}
\end{table*}

Table~\ref{tab:quad-summary-all} summarizes the fitted coefficient magnitudes for all three cases using the default grouping threshold $\tau_{\mathrm{group}}$.
Across Cases~I--III, the linear coefficients remain $\Theta(1)$, whereas the interaction and square terms are consistently much smaller in magnitude.
In particular, Case~I exhibits near-zero quadratic components (e.g., $\max|b_{ij}| \approx 10^{-14}$ and $\max|d_i| \approx 10^{-14}$), indicating an essentially linear loss response under our feature scaling.
Cases~II--III show larger (yet still modest) quadratic effects (up to $10^{-3}$), suggesting that higher-order corrections exist but are dominated by the linear trend.
Overall, these results support the use of linear surrogates in DepHDP as an accurate approximation.

\subsection{Sensivity of Utility Loss to Perturbation}
\label{sec:app:loss_sensitivity}

\begin{figure}[t]
\centering
\begin{minipage}{0.48\textwidth}
\centering
  \subfigure[Case I]{
\includegraphics[width=0.30\textwidth, height=0.1\textheight]{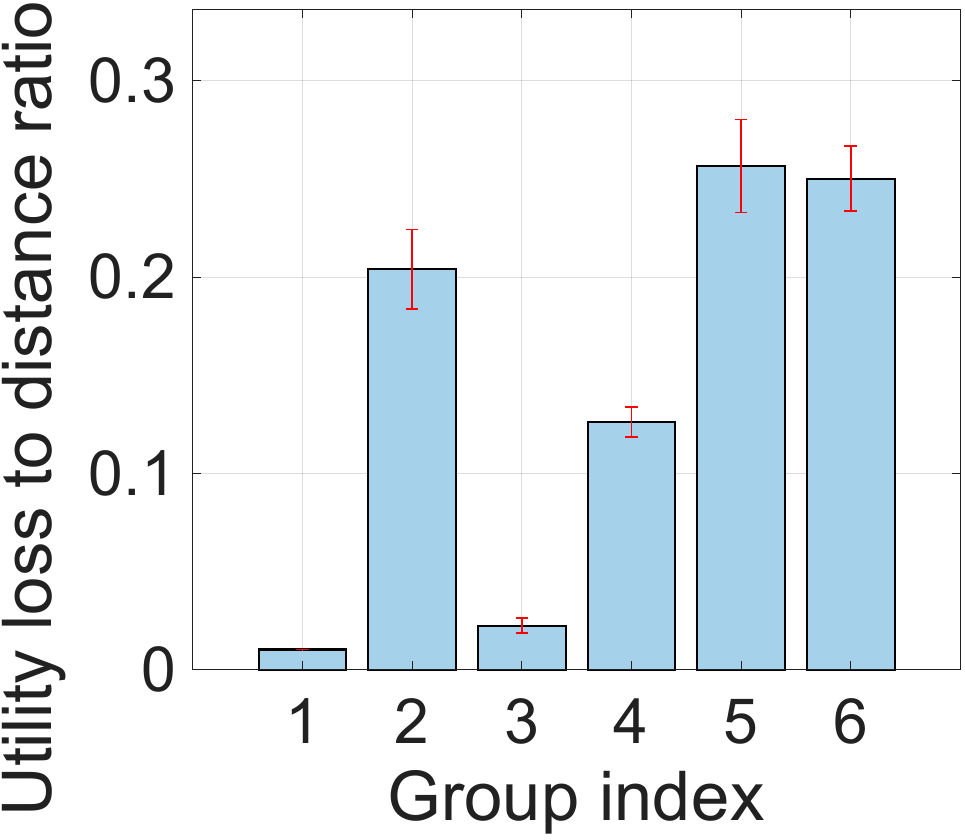}}
  \subfigure[Case II]{
\includegraphics[width=0.30\textwidth, height=0.1\textheight]{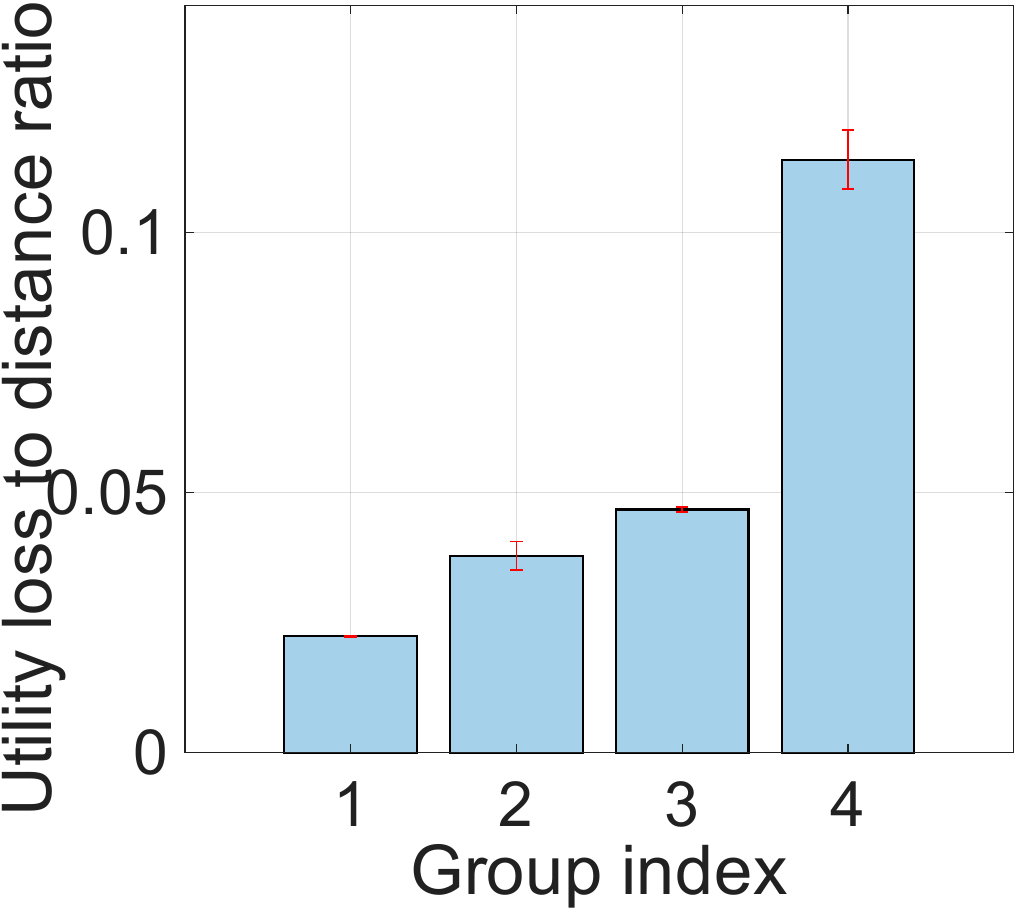}}
  \subfigure[Case III]{
\includegraphics[width=0.30\textwidth, height=0.1\textheight]{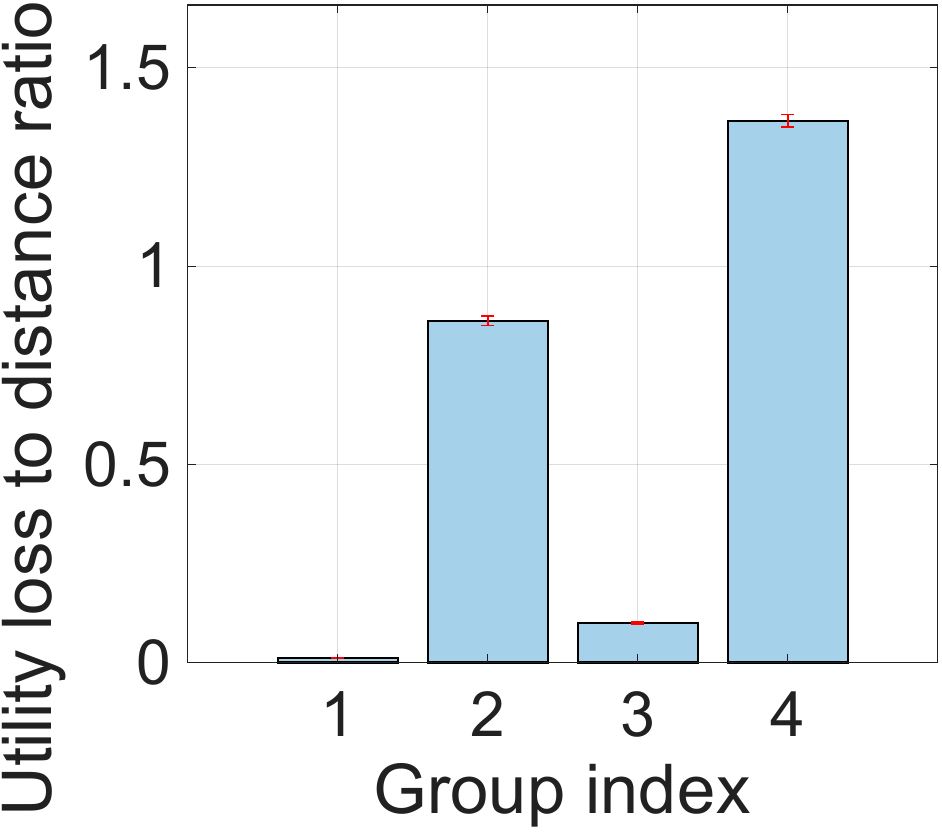}}
\label{}
\end{minipage}
\caption{Sensitivity of utility loss to perturbation magnitude.}
\label{fig:loss_sensitivity}
\end{figure}

A key challenge in multi-attribute perturbation is that attribute groups can differ substantially in \emph{utility criticality}: some groups suffer large utility loss under small perturbations, while others tolerate larger perturbations with little degradation. This heterogeneity makes uniform or heuristic budget splitting generally suboptimal and motivates data-driven budget allocation.

To quantify this effect, Fig.~\ref{fig:loss_sensitivity}(a)--(c) reports a sensitivity study for each group $\mathcal{G}_k$. We compute the within-group utility-loss matrix
$\mathbf{C}_k={c_k(\mathbf{y}_k\mid \mathbf{x}_k)}$
and distance matrix
$\mathbf{D}*k={d*{\mathcal{G}_k}(\mathbf{x}_k,\mathbf{y}_k)}$,
and characterize perturbation sensitivity by
\begin{equation} \nonumber \rho_k \triangleq \Bigl\{\tfrac{c_k(\mathbf{y}_k\mid \mathbf{x}_k)}{d_{\mathcal{G}_k}(\mathbf{x}_k,\mathbf{y}_k)} : (\mathbf{x}_k,\mathbf{y}_k)\in \mathcal{X}_{\mathcal{G}_k}\times \mathcal{Y}_{\mathcal{G}_k}, d_{\mathcal{G}_k}(\mathbf{x}_k,\mathbf{y}_k) > 0 \Bigr\}, \end{equation}
which measures utility loss per unit perturbation magnitude.

Across all the three case studies in Fig. \ref{fig:loss_sensitivity}, we observe substantial \emph{cross-group heterogeneity} in perturbation sensitivity: $\rho_k$ varies markedly across groups rather than concentrating around a single level. This implies that different groups incur very different utility penalties under comparable perturbation magnitudes. Consequently, uniformly splitting the privacy budget across groups can be inefficient; it may over-protect low-sensitivity groups while under-protecting utility-critical groups.

\subsection{Reduced-Domain Comparison with Full-Record OPT}
\label{subsec:summary-full-opt}
\SPRevision{We evaluate 180 configurations: ten reduced-domain cases per task, three tasks, and $\epsilon\in\{1,2,4,6,8,10\}$. Both scoring variants are evaluated for each configuration, yielding 360 archived result rows. Domains contain 42--64 states for frequency estimation, 16--64 for entropy estimation, and 36--64 for crowdsourcing. Full-record OPT optimizes an unrestricted channel on the reduced joint domain, whereas groupwise all-OPT restricts the channel to the prescribed product grouping. In this additional experiment, DepHDP-r/m select OPT when the corresponding score is below 0.3 and DPN otherwise. DPN uses EM with calibrated Bayesian remapping. The two-group allocation search uses 61 angular grid points under the same $\ell_2$ composition budget. These settings are specific to this study; the main experiments retain their original task-specific thresholds.}

\SPRevision{Within each configuration, all methods use the same reduced records, saved distance matrices, task loss, and global privacy budget. Grouped methods also share the grouping and allocation-search procedure. Each mechanism policy is evaluated with its reoptimized allocation, so numerical group budgets may differ; the comparison measures the combined effect of selection and allocation rather than mechanism choice at identical budgets. ALC+DPN in Table~\ref{tab:UL} provides the analogous all-DPN comparison at the original scale.}

\begin{table*}[t]
\centering
\SPRevision{
\small
\caption{Representative configurations at $\epsilon=1$ from the paired result archive; lower loss is better.}
\label{tab:full-opt}
\begin{tabular}{lrrrrrr}
\toprule
Task & Domain & Full OPT & Group OPT & DepHDP-r & DepHDP-m & All-DPN\\
\midrule
Frequency estimation &48&1.626&1.918&2.106&2.106&4.563\\
Entropy estimation &16&0.093&0.202&0.202&0.202&0.657\\
Crowdsourcing &64&8.049&11.342&14.219&11.342&23.336\\
\bottomrule
\end{tabular}}
\end{table*}

\SPRevision{Table~\ref{tab:full-opt} presents representative configurations, Table~\ref{tab:summary-means} reports task--budget means, and Appendix~\ref{sec:summary-full} lists all 180 configurations. For each configuration, loss reduction is $100(\mathcal{L}_{\rm DPN}-\mathcal{L}_{\rm DepHDP})/\mathcal{L}_{\rm DPN}$, and gap closure is $100(\mathcal{L}_{\rm DPN}-\mathcal{L}_{\rm DepHDP})/(\mathcal{L}_{\rm DPN}-\mathcal{L}_{\rm full})$. DepHDP-r achieves 44.10\% mean loss reduction and 59.70\% median gap closure; DepHDP-m achieves 58.44\% and 85.14\%, respectively. These statistics summarize per-configuration ratios across the 180 configurations, rather than ratios of pooled mean losses. Residual gaps reflect grouping, mechanism-family, and allocation restrictions. The results quantify empirical utility gains under the evaluated policies, not a general approximation guarantee. The tables retain the saved experimental metrics; the optional type-specific metrics discussed in Section~3 were not evaluated in these runs.}

\begin{table*}[t]\centering\SPRevision{\small\caption{Mean loss over the ten reduced-domain configurations per task and budget.}\label{tab:summary-means}\begin{tabular}{llrrrrr}\toprule Task & $\epsilon$ & Full OPT & Group OPT & DepHDP-r & DepHDP-m & All-DPN\\\midrule
FE & 1 & 2.33991 & 2.45465 & 2.77441 & 2.77441 & 3.84158 \\
FE & 2 & 1.37556 & 1.56796 & 2.21219 & 2.21219 & 2.49833 \\
FE & 4 & 0.542983 & 0.719433 & 1.49464 & 1.49464 & 1.59341 \\
FE & 6 & 0.232117 & 0.358386 & 1.01695 & 1.01695 & 1.06236 \\
FE & 8 & 0.114878 & 0.192902 & 0.701574 & 0.701574 & 0.724082 \\
FE & 10 & 0.0628776 & 0.110149 & 0.492743 & 0.492743 & 0.504512 \\
EE & 1 & 0.220488 & 0.297246 & 0.303851 & 0.297246 & 0.630096 \\
EE & 2 & 0.0782459 & 0.116311 & 0.124881 & 0.116311 & 0.305361 \\
EE & 4 & 0.0163287 & 0.0237594 & 0.0271871 & 0.0237594 & 0.117735 \\
EE & 6 & 0.00389323 & 0.00564707 & 0.00676537 & 0.00564707 & 0.0512203 \\
EE & 8 & 0.00103439 & 0.00149691 & 0.001853 & 0.00149691 & 0.0237454 \\
EE & 10 & 0.000299935 & 0.000433574 & 0.000546211 & 0.000433574 & 0.0114079 \\
CC & 1 & 8.09056 & 9.59132 & 12.5168 & 9.67667 & 17.6739 \\
CC & 2 & 2.60194 & 3.65556 & 6.88234 & 3.97745 & 10.2406 \\
CC & 4 & 0.477406 & 0.73109 & 1.73085 & 0.805507 & 3.6328 \\
CC & 6 & 0.0997 & 0.161348 & 0.431453 & 0.178014 & 1.61123 \\
CC & 8 & 0.0223023 & 0.0382795 & 0.11598 & 0.042653 & 0.734607 \\
CC & 10 & 0.00512728 & 0.00944443 & 0.033193 & 0.0106965 & 0.340639 \\
\bottomrule\end{tabular}}\end{table*}

\subsection{Held-Out Adequacy and Selection Stability}
\label{subsec:summary-heldout}
\SPRevision{For each eligible group, we repeat a distance-stratified 70\%/30\% fitting/validation split 20 times using deterministic seeds. Both directions of an unordered pair remain in the same split to prevent leakage. Quadratic and isotonic models are fitted on training pairs and scored on held-out pairs. We retain the original thresholds of 0.3, 0.1, and 0.2 for frequency estimation, crowdsourcing, and entropy estimation, respectively. The repeated splits assess the stability of mechanism-selection decisions at these thresholds, rather than optimize them. In deployment, thresholds may be tuned on separate calibration-validation data under an offline computation budget, with an independent final evaluation split.}

\SPRevision{Table~\ref{tab:heldout} reports mean original and held-out scores, mean absolute score differences (MAE), normalized aggregate gaps, and mechanism-selection agreement. The normalized gap is $100\operatorname{MAE}/\operatorname{mean}|S_{\rm original}|$, rather than a mean of individual relative errors. Agreement ranges from 82.5\% to 100\% across evaluated task--score combinations. Two binary entropy groups have only one unordered pair and cannot support nonempty fitting and validation subsets; they are excluded from this check but retained in all task experiments. Crowdsourcing radial-fit decisions show the lowest agreement (82.5\%), indicating greater sensitivity near the threshold. Resampling and threshold sweeps separately assess score stability and sensitivity across privacy budgets; they do not constitute a full resampling-by-budget factorial evaluation.}

\begin{table*}[t]\centering\SPRevision{\small\caption{Held-out score validation over 20 distance-stratified splits per evaluated group. Gap is $100\operatorname{MAE}/\operatorname{mean}|S_{\rm original}|$.}\label{tab:heldout}\begin{tabular}{llrrrrrr}\toprule Score & Task & Groups & Original & Held-out & MAE & Gap (\%) & Agreement (\%)\\\midrule
mse & FE & 6/6 & 0.6983 & 0.6982 & 0.0006 & 0.09 & 100.0 \\
radial & FE & 6/6 & 0.7551 & 0.7545 & 0.0011 & 0.15 & 100.0 \\
mse & CC & 4/4 & 0.1549 & 0.1554 & 0.0096 & 6.21 & 95.0 \\
radial & CC & 4/4 & 0.1669 & 0.1580 & 0.0140 & 8.38 & 82.5 \\
mse & EE & 2/4 & 0.0298 & 0.0304 & 0.0016 & 5.28 & 100.0 \\
radial & EE & 2/4 & 0.0329 & 0.0318 & 0.0023 & 7.12 & 100.0 \\
\bottomrule\end{tabular}}\end{table*}

\subsection{Comparison with an Approximate OPT Backend}\label{subsec:summary-panda}
\SPRevision{PAnDA~\cite{Liu-CCS2025} uses anchor-based approximation to reduce optimization cost. We compare it with DepHDP-r using the archived matched-domain benchmark: four domains of 64, 144, 256, and 400 states, three deterministic repeats each, and $\epsilon\in\{2,4,6,10,15\}$. PAnDA uses $\alpha=0.95$, $\lambda=0.4$, and the benchmark range divisor 100. Table~\ref{tab:panda} reports means over the twelve runs per task and budget. At $\epsilon=6$, reductions are 43.08\%, 24.31\%, and 25.27\% for frequency estimation, entropy estimation, and crowdsourcing. These are domain-averaged results, not one representative domain per task. Each reported task--budget value averages twelve runs across four domains and three repeats. PAnDA is better in some other settings, including entropy estimation at $\epsilon=4$ and crowdsourcing at $\epsilon=10,15$; neither method uniformly dominates. The comparison shows complementary strengths across task and budget settings, including DepHDP-r's improvements in all three tasks at $\epsilon=6$ and PAnDA's gains in the settings above.}

\SPRevision{LP-based OPT provides the optimum of the stated finite-domain channel problem up to numerical solver tolerances. Anchor approximation does not provide that same exact LP guarantee. Our implementation uses EM and LP-based OPT as two calibrated candidates, not an exhaustive mechanism set. Approximate backends may be added after checking their privacy conditions, profiling their loss, and screening their resource requirements. 
PAnDA eligibility in this run is determined by the benchmark's configured 5,000-state resource cap. }

\begin{table}[t]\centering\SPRevision{\small\caption{PAnDA comparison: means over four domains and three deterministic repeats per task and budget. Negative reduction favors PAnDA; a dash denotes a numerically negligible denominator.}\label{tab:panda}\begin{tabular}{llrrr}\toprule Task & $\epsilon$ & DepHDP-r & PAnDA & Reduction (\%)\\\midrule
FE & 2 & 0.465312 & 5.94143 & 92.17 \\
EE & 2 & 0.210615 & 0.830443 & 74.64 \\
CC & 2 & 8.94223 & 26.3519 & 66.07 \\
FE & 4 & 0.00761582 & 0.125029 & 93.91 \\
EE & 4 & 0.0911824 & 0.0798215 & -14.23 \\
CC & 4 & 2.65696 & 4.29508 & 38.14 \\
FE & 6 & 0.000194363 & 0.000341458 & 43.08 \\
EE & 6 & 0.0500351 & 0.0661047 & 24.31 \\
CC & 6 & 1.00725 & 1.3478 & 25.27 \\
FE & 10 & 2.38347e-07 & 2.32033e-08 & -- \\
EE & 10 & 0.0198699 & 0.0333082 & 40.35 \\
CC & 10 & 0.180825 & 0.12234 & -47.81 \\
FE & 15 & 0 & 0 & -- \\
EE & 15 & 0.0051661 & 0.0259246 & 80.07 \\
CC & 15 & 0.0381289 & 0.0376258 & -1.34 \\
\bottomrule\end{tabular}}\end{table}

\DEL{
\subsection{Four-Stage Cost and Scalability}\label{subsec:summary-complexity}

\SPRevision{In the symmetric attribute-count experiment (eight states per attribute, $L=2,\ldots,8$), grouped channel entries increase from 128 to 512 while full joint-channel entries increase from 4,096 to $2.81475\times10^{14}$. These are calculated storage counts, not measured peak memory or whole-pipeline timings. \NeutralRevision{The channel-entry counts quantify representation size as the attribute count increases; Table~\ref{tab:summary-profile} separately reports measured stage times and process-memory peaks.}}\par\SPRevision{\noindent\textbf{Measured profiling protocol.} Table~\ref{tab:summary-profile} reports a new resource-screened reference-implementation benchmark, not a replay of Table~\ref{tab:UL}. \NeutralRevision{This benchmark measures the four stages under a specified resource-screened reference configuration.} We use the archived 300/150/200 reference records for frequency/crowdsourcing/entropy, respectively, recompute NMI grouping at $\tau_{\mathrm{group}}=0.1$, and retain the archived standardized-coordinate Euclidean metric for this computational benchmark. The resulting group domain sizes are $(300,6,7)$, $(146)$, and $(200,2,2,3)$. Equivalent task-loss routines are checked against the original code; Stage~2 rebuilds costs and fits nonnegative surrogate coefficients using 300 calibration draws, with 100 additional evaluation draws. Stage~3 computes the original in-sample adequacy diagnostics and uses radial thresholds $0.3/0.1/0.2$. This profiling run imposes a 64-state OPT resource cap; larger groups use EM with Bayesian remapping. Only entropy selects OPT, for one three-state group. Thus these measurements do not characterize large-domain OPT solving or the separate held-out score-validation experiment. \NeutralRevision{The measurements characterize the four-stage pipeline with a 64-state OPT cap and in-sample adequacy diagnostics, including OPT for the three-state entropy group.}}\par\SPRevision{Stage~4 includes channel construction on $\{0,0.2,\ldots,10\}$, privacy-residual checks, envelope fitting, and both envelope-based allocations for global budgets $1,\ldots,10$. We run each task three times in fresh MATLAB R2025b processes on an Intel Core i9-13900KF with 64~GiB RAM. Wall times exclude MATLAB startup, input loading, result saving, and monitor synchronization. An external Windows sampler records process working set at a target interval of 2~ms (largest observed interval: 12.3~ms). The table gives median time and the maximum sampled peak across repeats. Peaks include MATLAB runtime and retained earlier-stage allocations; they are neither incremental stage memory nor channel-storage sizes, and sampling can miss transient peaks. \NeutralRevision{The memory metric is the maximum observed process working set during each stage, including the runtime baseline and retained allocations, at the stated sampling resolution.}}

\begin{table*}[t]\centering\color{SummaryRevisionRed}\small
\caption{Measured stage costs for the resource-screened reference benchmark described in Section~\ref{subsec:summary-complexity}. Time: median of three fresh-process runs; peak: maximum sampled process working set across those runs, including MATLAB baseline memory.}\label{tab:summary-profile}
\begin{tabular}{lrrrrrr}\toprule &\multicolumn{2}{c}{Frequency}&\multicolumn{2}{c}{Crowdsourcing}&\multicolumn{2}{c}{Entropy}\\
Stage & Time (s)&Peak (MiB)&Time (s)&Peak (MiB)&Time (s)&Peak (MiB)\\\midrule
1: dependency grouping &0.044&993.5&0.047&986.8&0.043&988.1\\
2: task-loss profiling &3.462&1026.3&0.819&1023.0&5.943&1024.5\\
3: adequacy/selection &4.589&1077.3&0.283&1027.5&0.912&1031.6\\
4: mechanism/allocation &1.608&1119.3&0.897&1110.7&1.570&1123.4\\\bottomrule\end{tabular}\end{table*}
}

\subsection{Threshold Sweeps Across All Three Tasks}\label{sec:summary-sweeps}


\SPRevision{
We investigate how the attribute-grouping threshold $\tau_{\mathrm{group}}$ in \textbf{Stage~1} and the mechanism-selection thresholds $\tau_{\mathrm{dim}}$ and $\tau_{\mathrm{dpn}}$ in \textbf{Stage~3} shape the overall utility-efficiency tradeoff of DepHDP across all three tasks.
}

\SPRevision{
Fig.~\ref{fig:summary-groups} examines how the grouping threshold $\tau_{\mathrm{group}}$ affects end-to-end utility loss and the number of groups across frequency estimation, crowdsourcing, and entropy estimation. Increasing the threshold strengthens the dependence criterion for connecting attributes, removing weaker edges and potentially splitting groups into smaller components. This changes which attributes are perturbed jointly and which dependencies are captured within groups. Across the evaluated settings, utility loss becomes relatively stable once $\tau_{\mathrm{group}}$ exceeds approximately $0.2$. The resulting range of thresholds with similar utility can guide calibration on these datasets, but does not establish a universally suitable grouping threshold.}

\SPRevision{
Figs.~\ref{fig:summary-sweep-fe}--\ref{fig:summary-sweep-ee}
study the effect of the DPN-selection threshold
$\tau_{\mathrm{dpn}}$ under two representative choices
of $\tau_{\mathrm{dim}}$, namely $100$ and $300$.
We report the resulting end-to-end utility loss and
wall-clock computation time across different privacy
budgets $\epsilon$ for \textbf{Case~I (Frequency Estimation)},
\textbf{Case~II (Crowdsourcing)}, and
\textbf{Case~III (Entropy Estimation)}, respectively.
For each task, we evaluate both the MSE-likeness
and radial-fit scores to examine the behavior of
DepHDP-m and DepHDP-r.
}

\SPRevision{
The results illustrate the utility-efficiency tradeoff
controlled by the mechanism-selection thresholds.
Increasing $\tau_{\mathrm{dpn}}$ tightens the DPN
selection rule, so fewer groups satisfy
$S_k^{\mathrm{dpn}}\geq\tau_{\mathrm{dpn}}$ and
DepHDP invokes OPT for more groups whose domains
satisfy
$|\mathcal{X}_{\mathcal{G}_k}|\leq\tau_{\mathrm{dim}}$.
Because OPT explicitly optimizes the perturbation
distribution under the mDP constraints, this shift
can reduce utility loss at the cost of additional
computation.
The utility improvements are more pronounced in
Cases~I and III, whereas Case~II exhibits more modest
gains in the evaluated settings.
Consequently, favoring DPN can provide a more
attractive tradeoff for Case~II when computational
efficiency is prioritized.
}

\SPRevision{
Increasing the domain-size threshold from
$\tau_{\mathrm{dim}}=100$ to
$\tau_{\mathrm{dim}}=300$ similarly permits OPT
for additional groups.
Since DPN is triggered by domain size whenever
$|\mathcal{X}_{\mathcal{G}_k}|>\tau_{\mathrm{dim}}$,
a larger $\tau_{\mathrm{dim}}$ reduces the number
of groups forced to use DPN due to their domain sizes.
The resulting utility and runtime changes depend
on whether these additional groups also satisfy
$S_k^{\mathrm{dpn}}<\tau_{\mathrm{dpn}}$ and on
the benefit of optimizing their perturbation channels.
Together, these results show that the two thresholds
should be considered jointly when balancing utility
loss against computation time.
Their values can be selected using separate calibration
data subject to an offline computation budget.
}

\begin{figure*}[t]\centering
\includegraphics[width=.31\textwidth, height=.26\textheight]{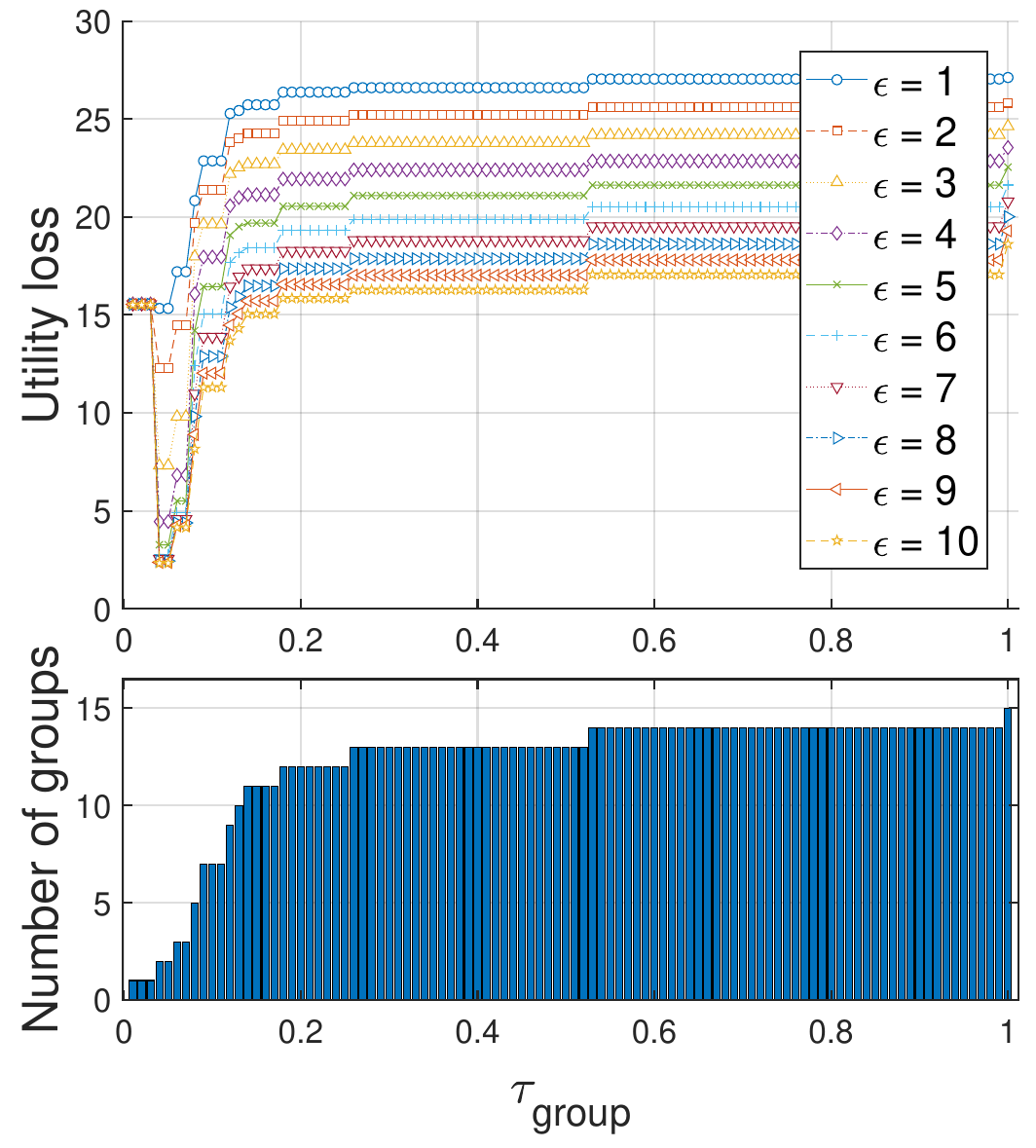}\hfill
\includegraphics[width=.31\textwidth, height=.26\textheight]{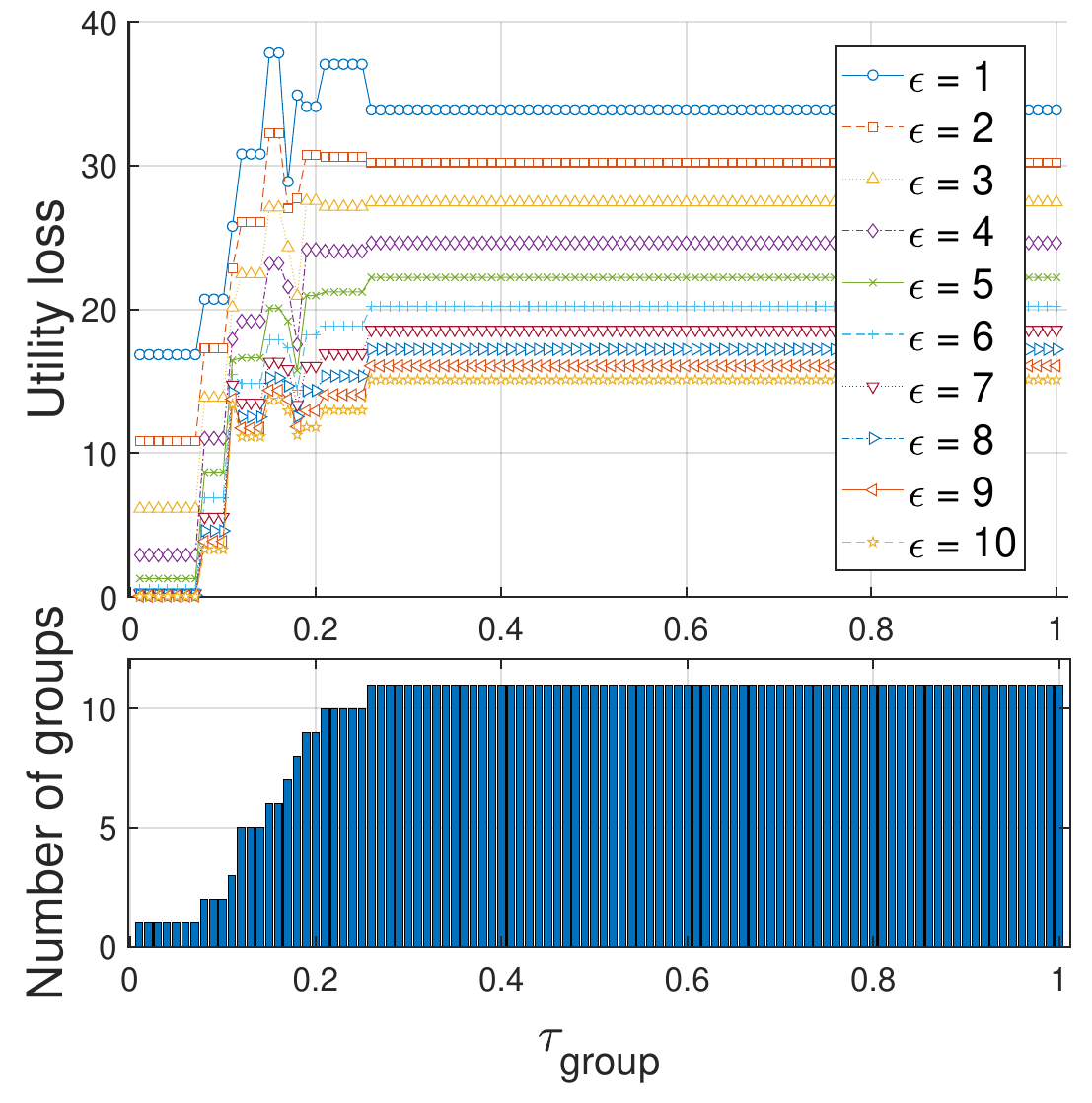}\hfill
\includegraphics[width=.31\textwidth, height=.26\textheight]{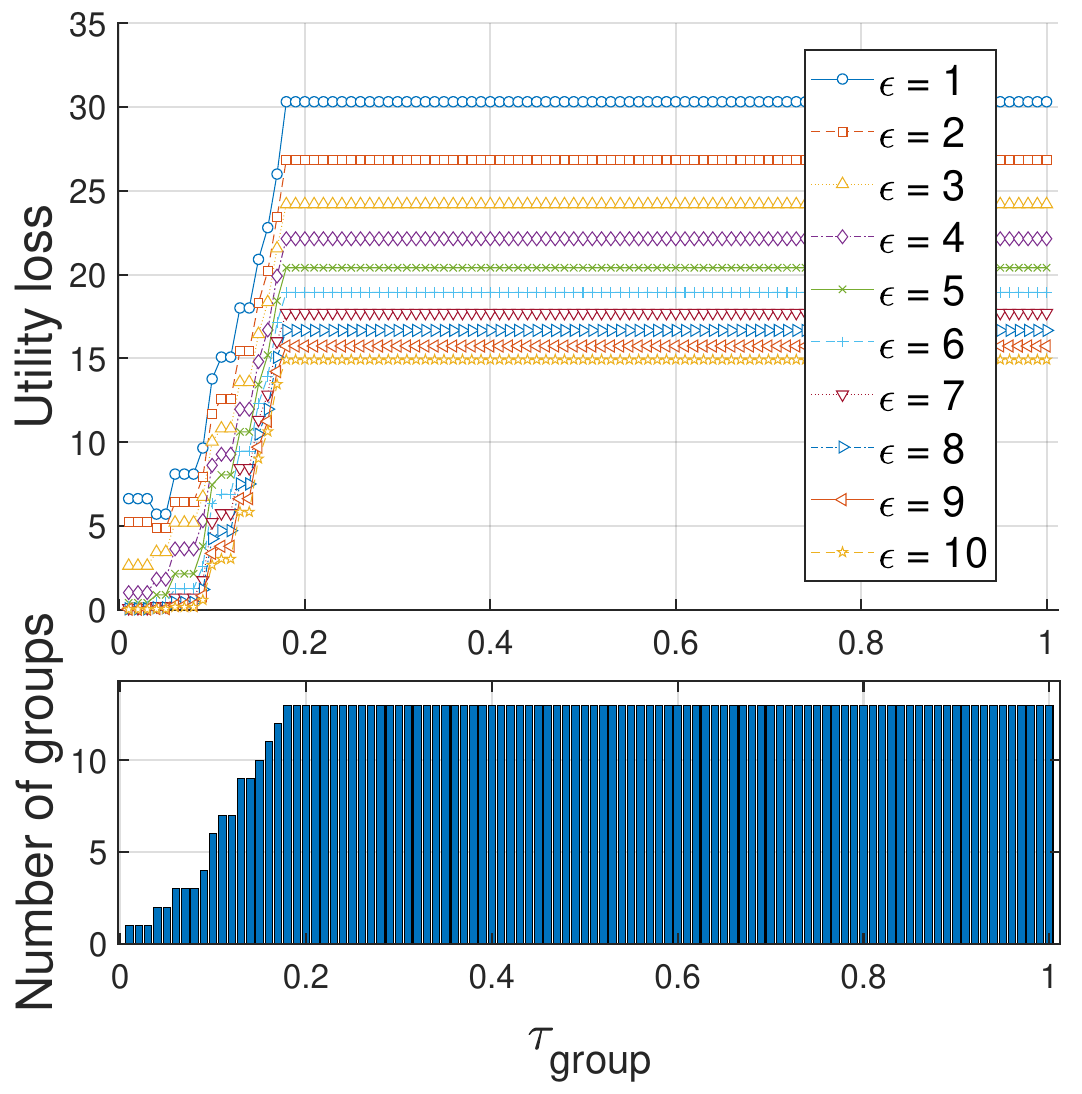}
\caption{\SPRevision{Grouping-threshold sensitivity for frequency estimation (left), crowdsourcing (middle), and entropy estimation (right).}}\label{fig:summary-groups}\end{figure*}

\begin{figure*}[p]\centering
\begin{minipage}{.24\textwidth}\centering\includegraphics[width=\linewidth]{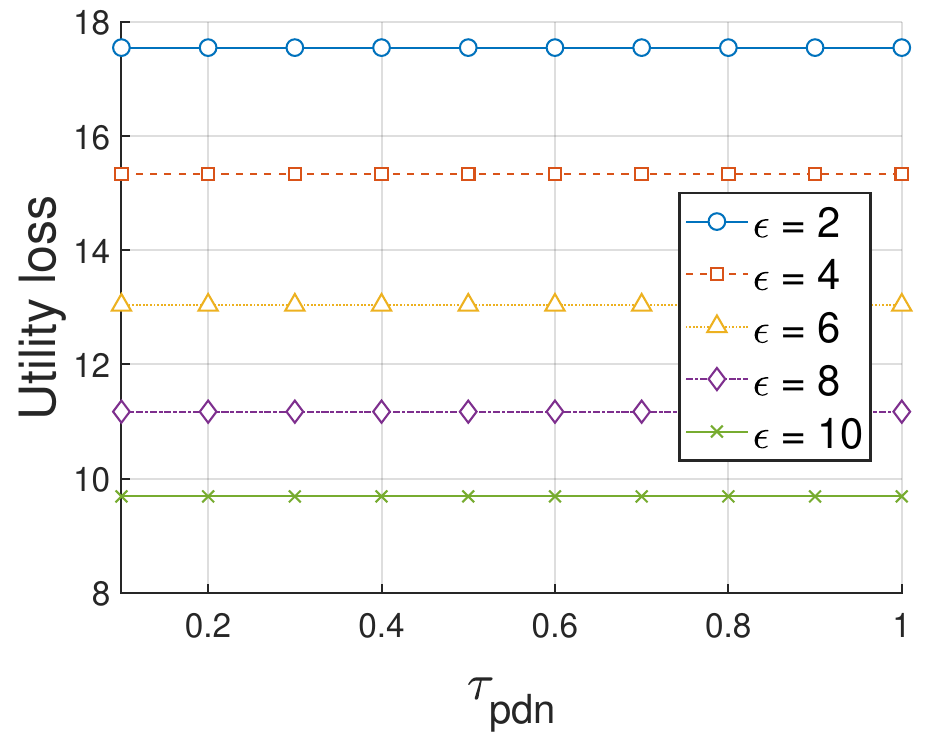}\par{\scriptsize\color{SummaryRevisionRed}Loss, mse, domain limit 100}\end{minipage}
\begin{minipage}{.24\textwidth}\centering\includegraphics[width=\linewidth]{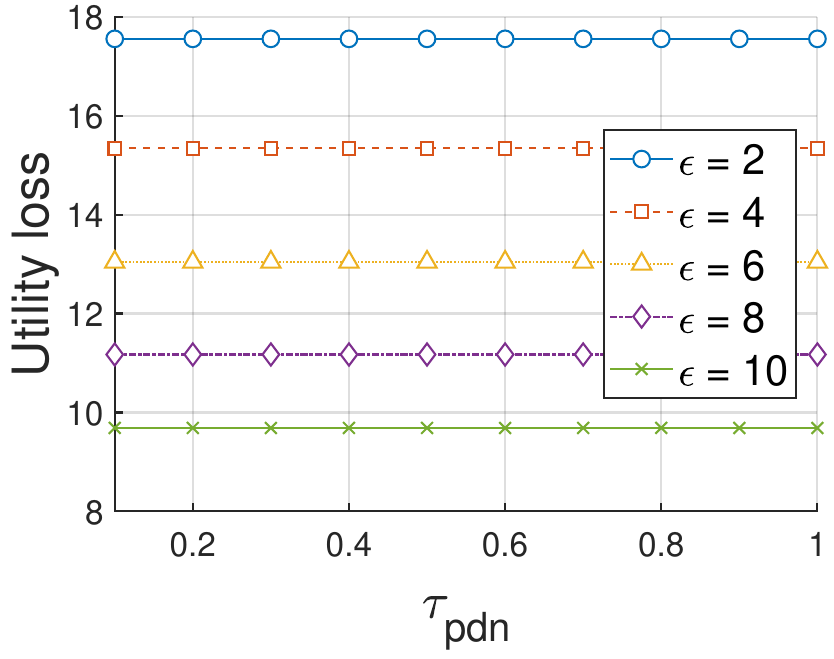}\par{\scriptsize\color{SummaryRevisionRed}Loss, rad, domain limit 100}\end{minipage}
\begin{minipage}{.24\textwidth}\centering\includegraphics[width=\linewidth]{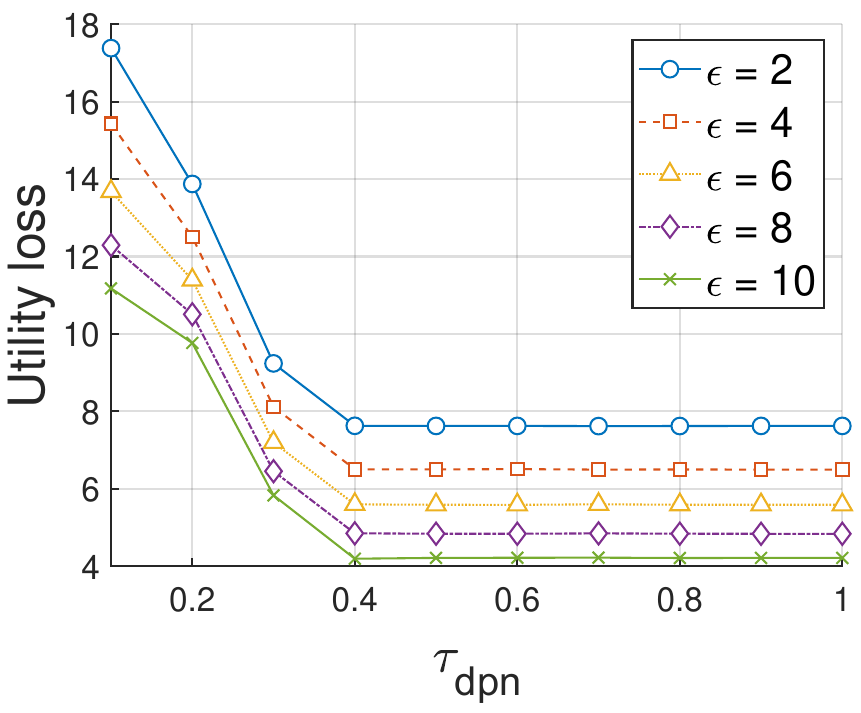}\par{\scriptsize\color{SummaryRevisionRed}Loss, mse, domain limit 300}\end{minipage}
\begin{minipage}{.24\textwidth}\centering\includegraphics[width=\linewidth]{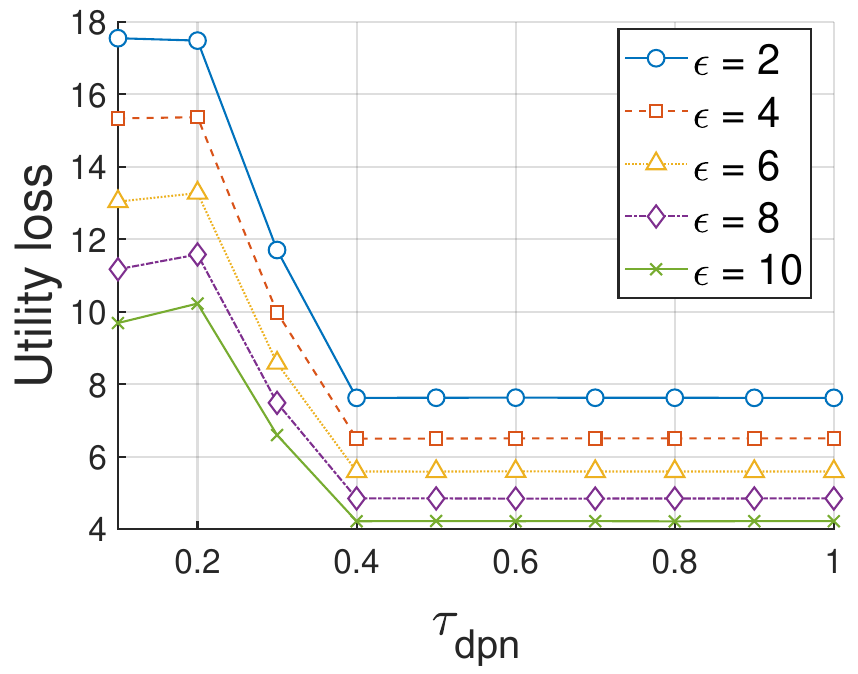}\par{\scriptsize\color{SummaryRevisionRed}Loss, rad, domain limit 300}\end{minipage}
\par\medskip
\begin{minipage}{.24\textwidth}\centering\includegraphics[width=\linewidth]{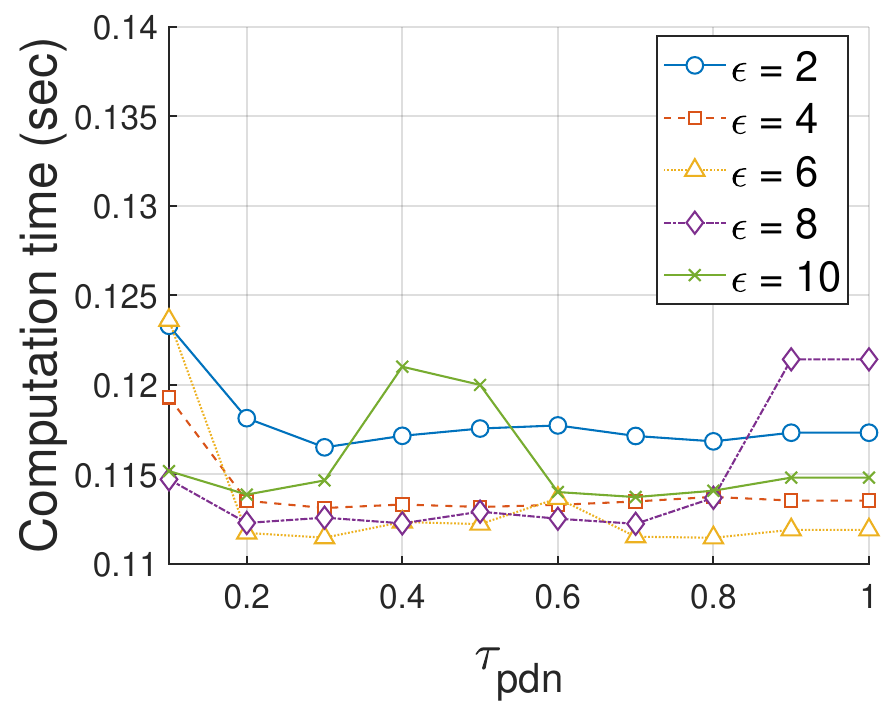}\par{\scriptsize\color{SummaryRevisionRed}Runtime, mse, domain limit 100}\end{minipage}
\begin{minipage}{.24\textwidth}\centering\includegraphics[width=\linewidth]{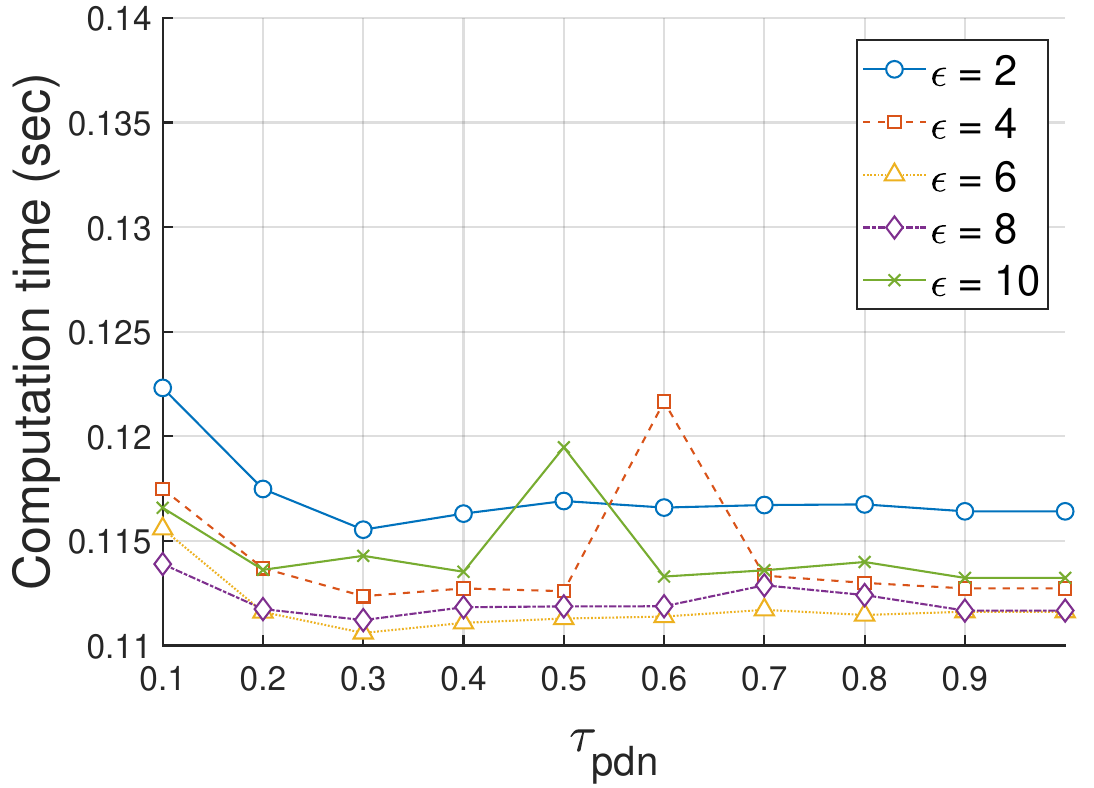}\par{\scriptsize\color{SummaryRevisionRed}Runtime, rad, domain limit 100}\end{minipage}
\begin{minipage}{.24\textwidth}\centering\includegraphics[width=\linewidth]{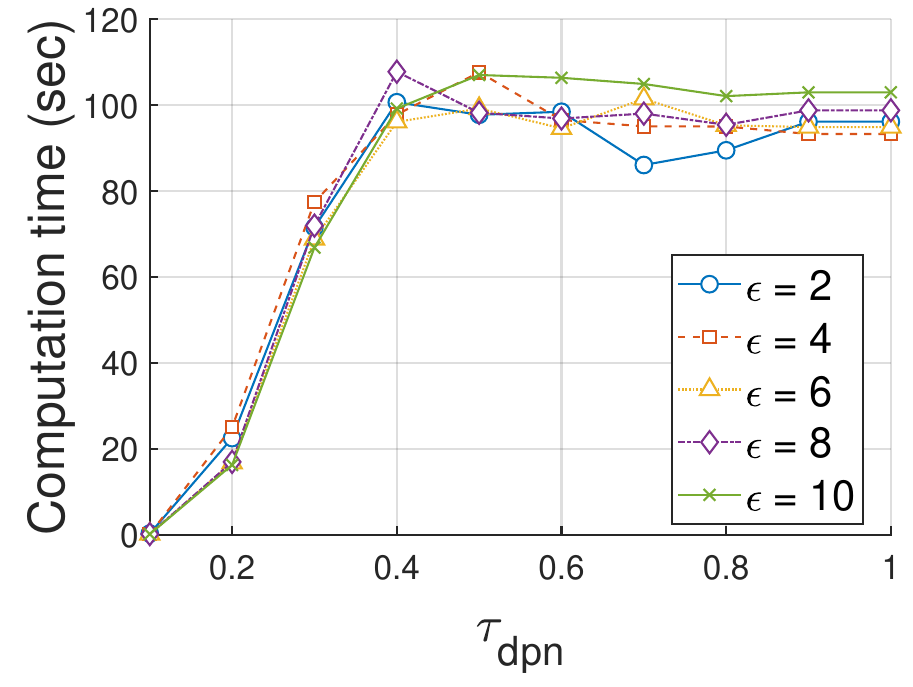}\par{\scriptsize\color{SummaryRevisionRed}Runtime, mse, domain limit 300}\end{minipage}
\begin{minipage}{.24\textwidth}\centering\includegraphics[width=\linewidth]{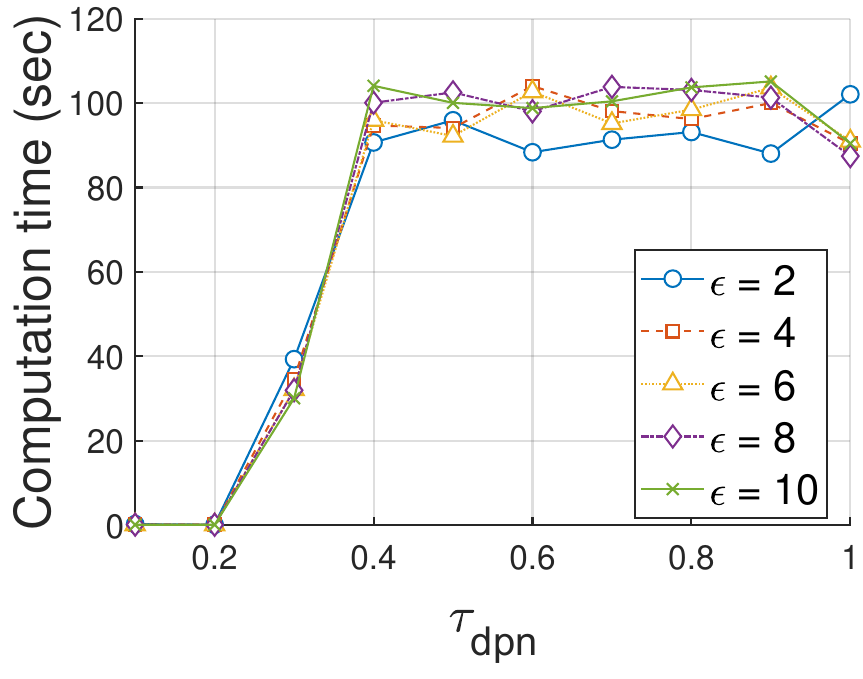}\par{\scriptsize\color{SummaryRevisionRed}Runtime, rad, domain limit 300}\end{minipage}
\par\medskip
\caption{\SPRevision{Frequency estimation: mechanism-selection sweeps for both scores and two domain limits. Top: loss; bottom: runtime. The curve legends indicate privacy budgets.}}\label{fig:summary-sweep-fe}
\end{figure*}

\begin{figure*}[p]\centering
\begin{minipage}{.24\textwidth}\centering\includegraphics[width=\linewidth]{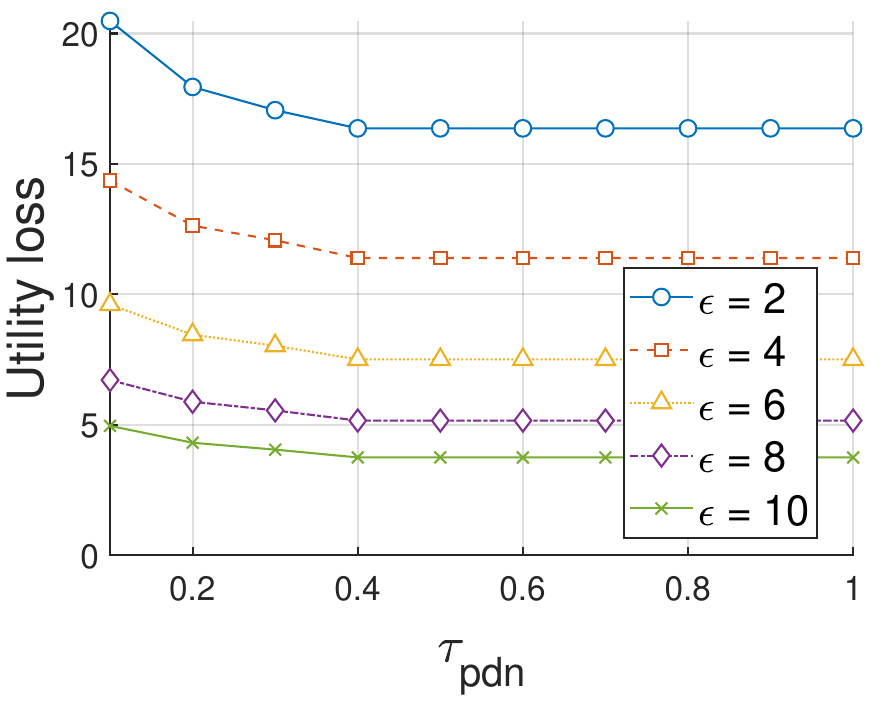}\par{\scriptsize\color{SummaryRevisionRed}Loss, mse, domain limit 100}\end{minipage}
\begin{minipage}{.24\textwidth}\centering\includegraphics[width=\linewidth]{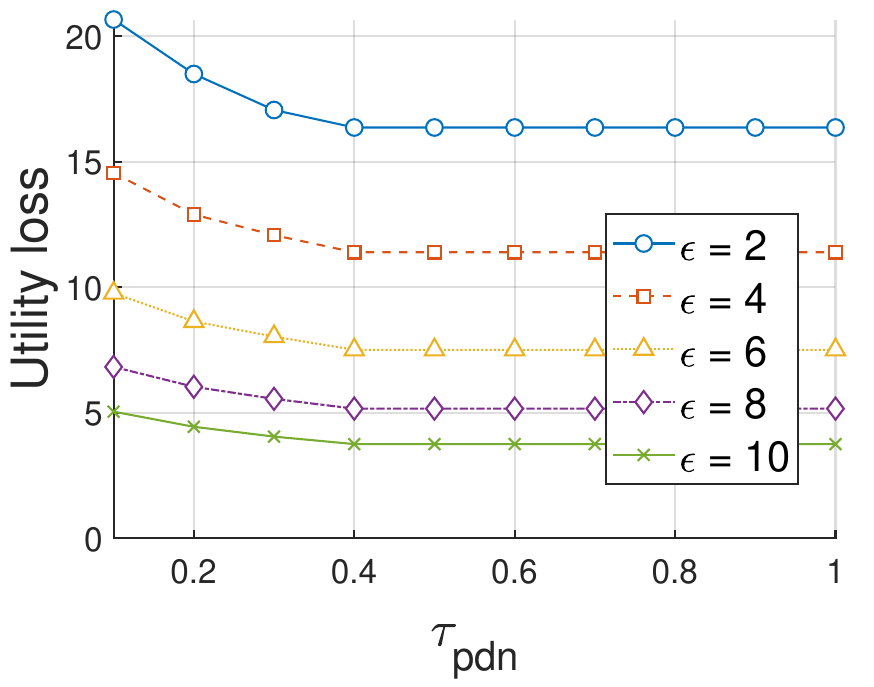}\par{\scriptsize\color{SummaryRevisionRed}Loss, rad, domain limit 100}\end{minipage}
\begin{minipage}{.24\textwidth}\centering\includegraphics[width=\linewidth]{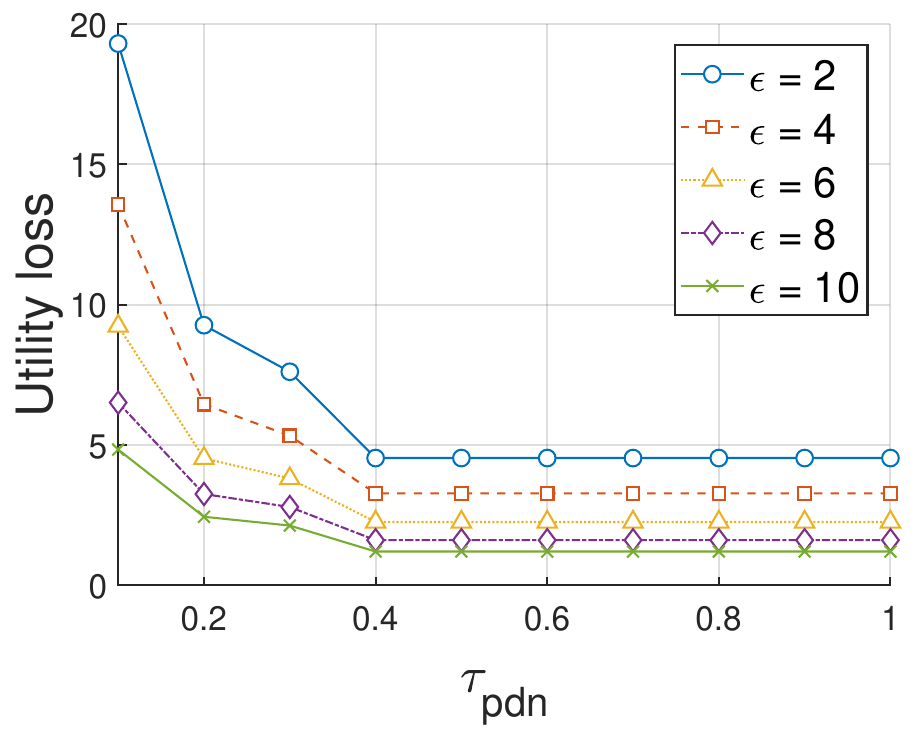}\par{\scriptsize\color{SummaryRevisionRed}Loss, mse, domain limit 300}\end{minipage}
\begin{minipage}{.24\textwidth}\centering\includegraphics[width=\linewidth]{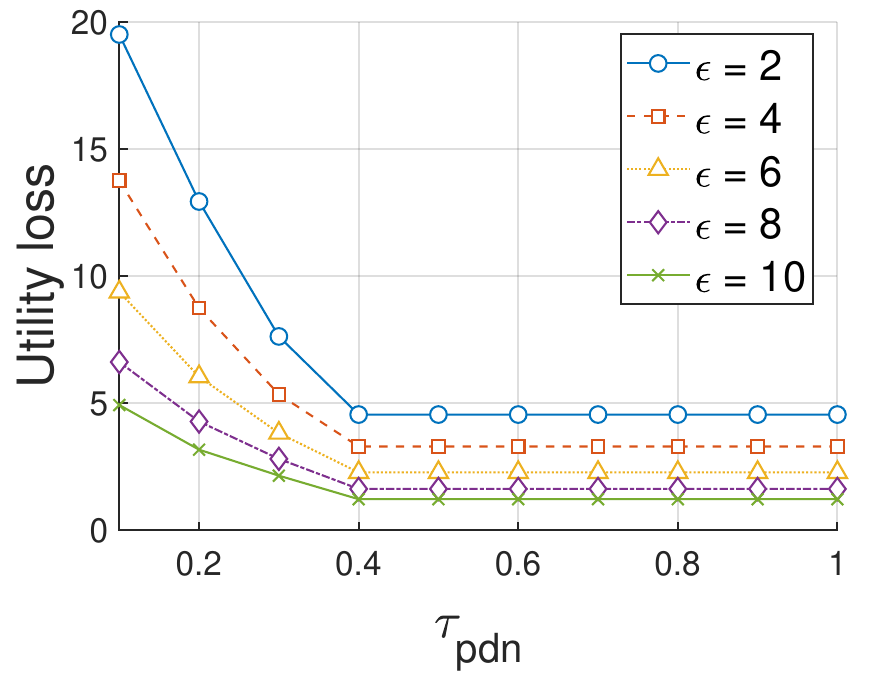}\par{\scriptsize\color{SummaryRevisionRed}Loss, rad, domain limit 300}\end{minipage}
\par\medskip
\begin{minipage}{.24\textwidth}\centering\includegraphics[width=\linewidth]{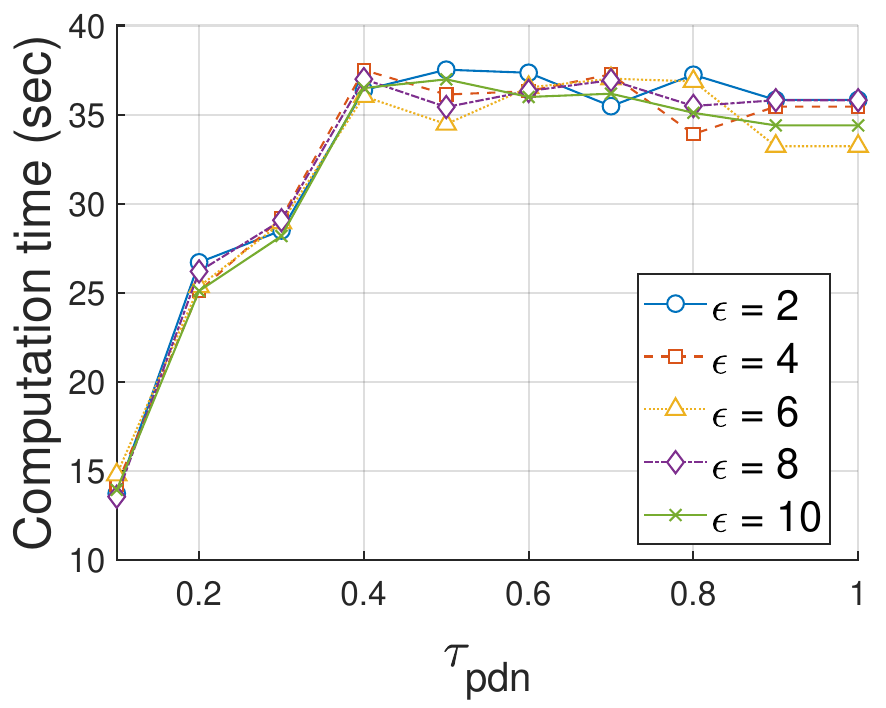}\par{\scriptsize\color{SummaryRevisionRed}Runtime, mse, domain limit 100}\end{minipage}
\begin{minipage}{.24\textwidth}\centering\includegraphics[width=\linewidth]{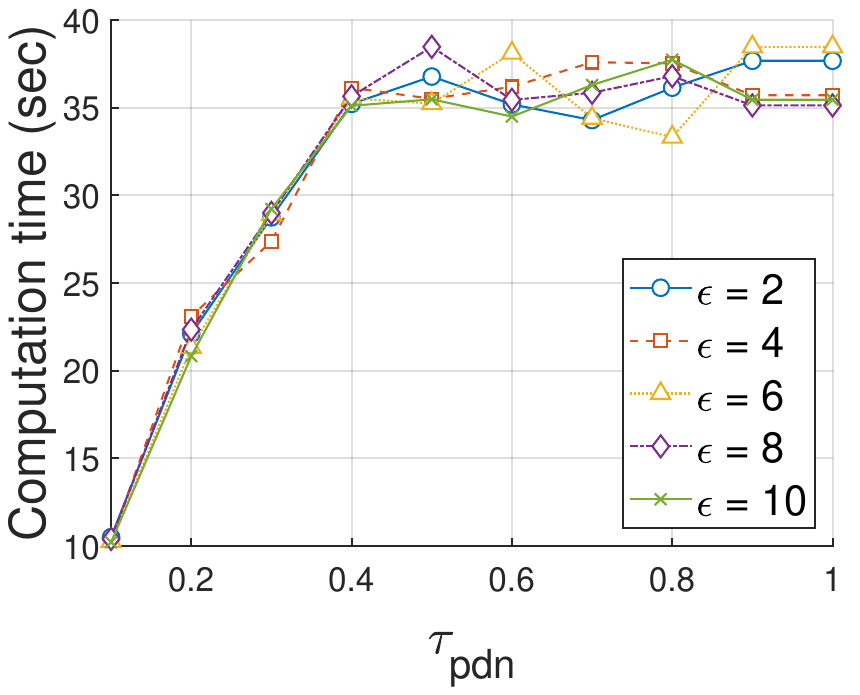}\par{\scriptsize\color{SummaryRevisionRed}Runtime, rad, domain limit 100}\end{minipage}
\begin{minipage}{.24\textwidth}\centering\includegraphics[width=\linewidth]{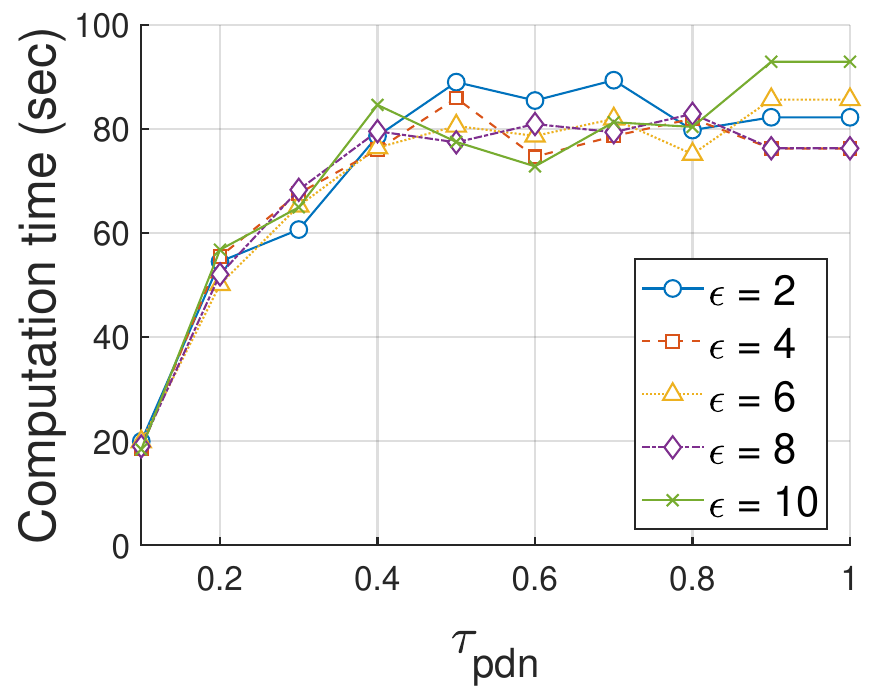}\par{\scriptsize\color{SummaryRevisionRed}Runtime, mse, domain limit 300}\end{minipage}
\begin{minipage}{.24\textwidth}\centering\includegraphics[width=\linewidth]{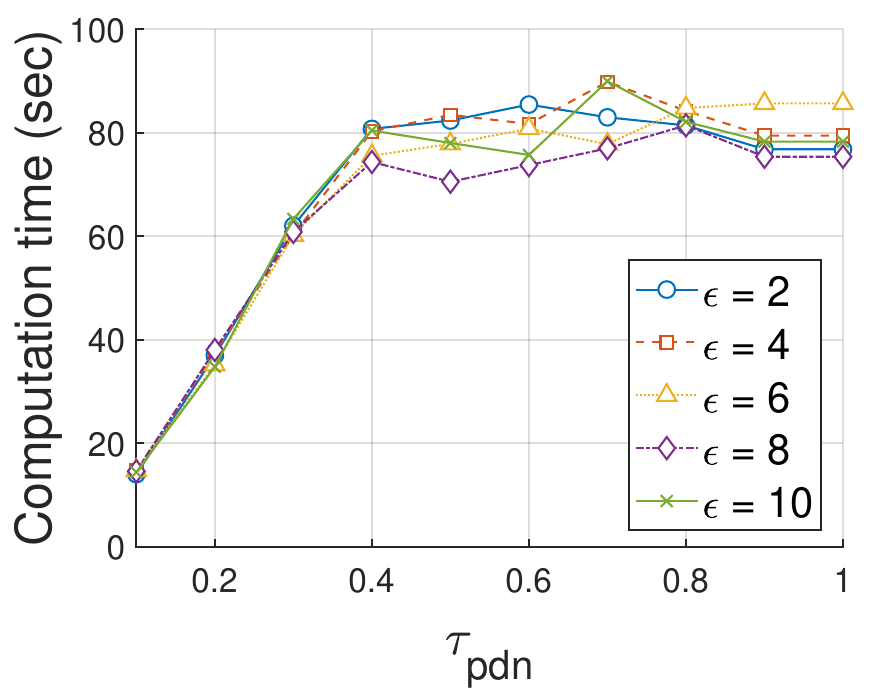}\par{\scriptsize\color{SummaryRevisionRed}Runtime, rad, domain limit 300}\end{minipage}
\par\medskip
\caption{\SPRevision{Crowdsourcing: mechanism-selection sweeps for both scores and two domain limits. Top: loss; bottom: runtime. The curve legends indicate privacy budgets.}}\label{fig:summary-sweep-cc}
\end{figure*}

\begin{figure*}[p]\centering
\begin{minipage}{.24\textwidth}\centering\includegraphics[width=\linewidth]{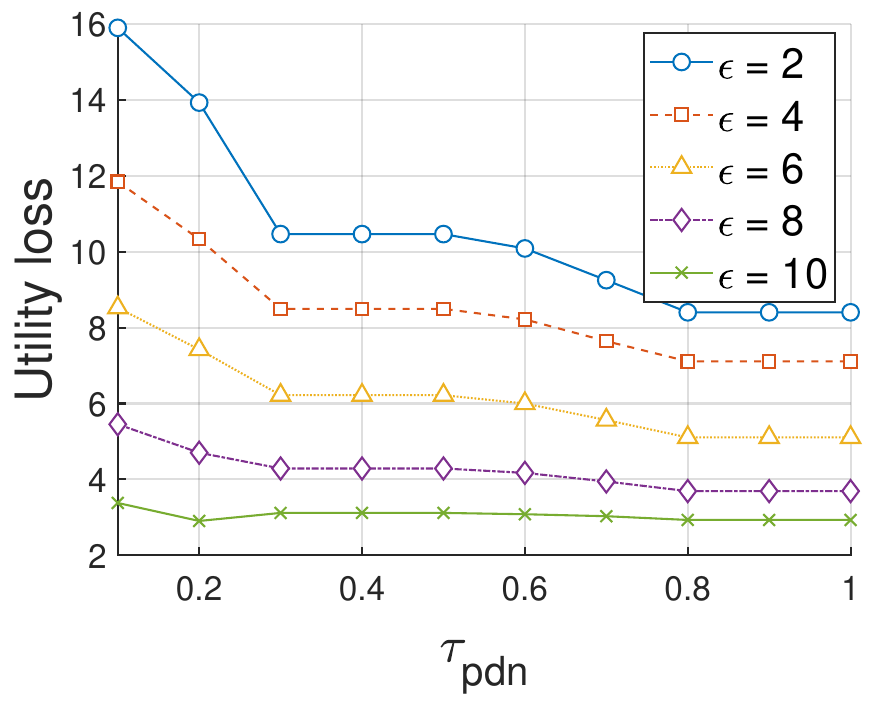}\par{\scriptsize\color{SummaryRevisionRed}Loss, mse, domain limit 100}\end{minipage}
\begin{minipage}{.24\textwidth}\centering\includegraphics[width=\linewidth]{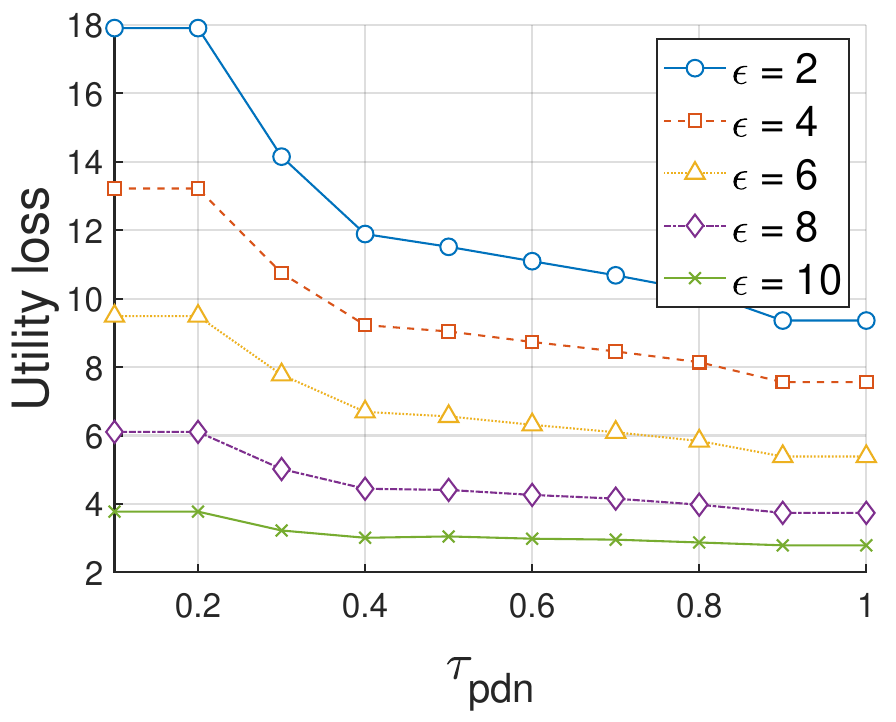}\par{\scriptsize\color{SummaryRevisionRed}Loss, rad, domain limit 100}\end{minipage}
\begin{minipage}{.24\textwidth}\centering\includegraphics[width=\linewidth]{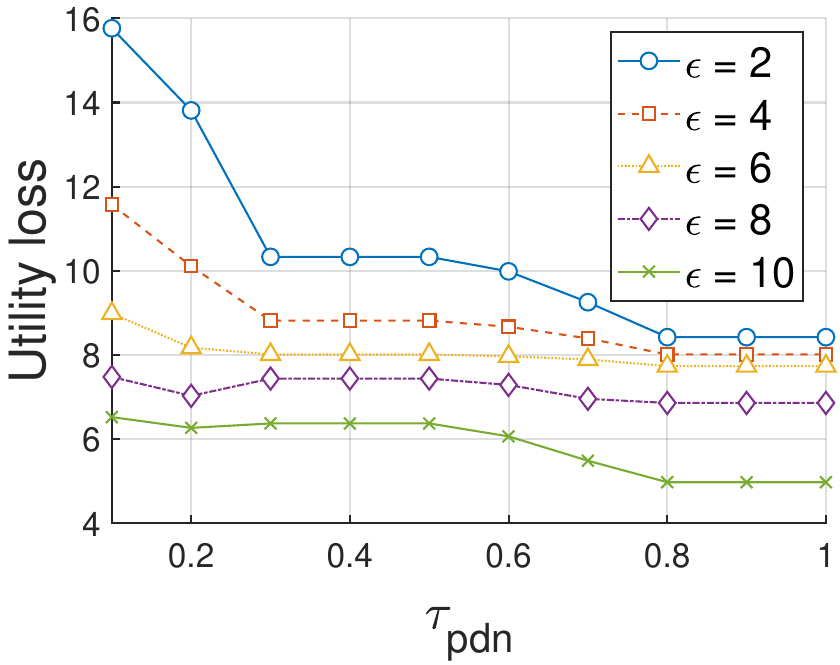}\par{\scriptsize\color{SummaryRevisionRed}Loss, mse, domain limit 300}\end{minipage}
\begin{minipage}{.24\textwidth}\centering\includegraphics[width=\linewidth]{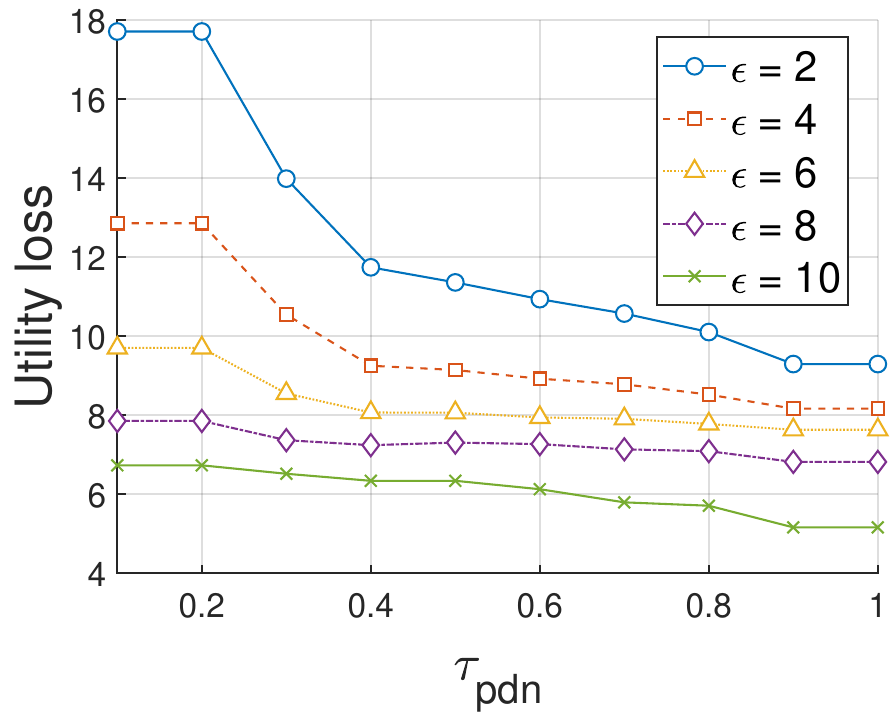}\par{\scriptsize\color{SummaryRevisionRed}Loss, rad, domain limit 300}\end{minipage}
\par\medskip
\begin{minipage}{.24\textwidth}\centering\includegraphics[width=\linewidth]{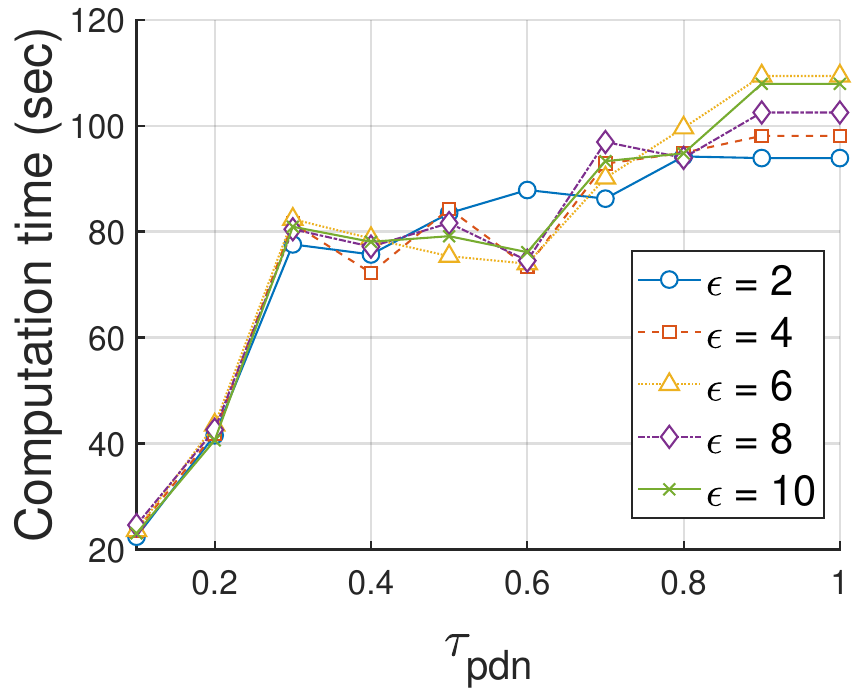}\par{\scriptsize\color{SummaryRevisionRed}Runtime, mse, domain limit 100}\end{minipage}
\begin{minipage}{.24\textwidth}\centering\includegraphics[width=\linewidth]{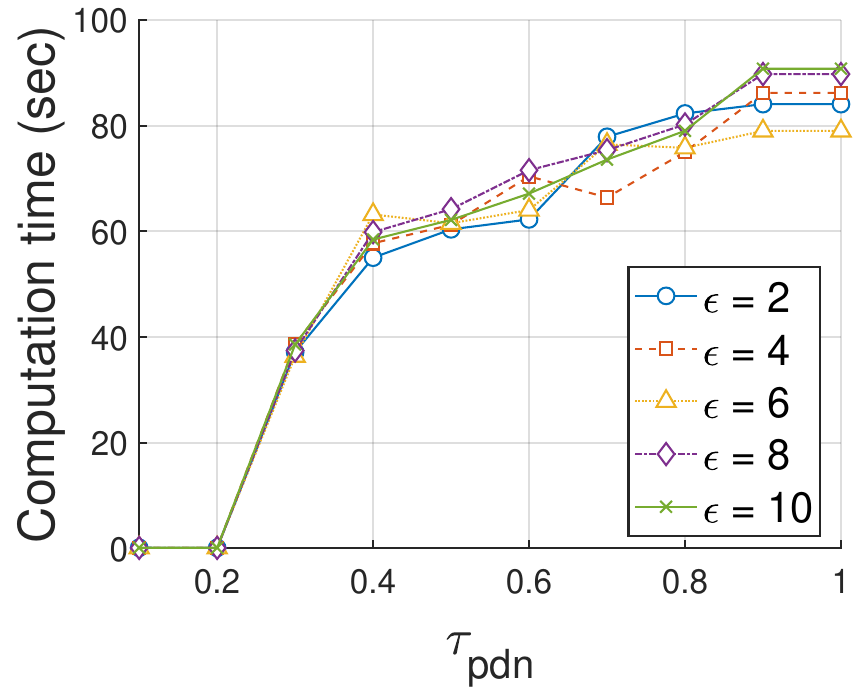}\par{\scriptsize\color{SummaryRevisionRed}Runtime, rad, domain limit 100}\end{minipage}
\begin{minipage}{.24\textwidth}\centering\includegraphics[width=\linewidth]{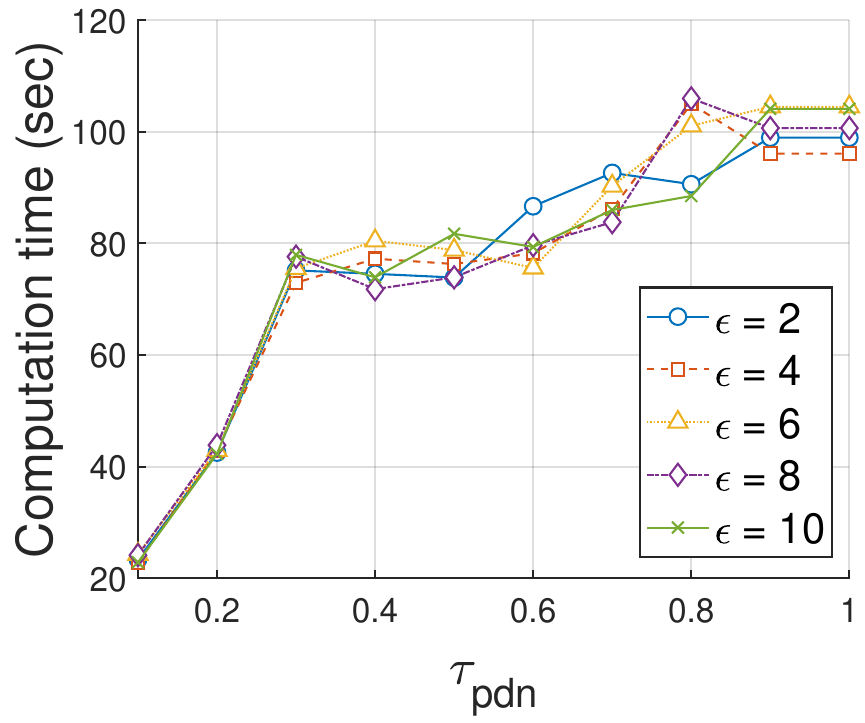}\par{\scriptsize\color{SummaryRevisionRed}Runtime, mse, domain limit 300}\end{minipage}
\begin{minipage}{.24\textwidth}\centering\includegraphics[width=\linewidth]{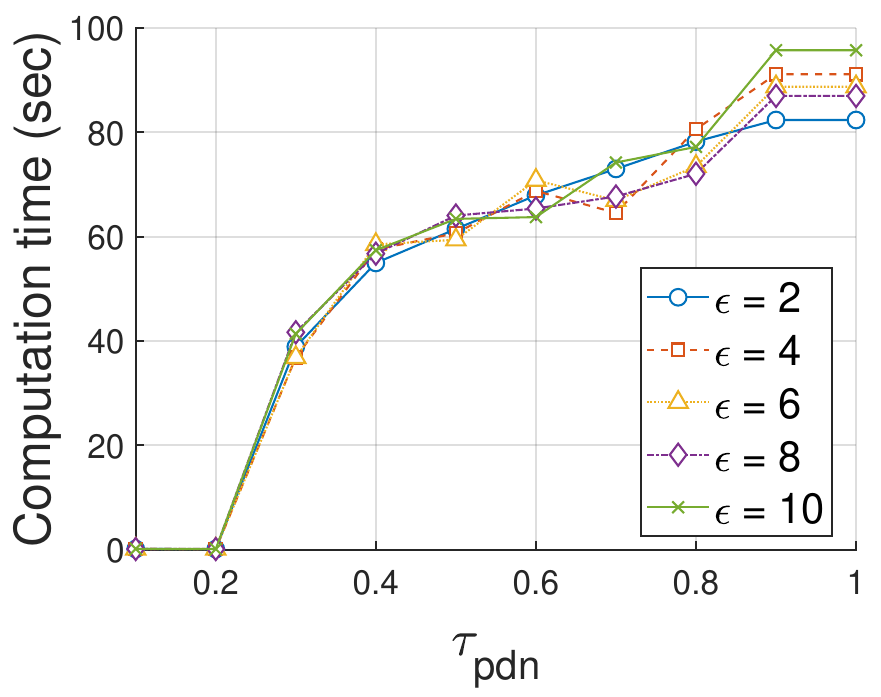}\par{\scriptsize\color{SummaryRevisionRed}Runtime, rad, domain limit 300}\end{minipage}
\par\medskip
\caption{\SPRevision{Entropy estimation: mechanism-selection sweeps for both scores and two domain limits. Top: loss; bottom: runtime. The curve legends indicate privacy budgets.}}\label{fig:summary-sweep-ee}
\end{figure*}

\clearpage

\onecolumn
\subsection{Complete reduced-domain full-OPT comparison}
\label{sec:summary-full}
\SPRevision{These 180 configurations pair the radial and MSE variants from the 360-row result archive. All values are task losses (lower is better); FE, EE, and CC denote frequency estimation, entropy estimation, and crowdsourcing. Case identifiers preserve the archive names. Each task contains ten domain configurations and six budgets. Values are rounded to six significant digits.}

\begingroup\SPRevision{\footnotesize 
\setlength{\tabcolsep}{4pt}\begin{longtable}{llrrrrrrr}\caption{All 180 paired reduced-domain configurations.}\label{tab:summary-all180}\\\toprule Task & Case & Domain & $\epsilon$ & Full OPT & Group OPT & DepHDP-r & DepHDP-m & All-DPN\\\midrule\endfirsthead\toprule Task & Case & Domain & $\epsilon$ & Full OPT & Group OPT & DepHDP-r & DepHDP-m & All-DPN\\\midrule\endhead\bottomrule\endfoot
CC & high\_G2x6\_G3x6 & 36 & 1 & 5.3678 & 6.1836 & 12.4526 & 6.35189 & 12.4526 \\
CC & high\_G2x6\_G3x6 & 36 & 2 & 0.764878 & 1.19101 & 6.40657 & 1.83659 & 6.40657 \\
CC & high\_G2x6\_G3x6 & 36 & 4 & 0.0190472 & 0.0517203 & 1.23599 & 0.150762 & 1.23599 \\
CC & high\_G2x6\_G3x6 & 36 & 6 & 0.00065104 & 0.00284024 & 0.243018 & 0.0153976 & 0.243018 \\
CC & high\_G2x6\_G3x6 & 36 & 8 & 2.45905e-05 & 0.000172382 & 0.0520126 & 0.00179245 & 0.0520126 \\
CC & high\_G2x6\_G3x6 & 36 & 10 & 6.00437e-07 & 1.084e-05 & 0.0119303 & 0.000221234 & 0.0119303 \\
CC & high\_G2x6\_G3x8 & 48 & 1 & 5.83235 & 6.66956 & 11.8032 & 6.863 & 11.8032 \\
CC & high\_G2x6\_G3x8 & 48 & 2 & 1.04928 & 1.5169 & 6.96163 & 2.18907 & 6.96163 \\
CC & high\_G2x6\_G3x8 & 48 & 4 & 0.0364036 & 0.0820665 & 1.61558 & 0.202937 & 1.61558 \\
CC & high\_G2x6\_G3x8 & 48 & 6 & 0.00169162 & 0.00546668 & 0.365155 & 0.0225237 & 0.365155 \\
CC & high\_G2x6\_G3x8 & 48 & 8 & 9.31987e-05 & 0.000410321 & 0.0871482 & 0.00285437 & 0.0871482 \\
CC & high\_G2x6\_G3x8 & 48 & 10 & 5.44866e-06 & 3.37756e-05 & 0.0218416 & 0.000392285 & 0.0218416 \\
CC & high\_G2x8\_G3x6 & 48 & 1 & 5.24578 & 5.80964 & 11.9981 & 5.90379 & 11.9981 \\
CC & high\_G2x8\_G3x6 & 48 & 2 & 0.768284 & 1.13197 & 6.01268 & 1.63752 & 6.01268 \\
CC & high\_G2x8\_G3x6 & 48 & 4 & 0.0192207 & 0.0485437 & 1.17421 & 0.134582 & 1.17421 \\
CC & high\_G2x8\_G3x6 & 48 & 6 & 0.000653466 & 0.00267079 & 0.230527 & 0.0133728 & 0.230527 \\
CC & high\_G2x8\_G3x6 & 48 & 8 & 2.45907e-05 & 0.000159882 & 0.0489135 & 0.00151373 & 0.0489135 \\
CC & high\_G2x8\_G3x6 & 48 & 10 & 5.91481e-07 & 1.01677e-05 & 0.0110652 & 0.000184173 & 0.0110652 \\
CC & high\_G2x8\_G3x8 & 64 & 1 & 5.71294 & 6.31675 & 11.3392 & 6.41363 & 11.3392 \\
CC & high\_G2x8\_G3x8 & 64 & 2 & 1.05239 & 1.45085 & 6.59578 & 1.97168 & 6.59578 \\
CC & high\_G2x8\_G3x8 & 64 & 4 & 0.03668 & 0.0775125 & 1.5452 & 0.183109 & 1.5452 \\
CC & high\_G2x8\_G3x8 & 64 & 6 & 0.00169676 & 0.00514239 & 0.348879 & 0.0198342 & 0.348879 \\
CC & high\_G2x8\_G3x8 & 64 & 8 & 9.32735e-05 & 0.000389987 & 0.0825213 & 0.00246915 & 0.0825213 \\
CC & high\_G2x8\_G3x8 & 64 & 10 & 5.37707e-06 & 3.18809e-05 & 0.0206628 & 0.000334782 & 0.0206628 \\
CC & low\_G1x6\_G4x8 & 48 & 1 & 10.5912 & 13.0384 & 13.0384 & 13.0384 & 23.18 \\
CC & low\_G1x6\_G4x8 & 48 & 2 & 4.76668 & 6.24858 & 6.24858 & 6.24858 & 13.8774 \\
CC & low\_G1x6\_G4x8 & 48 & 4 & 1.12938 & 1.45136 & 1.45136 & 1.45136 & 6.05723 \\
CC & low\_G1x6\_G4x8 & 48 & 6 & 0.24662 & 0.315539 & 0.315539 & 0.315539 & 3.13912 \\
CC & low\_G1x6\_G4x8 & 48 & 8 & 0.0556282 & 0.0714847 & 0.0714847 & 0.0714847 & 1.44479 \\
CC & low\_G1x6\_G4x8 & 48 & 10 & 0.0128119 & 0.0167139 & 0.0167139 & 0.0167139 & 0.669815 \\
CC & low\_G1x8\_G4x8 & 64 & 1 & 11.9208 & 14.8472 & 14.8472 & 14.8472 & 28.6918 \\
CC & low\_G1x8\_G4x8 & 64 & 2 & 4.81443 & 6.95314 & 6.95314 & 6.95314 & 16.2267 \\
CC & low\_G1x8\_G4x8 & 64 & 4 & 1.12941 & 1.58268 & 1.58268 & 1.58268 & 6.76591 \\
CC & low\_G1x8\_G4x8 & 64 & 6 & 0.24662 & 0.343778 & 0.343778 & 0.343778 & 3.38749 \\
CC & low\_G1x8\_G4x8 & 64 & 8 & 0.0556282 & 0.0782878 & 0.0782878 & 0.0782878 & 1.5784 \\
CC & low\_G1x8\_G4x8 & 64 & 10 & 0.0128119 & 0.0183097 & 0.0183097 & 0.0183097 & 0.731459 \\
CC & mixed\_G1x8\_G2x8 & 64 & 1 & 1.92061 & 2.62092 & 2.73095 & 2.73095 & 11.7456 \\
CC & mixed\_G1x8\_G2x8 & 64 & 2 & 0.0356103 & 0.159564 & 0.48948 & 0.48948 & 2.76502 \\
CC & mixed\_G1x8\_G2x8 & 64 & 4 & 4.02406e-05 & 0.00146178 & 0.0226109 & 0.0226109 & 0.172743 \\
CC & mixed\_G1x8\_G2x8 & 64 & 6 & 1.57606e-07 & 1.81556e-05 & 0.00144501 & 0.00144501 & 0.0143111 \\
CC & mixed\_G1x8\_G2x8 & 64 & 8 & 2.24112e-10 & 0 & 0.000109073 & 0.000109073 & 0.00141691 \\
CC & mixed\_G1x8\_G2x8 & 64 & 10 & 3.24985e-12 & 0 & 7.07212e-06 & 7.07212e-06 & 0.00015629 \\
CC & mixed\_G1x8\_G3x8 & 64 & 1 & 8.04894 & 11.3423 & 14.2191 & 11.3423 & 23.3365 \\
CC & mixed\_G1x8\_G3x8 & 64 & 2 & 1.04327 & 2.55013 & 7.79147 & 2.55013 & 12.4235 \\
CC & mixed\_G1x8\_G3x8 & 64 & 4 & 0.0358835 & 0.109019 & 1.47864 & 0.109019 & 2.66754 \\
CC & mixed\_G1x8\_G3x8 & 64 & 6 & 0.00168691 & 0.00592079 & 0.279399 & 0.00592079 & 0.5364 \\
CC & mixed\_G1x8\_G3x8 & 64 & 8 & 9.31979e-05 & 0.000376755 & 0.0563412 & 0.000376755 & 0.115232 \\
CC & mixed\_G1x8\_G3x8 & 64 & 10 & 5.33103e-06 & 2.58461e-05 & 0.0121345 & 2.58461e-05 & 0.0262868 \\
CC & mixed\_G4x8\_G2x8 & 64 & 1 & 9.79701 & 10.4685 & 10.6593 & 10.6593 & 15.3669 \\
CC & mixed\_G4x8\_G2x8 & 64 & 2 & 4.82542 & 5.39809 & 5.94292 & 5.94292 & 11.1306 \\
CC & mixed\_G4x8\_G2x8 & 64 & 4 & 1.13139 & 1.33906 & 1.65055 & 1.65055 & 5.29224 \\
CC & mixed\_G4x8\_G2x8 & 64 & 6 & 0.246656 & 0.306454 & 0.416686 & 0.416686 & 2.86124 \\
CC & mixed\_G4x8\_G2x8 & 64 & 8 & 0.0556285 & 0.0725014 & 0.10863 & 0.10863 & 1.34639 \\
CC & mixed\_G4x8\_G2x8 & 64 & 10 & 0.0128119 & 0.0175231 & 0.0289913 & 0.0289913 & 0.640197 \\
CC & mixed\_G4x8\_G3x8 & 64 & 1 & 16.4681 & 18.6163 & 22.0794 & 18.6163 & 26.8248 \\
CC & mixed\_G4x8\_G3x8 & 64 & 2 & 6.89919 & 9.95536 & 15.4212 & 9.95536 & 20.0061 \\
CC & mixed\_G4x8\_G3x8 & 64 & 4 & 1.2366 & 2.56747 & 5.5517 & 2.56747 & 9.80139 \\
CC & mixed\_G4x8\_G3x8 & 64 & 6 & 0.250724 & 0.625646 & 1.77011 & 0.625646 & 4.98611 \\
CC & mixed\_G4x8\_G3x8 & 64 & 8 & 0.0558096 & 0.159012 & 0.574354 & 0.159012 & 2.58925 \\
CC & mixed\_G4x8\_G3x8 & 64 & 10 & 0.0128198 & 0.041785 & 0.190273 & 0.041785 & 1.27298 \\
EE & low\_G1x10\_G4 & 20 & 1 & 0.112889 & 0.226466 & 0.226466 & 0.226466 & 0.697666 \\
EE & low\_G1x10\_G4 & 20 & 2 & 0.00306214 & 0.0193847 & 0.0193847 & 0.0193847 & 0.227533 \\
EE & low\_G1x10\_G4 & 20 & 4 & 7.98794e-06 & 0.000146185 & 0.000146185 & 0.000146185 & 0.019391 \\
EE & low\_G1x10\_G4 & 20 & 6 & 0 & 1.16765e-06 & 1.16765e-06 & 1.16765e-06 & 0.0016522 \\
EE & low\_G1x10\_G4 & 20 & 8 & 0 & 0 & 0 & 0 & 0.000146304 \\
EE & low\_G1x10\_G4 & 20 & 10 & 0 & 0 & 0 & 0 & 1.31288e-05 \\
EE & low\_G1x6\_G3x6 & 36 & 1 & 0.199455 & 0.29367 & 0.29367 & 0.29367 & 0.684022 \\
EE & low\_G1x6\_G3x6 & 36 & 2 & 0.062379 & 0.100925 & 0.100925 & 0.100925 & 0.302199 \\
EE & low\_G1x6\_G3x6 & 36 & 4 & 0.00729569 & 0.0113035 & 0.0113035 & 0.0113035 & 0.100851 \\
EE & low\_G1x6\_G3x6 & 36 & 6 & 0.000905373 & 0.0014065 & 0.0014065 & 0.0014065 & 0.0328954 \\
EE & low\_G1x6\_G3x6 & 36 & 8 & 0.000115937 & 0.000183716 & 0.000183716 & 0.000183716 & 0.0112524 \\
EE & low\_G1x6\_G3x6 & 36 & 10 & 1.50146e-05 & 2.45291e-05 & 2.45291e-05 & 2.45291e-05 & 0.00394207 \\
EE & low\_G1x6\_G3x8 & 48 & 1 & 0.277671 & 0.356018 & 0.356018 & 0.356018 & 0.708455 \\
EE & low\_G1x6\_G3x8 & 48 & 2 & 0.118475 & 0.163926 & 0.163926 & 0.163926 & 0.370361 \\
EE & low\_G1x6\_G3x8 & 48 & 4 & 0.0245462 & 0.0324233 & 0.0324233 & 0.0324233 & 0.166607 \\
EE & low\_G1x6\_G3x8 & 48 & 6 & 0.0051361 & 0.00668702 & 0.00668702 & 0.00668702 & 0.0727947 \\
EE & low\_G1x6\_G3x8 & 48 & 8 & 0.0011296 & 0.001465 & 0.001465 & 0.001465 & 0.0323876 \\
EE & low\_G1x6\_G3x8 & 48 & 10 & 0.000260863 & 0.000339681 & 0.000339681 & 0.000339681 & 0.0145947 \\
EE & low\_G1x8\_G3x6 & 48 & 1 & 0.220784 & 0.316883 & 0.316883 & 0.316883 & 0.73819 \\
EE & low\_G1x8\_G3x6 & 48 & 2 & 0.0626712 & 0.105712 & 0.105712 & 0.105712 & 0.326038 \\
EE & low\_G1x8\_G3x6 & 48 & 4 & 0.00729569 & 0.0119005 & 0.0119005 & 0.0119005 & 0.10679 \\
EE & low\_G1x8\_G3x6 & 48 & 6 & 0.000905373 & 0.00147936 & 0.00147936 & 0.00147936 & 0.03464 \\
EE & low\_G1x8\_G3x6 & 48 & 8 & 0.000115937 & 0.000195154 & 0.000195154 & 0.000195154 & 0.0118454 \\
EE & low\_G1x8\_G3x6 & 48 & 10 & 1.50146e-05 & 2.60351e-05 & 2.60351e-05 & 2.60351e-05 & 0.00414254 \\
EE & low\_G1x8\_G3x8 & 64 & 1 & 0.298969 & 0.37699 & 0.37699 & 0.37699 & 0.762367 \\
EE & low\_G1x8\_G3x8 & 64 & 2 & 0.118804 & 0.170374 & 0.170374 & 0.170374 & 0.392355 \\
EE & low\_G1x8\_G3x8 & 64 & 4 & 0.0245462 & 0.0334725 & 0.0334725 & 0.0334725 & 0.173248 \\
EE & low\_G1x8\_G3x8 & 64 & 6 & 0.0051361 & 0.0069064 & 0.0069064 & 0.0069064 & 0.0753654 \\
EE & low\_G1x8\_G3x8 & 64 & 8 & 0.0011296 & 0.00150722 & 0.00150722 & 0.00150722 & 0.0334368 \\
EE & low\_G1x8\_G3x8 & 64 & 10 & 0.000260863 & 0.000347967 & 0.000347967 & 0.000347967 & 0.0150969 \\
EE & low\_G1x8\_G4 & 16 & 1 & 0.092917 & 0.202081 & 0.202081 & 0.202081 & 0.657414 \\
EE & low\_G1x8\_G4 & 16 & 2 & 0.00280183 & 0.016587 & 0.016587 & 0.016587 & 0.202597 \\
EE & low\_G1x8\_G4 & 16 & 4 & 7.96811e-06 & 0.000114818 & 0.000114818 & 0.000114818 & 0.0165898 \\
EE & low\_G1x8\_G4 & 16 & 6 & 0 & 7.69697e-07 & 7.69697e-07 & 7.69697e-07 & 0.00135682 \\
EE & low\_G1x8\_G4 & 16 & 8 & 0 & 0 & 0 & 0 & 0.000114836 \\
EE & low\_G1x8\_G4 & 16 & 10 & 0 & 0 & 0 & 0 & 9.787e-06 \\
EE & low\_G3x10\_G4 & 20 & 1 & 0.316516 & 0.366136 & 0.366136 & 0.366136 & 0.551355 \\
EE & low\_G3x10\_G4 & 20 & 2 & 0.154556 & 0.206806 & 0.206806 & 0.206806 & 0.377114 \\
EE & low\_G3x10\_G4 & 20 & 4 & 0.0502327 & 0.0679573 & 0.0679573 & 0.0679573 & 0.209288 \\
EE & low\_G3x10\_G4 & 20 & 6 & 0.0165741 & 0.0222566 & 0.0222566 & 0.0222566 & 0.119787 \\
EE & low\_G3x10\_G4 & 20 & 8 & 0.0055936 & 0.00752682 & 0.00752682 & 0.00752682 & 0.0680638 \\
EE & low\_G3x10\_G4 & 20 & 10 & 0.00192587 & 0.00260503 & 0.00260503 & 0.00260503 & 0.0388068 \\
EE & low\_G3x8\_G4 & 16 & 1 & 0.309132 & 0.36704 & 0.36704 & 0.36704 & 0.570222 \\
EE & low\_G3x8\_G4 & 16 & 2 & 0.124364 & 0.183445 & 0.183445 & 0.183445 & 0.379109 \\
EE & low\_G3x8\_G4 & 16 & 4 & 0.0245654 & 0.038891 & 0.038891 & 0.038891 & 0.186297 \\
EE & low\_G3x8\_G4 & 16 & 6 & 0.0051361 & 0.0083776 & 0.0083776 & 0.0083776 & 0.0849061 \\
EE & low\_G3x8\_G4 & 16 & 8 & 0.0011296 & 0.0018875 & 0.0018875 & 0.0018875 & 0.0388567 \\
EE & low\_G3x8\_G4 & 16 & 10 & 0.000260863 & 0.000448751 & 0.000448751 & 0.000448751 & 0.0179225 \\
EE & mixed\_G1x8\_G2 & 16 & 1 & 0.0817471 & 0.138105 & 0.167793 & 0.138105 & 0.468422 \\
EE & mixed\_G1x8\_G2 & 16 & 2 & 0.00468396 & 0.0182776 & 0.0579595 & 0.0182776 & 0.138339 \\
EE & mixed\_G1x8\_G2 & 16 & 4 & 6.32655e-05 & 0.000345872 & 0.00739382 & 0.000345872 & 0.0182795 \\
EE & mixed\_G1x8\_G2 & 16 & 6 & 9.09188e-07 & 6.9378e-06 & 0.000910551 & 6.9378e-06 & 0.00248334 \\
EE & mixed\_G1x8\_G2 & 16 & 8 & 0 & 1.02146e-07 & 0.000112548 & 1.02146e-07 & 0.000345908 \\
EE & mixed\_G1x8\_G2 & 16 & 10 & 0 & 0 & 1.40156e-05 & 0 & 4.88792e-05 \\
EE & mixed\_G3x8\_G2 & 16 & 1 & 0.294802 & 0.329075 & 0.365436 & 0.329075 & 0.462846 \\
EE & mixed\_G3x8\_G2 & 16 & 2 & 0.130662 & 0.177671 & 0.223693 & 0.177671 & 0.33796 \\
EE & mixed\_G3x8\_G2 & 16 & 4 & 0.0247255 & 0.0410389 & 0.0682683 & 0.0410389 & 0.180006 \\
EE & mixed\_G3x8\_G2 & 16 & 6 & 0.00513823 & 0.00934842 & 0.0196278 & 0.00934842 & 0.0863225 \\
EE & mixed\_G3x8\_G2 & 16 & 8 & 0.0011296 & 0.0022036 & 0.00565203 & 0.0022036 & 0.0410047 \\
EE & mixed\_G3x8\_G2 & 16 & 10 & 0.000260863 & 0.000543746 & 0.0016561 & 0.000543746 & 0.0195018 \\
FE & high\_G2\_G6x8 & 48 & 1 & 2.77678 & 2.79918 & 3.32556 & 3.32556 & 3.32556 \\
FE & high\_G2\_G6x8 & 48 & 2 & 1.92479 & 2.04303 & 2.84918 & 2.84918 & 2.84918 \\
FE & high\_G2\_G6x8 & 48 & 4 & 0.817659 & 1.11216 & 2.08391 & 2.08391 & 2.08391 \\
FE & high\_G2\_G6x8 & 48 & 6 & 0.411897 & 0.623618 & 1.5231 & 1.5231 & 1.5231 \\
FE & high\_G2\_G6x8 & 48 & 8 & 0.235527 & 0.361648 & 1.12074 & 1.12074 & 1.12074 \\
FE & high\_G2\_G6x8 & 48 & 10 & 0.140581 & 0.214922 & 0.832826 & 0.832826 & 0.832826 \\
FE & high\_G4x8\_G5 & 56 & 1 & 2.86692 & 2.89298 & 3.4214 & 3.4214 & 3.4214 \\
FE & high\_G4x8\_G5 & 56 & 2 & 1.94194 & 2.0641 & 2.9521 & 2.9521 & 2.9521 \\
FE & high\_G4x8\_G5 & 56 & 4 & 0.634999 & 0.94792 & 2.0949 & 2.0949 & 2.0949 \\
FE & high\_G4x8\_G5 & 56 & 6 & 0.180923 & 0.41105 & 1.42991 & 1.42991 & 1.42991 \\
FE & high\_G4x8\_G5 & 56 & 8 & 0.0552049 & 0.17932 & 0.953027 & 0.953027 & 0.953027 \\
FE & high\_G4x8\_G5 & 56 & 10 & 0.0188139 & 0.081004 & 0.627502 & 0.627502 & 0.627502 \\
FE & high\_G4x8\_G6x8 & 64 & 1 & 3.03809 & 3.05589 & 3.5143 & 3.5143 & 3.5143 \\
FE & high\_G4x8\_G6x8 & 64 & 2 & 2.27535 & 2.35287 & 3.11682 & 3.11682 & 3.11682 \\
FE & high\_G4x8\_G6x8 & 64 & 4 & 1.10473 & 1.37614 & 2.40372 & 2.40372 & 2.40372 \\
FE & high\_G4x8\_G6x8 & 64 & 6 & 0.539209 & 0.814286 & 1.82829 & 1.82829 & 1.82829 \\
FE & high\_G4x8\_G6x8 & 64 & 8 & 0.286479 & 0.494755 & 1.38973 & 1.38973 & 1.38973 \\
FE & high\_G4x8\_G6x8 & 64 & 10 & 0.161332 & 0.308131 & 1.06179 & 1.06179 & 1.06179 \\
FE & high\_G5\_G6x8 & 56 & 1 & 3.09752 & 3.11318 & 3.53274 & 3.53274 & 3.53274 \\
FE & high\_G5\_G6x8 & 56 & 2 & 2.38288 & 2.45012 & 3.18036 & 3.18036 & 3.18036 \\
FE & high\_G5\_G6x8 & 56 & 4 & 1.1921 & 1.45392 & 2.49944 & 2.49944 & 2.49944 \\
FE & high\_G5\_G6x8 & 56 & 6 & 0.5509 & 0.851935 & 1.92194 & 1.92194 & 1.92194 \\
FE & high\_G5\_G6x8 & 56 & 8 & 0.275848 & 0.506021 & 1.46626 & 1.46626 & 1.46626 \\
FE & high\_G5\_G6x8 & 56 & 10 & 0.151362 & 0.305936 & 1.11792 & 1.11792 & 1.11792 \\
FE & mixed\_G1x8\_G2 & 48 & 1 & 1.62556 & 1.9181 & 2.1062 & 2.1062 & 4.56272 \\
FE & mixed\_G1x8\_G2 & 48 & 2 & 0.411006 & 0.696663 & 1.24654 & 1.24654 & 1.95121 \\
FE & mixed\_G1x8\_G2 & 48 & 4 & 0.0540799 & 0.0925045 & 0.49118 & 0.49118 & 0.699935 \\
FE & mixed\_G1x8\_G2 & 48 & 6 & 0.00666839 & 0.0117406 & 0.181492 & 0.181492 & 0.256896 \\
FE & mixed\_G1x8\_G2 & 48 & 8 & 0.000812872 & 0.00150988 & 0.0648236 & 0.0648236 & 0.0925124 \\
FE & mixed\_G1x8\_G2 & 48 & 10 & 9.89833e-05 & 0.000193345 & 0.022835 & 0.022835 & 0.0329144 \\
FE & mixed\_G1x8\_G5 & 56 & 1 & 1.93066 & 2.09642 & 2.19103 & 2.19103 & 4.62455 \\
FE & mixed\_G1x8\_G5 & 56 & 2 & 0.817003 & 1.10211 & 1.54478 & 1.54478 & 2.12545 \\
FE & mixed\_G1x8\_G5 & 56 & 4 & 0.256688 & 0.334383 & 0.907525 & 0.907525 & 1.11443 \\
FE & mixed\_G1x8\_G5 & 56 & 6 & 0.0725706 & 0.0939389 & 0.514173 & 0.514173 & 0.617409 \\
FE & mixed\_G1x8\_G5 & 56 & 8 & 0.0197818 & 0.0256005 & 0.280979 & 0.280979 & 0.33475 \\
FE & mixed\_G1x8\_G5 & 56 & 10 & 0.00533744 & 0.00695556 & 0.149862 & 0.149862 & 0.17816 \\
FE & mixed\_G1x8\_G6x8 & 64 & 1 & 2.10885 & 2.19649 & 2.24332 & 2.24332 & 4.66762 \\
FE & mixed\_G1x8\_G6x8 & 64 & 2 & 1.19292 & 1.4232 & 1.74211 & 1.74211 & 2.21818 \\
FE & mixed\_G1x8\_G6x8 & 64 & 4 & 0.678469 & 0.764717 & 1.28267 & 1.28267 & 1.45496 \\
FE & mixed\_G1x8\_G6x8 & 64 & 6 & 0.393532 & 0.432154 & 0.950758 & 0.950758 & 1.04714 \\
FE & mixed\_G1x8\_G6x8 & 64 & 8 & 0.233329 & 0.252834 & 0.709684 & 0.709684 & 0.769563 \\
FE & mixed\_G1x8\_G6x8 & 64 & 10 & 0.140326 & 0.150722 & 0.534855 & 0.534855 & 0.573766 \\
FE & mixed\_G3x8\_G2 & 48 & 1 & 1.50793 & 1.81243 & 2.03534 & 2.03534 & 3.72856 \\
FE & mixed\_G3x8\_G2 & 48 & 2 & 0.414653 & 0.691494 & 1.24352 & 1.24352 & 1.84411 \\
FE & mixed\_G3x8\_G2 & 48 & 4 & 0.0540832 & 0.0947303 & 0.494173 & 0.494173 & 0.694493 \\
FE & mixed\_G3x8\_G2 & 48 & 6 & 0.00666839 & 0.0123207 & 0.18351 & 0.18351 & 0.259795 \\
FE & mixed\_G3x8\_G2 & 48 & 8 & 0.000812872 & 0.00159902 & 0.0658425 & 0.0658425 & 0.0947395 \\
FE & mixed\_G3x8\_G2 & 48 & 10 & 9.89833e-05 & 0.000211154 & 0.0232766 & 0.0232766 & 0.0341829 \\
FE & mixed\_G3x8\_G5 & 56 & 1 & 1.83364 & 2.01278 & 2.13048 & 2.13048 & 3.79465 \\
FE & mixed\_G3x8\_G5 & 56 & 2 & 0.820996 & 1.09798 & 1.54476 & 1.54476 & 2.04423 \\
FE & mixed\_G3x8\_G5 & 56 & 4 & 0.256693 & 0.337376 & 0.910518 & 0.910518 & 1.11002 \\
FE & mixed\_G3x8\_G5 & 56 & 6 & 0.0725706 & 0.0953202 & 0.516874 & 0.516874 & 0.619651 \\
FE & mixed\_G3x8\_G5 & 56 & 8 & 0.0197818 & 0.0262255 & 0.282898 & 0.282898 & 0.337743 \\
FE & mixed\_G3x8\_G5 & 56 & 10 & 0.00533744 & 0.00718156 & 0.150977 & 0.150977 & 0.180476 \\
FE & native\_G2\_G5 & 42 & 1 & 2.61315 & 2.64908 & 3.24374 & 3.24374 & 3.24374 \\
FE & native\_G2\_G5 & 42 & 2 & 1.57406 & 1.75803 & 2.70169 & 2.70169 & 2.70169 \\
FE & native\_G2\_G5 & 42 & 4 & 0.380331 & 0.680486 & 1.77834 & 1.77834 & 1.77834 \\
FE & native\_G2\_G5 & 42 & 6 & 0.0862312 & 0.237498 & 1.11945 & 1.11945 & 1.11945 \\
FE & native\_G2\_G5 & 42 & 8 & 0.0212049 & 0.0795036 & 0.68175 & 0.68175 & 0.68175 \\
FE & native\_G2\_G5 & 42 & 10 & 0.00548747 & 0.0262315 & 0.405587 & 0.405587 & 0.405587 \\
\end{longtable}}\endgroup

\DEL{
\subsection{Relationship between group-wise composition under $\ell_p$ record metric and sequential composition for multi-attribute LDP} 
\label{subsec:decomp_relationship}
Consider $\mathcal X \subseteq \mathcal A^N$ equipped with the Hamming distance
\begin{equation}
d_H(\mathbf{x}_a,\mathbf{x}_b) \triangleq \sum_{\ell=1}^N \mathbf 1[x_{a,\ell} \neq x_{b,\ell}].
\end{equation}
A mechanism $\mathcal M$ satisfies \emph{multi-attribute $\epsilon$-LDP}~\cite{Arcolezi-VLDB2023} if, for all
$\mathbf{x}_a,\mathbf{x}_b \in \mathcal X$ and $\mathbf y \in \mathcal Y$,
\begin{equation}
\Pr[\mathcal M(\mathbf{x}_a)=\mathbf y]
\le 
e^{\epsilon d_H(\mathbf{x}_a,\mathbf{x}_b)}
\Pr[\mathcal M(\mathbf{x}_b)=\mathbf y].
\end{equation}
This definition coincides with $(\epsilon,d_H)$-mDP, illustrating that multi-attribute LDP can be viewed as a special case of mDP under the Hamming metric.

\smallskip
\noindent\textbf{Relationship to $\ell_p$-metric mDP.}
While multi-attribute LDP and $\ell_p$-mDP are conceptually different, LDP enforces uniform indistinguishability across all distinct records, whereas mDP scales privacy loss with a metric distance, they are formally connected through a limiting argument. In particular, multi-attribute LDP arises as a limiting case of $\ell_p$-mDP when $p \to \infty$ on binary (or suitably rescaled bounded) domains.

Assume $\mathcal X \subseteq \{0,1\}^N$ (or, more generally, that each coordinate difference is at most $1$ after normalization). The $\ell_p$ distance between two records is
\begin{equation}
d_p(\mathbf x_a,\mathbf x_b)
= \|\mathbf x_a - \mathbf x_b\|_p
= \left( \sum_{\ell=1}^N |x_{a,\ell} - x_{b,\ell}|^p \right)^{1/p}.
\end{equation}
Taking the limit $p \to \infty$, this distance converges to
\begin{equation}
\lim_{p \to \infty} d_p(\mathbf x_a,\mathbf x_b)
= \|\mathbf x_a - \mathbf x_b\|_\infty
= \max_\ell |x_{a,\ell} - x_{b,\ell}|
=
\begin{cases}
1, & \mathbf x_a \neq \mathbf x_b,\\
0, & \mathbf x_a = \mathbf x_b.
\end{cases}
\end{equation}
At the coordinate level, this corresponds to the binary distance
\begin{equation}
d_1(x_{a,\ell},x_{b,\ell}) = |x_{a,\ell} - x_{b,\ell}|
=
\begin{cases}
1, & x_{a,\ell} \neq x_{b,\ell},\\
0, & x_{a,\ell} = x_{b,\ell}.
\end{cases}
\end{equation}

Under this limiting metric, the mDP guarantee
\begin{equation}
\Pr[\mathcal M(\mathbf x_a)=\mathbf y]
\le e^{\epsilon d_p(\mathbf x_a,\mathbf x_b)}
\Pr[\mathcal M(\mathbf x_b)=\mathbf y]
\end{equation}
reduces to the standard LDP bound
\begin{equation}
\Pr[\mathcal M(\mathbf x_a)=\mathbf y]
\le e^{\epsilon}
\Pr[\mathcal M(\mathbf x_b)=\mathbf y],
\end{equation}
whenever $\mathbf x_a \neq \mathbf x_b$ (and equality when $\mathbf x_a = \mathbf x_b$).

\smallskip
\noindent\textbf{Composition.}
A similar correspondence holds for privacy accounting. As $p \to \infty$, the dual exponent $p/(p-1)$ in the $\ell_p$-based group composition rule converges to $1$. Consequently,
\begin{equation}
\sum_{\ell=1}^N \epsilon_\ell^{\frac{p}{p-1}} \le \epsilon^{\frac{p}{p-1}}
\quad \xrightarrow[p\to\infty]{}\quad
\sum_{\ell=1}^N \epsilon_\ell \le \epsilon,
\end{equation}
which recovers the familiar sequential composition rule for multi-attribute LDP.

\smallskip
\noindent
In summary, while multi-attribute LDP and $\ell_p$-mDP differ fundamentally in how privacy degrades with record similarity, $\ell_p$-metric mDP subsumes multi-attribute LDP as a limiting case when $p \to \infty$ on binary (or Hamming-type) domains. This perspective clarifies both their conceptual distinction and their formal connection.
}

\DEL{
\subsection{Inner Mechanism Design via Benders Decomposition}
\paragraph{Stage 1: Master program (MP).} For each $\ell \in \mathcal{Q}_{\mathrm{OPT}}$, we replace the loss $\mathcal{L}_k(\epsilon_\ell, \mathbf{z}_{\mathcal{M}_{k}})$ by $v_\ell$
\begin{eqnarray}
\min && {\bl \textstyle \sum_{l\in \mathcal{Q}_{\mathrm{DPN}}}\mathcal{L}_k(\epsilon_\ell)} + {\rd \textstyle \sum_{l\in \mathcal{Q}_{\mathrm{OPT}}}v_\ell} \\
\mathrm{s.t.} && \sum_{\ell=1}^{L} \epsilon^2_\ell \leq  \epsilon^2, ~
 \epsilon_\ell \geq 0 \\
&& [\boldsymbol{\epsilon},\mathbf{v}] \in \mathcal{H}_\ell
\end{eqnarray}
where $\mathcal{H}_\ell$ is a cut set of $[\boldsymbol{\epsilon},\mathbf{v}]$ which is initiated by empty and updated by the cuts suggested by the subproblems. 

\paragraph{Stage 2: Subproblems.} After MP derives the optimal $\overline{\boldsymbol\epsilon} = \overline{\epsilon}_1, ..., \overline{\epsilon}_\ell$ and $\overline{\mathbf{v}} = \overline{v}_1, ..., \overline{v}_\ell$, each $\mathrm{Sub}_\ell$ validates whether $\overline{v}_\ell$ has achieved the minimum data utility loss, 
\begin{equation}
\label{eq:v_ell_primal}
\overline{v}_\ell = \min_{\mathbf{z}_{\mathcal{M}_{k}} \geq \mathbf{0}} \left\{\mathbf{c}_{\mathcal{M}_{k}} \mathbf{z}_{\mathcal{M}_{k}}\left|\mathbf{A}_{\mathcal{M}_{k}}  \mathbf{z}_{\mathcal{M}_{k}}  \geq \mathbf{b}_{\mathcal{M}_{k}}\right.\right\},  
\end{equation} 
of which the \emph{dual problem} can be formulated as the following LP problem: 
\normalsize
\begin{eqnarray}
\label{eq:dualobj}
\max && \mathbf{b}_{\mathcal{M}_{k}}^{\top} \mathbf{w}_{\mathcal{M}_{k}} \\ \label{eq:dualconstr1}
\mathrm{s.t.} && \mathbf{A}_{\mathcal{M}_{k}}^{\top} \mathbf{w}_{\mathcal{M}_{k}} \leq \mathbf{c}_{\mathcal{M}_{k}},~ \mathbf{w}_{\mathcal{M}_{k}} \geq \mathbf{0}. 
\end{eqnarray}
\normalsize
There are three cases of the dual problem: 

\paragraph{Case 1 - The optimal objective value $\mathbf{b}_{\mathcal{M}_{k}}^{\top} \mathbf{w}_{\mathcal{M}_{k}}$ is unbounded:} In this case, for any $\mathbf{z}_{\mathcal{M}_{k}} \geq \mathbf{0}$, $\mathbf{A}_{\mathcal{M}_{k}}  \mathbf{z}_{\mathcal{M}_{k}}  \geq \mathbf{b}_{\mathcal{M}_{k}}$ cannot be satisfied due to \emph{weak duality} \cite{Luenberger2008Linear}. 

Since the dual problem is unbounded, there exists an \emph{extreme ray} $\tilde{\mathbf{w}}_{\mathcal{M}_{k}}$ s.t. 
$\mathbf{A}_{\mathcal{M}_{k}}^{\top} \tilde{\mathbf{w}}_{\mathcal{M}_{k}} \leq \mathbf{0}$ and $\mathbf{b}_{\mathcal{M}_{k}}^{\top} \tilde{\mathbf{w}}_{\mathcal{M}_{k}} > 0$. To ensure $\tilde{\mathbf{w}}_{\mathcal{M}_{k}}$ not to be an extreme ray in the next iteration, $\mathrm{Sub}_\ell$ suggests a \emph{new cut} $h$ (\emph{feasibility cut}) to the MP: 
\begin{equation}
\small h:~ \mathbf{A}_{\mathcal{M}_{k}}^{\top} \tilde{\mathbf{w}}_{\mathcal{M}_{k}} > \mathbf{0}.
\end{equation}

\paragraph{Case 2 - The optimal objective value $\mathbf{b}_{\mathcal{M}_{k}}^{\top} \mathbf{w}_{\mathcal{M}_{k}}$ is bounded with the solution $\overline{\mathbf{w}}_{\mathcal{X}_\ell}$:}  By \emph{weak duality}, the optimal value of the dual problem is equal to the optimal value of $v_\ell$ constrained on the choice of $\overline{\mathbf{z}}_{\mathcal{Y}_\ell}$. In this case, $\mathrm{Sub}_\ell$ checks whether $\overline{v}_\ell < \mathbf{b}_{\mathcal{M}_{k}}^{\top} \overline{\mathbf{w}}_{\mathcal{X}_\ell}$. If yes, then
\begin{equation}
\overline{v}_\ell < \min_{\mathbf{z}_{\mathcal{M}_{k}} \geq \mathbf{0}} \left\{\mathbf{c}_{\mathcal{M}_{k}} \mathbf{z}_{\mathcal{M}_{k}}\left|\mathbf{A}_{\mathcal{M}_{k}}  \mathbf{z}_{\mathcal{M}_{k}}  \geq \mathbf{b}_{\mathcal{M}_{k}}\right.\right\}, 
\end{equation} 
indicating that $\overline{v}_\ell$ guessed by the MP is lower than the minimum data utility loss. Therefore, $\mathrm{Sub}_\ell$ suggests a \emph{new cut} 
\begin{equation}
h: \mathbf{A}_{\mathcal{M}_{k}}^{\top} \overline{\mathbf{w}}_{\mathcal{M}_{k}} > \mathbf{c}_{\mathcal{M}_{k}}     
\end{equation} 
or
\begin{equation}
h: v_\ell \geq \mathbf{b}_{\mathcal{M}_{k}}^{\top} \overline{\mathbf{w}}_{\mathcal{M}_{k}}     
\end{equation} 
to the MP to improve $v_\ell$ in the next iteration. 
\paragraph{Case 3 - No feasible solution exists:} By \emph{weak duality}, the primal problem either has no feasible/unbounded solution. The algorithm terminates.

After adding the new cuts (from all the subproblems) to the cut set $\mathcal{H}_\ell$, the BD moves to the next iteration by recalculating the MP and obtaining updated $[\boldsymbol{\epsilon},\mathbf{v}]$. As \textbf{Stage 1} and \textbf{Stage 2} are repeated over iterations, the MP collects more cuts from the subproblems, 
converging the solution $[\boldsymbol{\epsilon},\mathbf{v}]$ to the optimal.}

\end{document}